\let\cite\autocite
\DeclareMathOperator{\Wg}{Wg}
\DeclareMathOperator{\sym}{sym}
\DeclareMathOperator{\dia}{dia}
\DeclareMathOperator{\Comm}{Comm}
\DeclareMathOperator{\Or}{\mathcal{O}}
\DeclareMathOperator{\G}{G}
\DeclareMathOperator{\SbPD}{\widehat{\Pi}_\text{dist}}
\newcommand{\bs}{\boldsymbol}
\newcommand{\PD}{\Pi_{\text{dist}}}
\newcommand{\bPD}{\bs{\Pi}_{\text{dist}}}
\newcommand{\dens}[1]{\ensuremath{\ket{#1}\!\!\bra{#1}}}
\newcommand{\ct}{^\dagger}
\newtheorem{theorem}{Theorem}
\newtheorem*{theorem*}{Theorem}
\newtheorem{corollary}[theorem]{Corollary}
\newtheorem{lemma}[theorem]{Lemma}
\newtheorem{definition}[theorem]{Definition}
\theoremstyle{definition}
\newcommand{\fro}{\mathrm{F}}
\DeclareMathOperator{\poly}{poly}
\DeclareMathOperator{\polylog}{poly\,log}
\DeclareMathOperator{\U}{U}
\renewcommand{\O}{\operatorname{O}}
\DeclareMathOperator{\SO}{SO}
\DeclareMathOperator{\USp}{USp}
\DeclareMathOperator{\Sp}{\USp} 
\DeclareMathOperator{\Cl}{Cl}
\DeclareMathOperator{\Pa}{P}
\DeclareMathOperator{\M}{M}
\newcommand{\EE}{\mathbb{E}}
\newcommand{\Z}{\mathbb{Z}}
\newcommand{\C}{\mathbb{C}}
\newcommand{\F}{\mathbb{F}}
\newcommand{\E}{\mathbb{E}}
\newcommand{\one}{\mathbbm{1}}
\newcommand{\mc}[1]{\mathcal{#1}}
\newcommand{\mcE}{\mc{E}}
\newcommand{\mcS}{\mc{S}}
\newcommand{\mcH}{\mc{H}}
\renewcommand{\H}{\mcH}
\newcommand{\myleft}{\mathopen{}\mathclose\bgroup\left}
\newcommand{\myright}{\aftergroup\egroup\right}
\DeclarePairedDelimiterX{\iiiNorm}[1]{\lvert}{\rvert}{%
  \delimsize\lvert\delimsize\lvert#1\delimsize\rvert\delimsize\rvert%
}
\DeclarePairedDelimiterXPP\snorm[1]{}\lVert\rVert{_\infty}{\ifblank{#1}{\,\cdot\,}{#1}}   
\DeclarePairedDelimiterXPP\twonorm[1]{}\lVert\rVert{_2}{\ifblank{#1}{\,\cdot\,}{#1}}   
\DeclarePairedDelimiterXPP\trnorm[1]{}\lVert\rVert{_1}{\ifblank{#1}{\,\cdot\,}{#1}}   
\DeclarePairedDelimiterXPP\fnorm[1]{}\lVert\rVert{_{\fro}}{\ifblank{#1}{\,\cdot\,}{#1}}   
\DeclarePairedDelimiterXPP\dnorm[1]{}\lVert\rVert{_\diamond}{\ifblank{#1}{\,\cdot\,}{#1}}   
\DeclarePairedDelimiterXPP\cbnorm[1]{}\lVert\rVert{_\mathrm{cb}}{\ifblank{#1}{\,\cdot\,}{#1}}   
\DeclarePairedDelimiterXPP\onenorm[1]{}\lVert\rVert{_{1\rightarrow 1}}{\ifblank{#1}{\,\cdot\,}{#1}}   
\DeclarePairedDelimiterXPP\ddnorm[1]{}\lVert\rVert{_{\diamond\rightarrow \diamond}}{\ifblank{#1}{\,\cdot\,}{#1}}   
\DeclarePairedDelimiterXPP\ssnorm[1]{}\lVert\rVert{_{\infty\rightarrow\infty}}{\ifblank{#1}{\,\cdot\,}{#1}}   
\DeclarePairedDelimiterX\Set[1]\{\}{%
  
  #1
}
\DeclarePairedDelimiterX\innerp[2]{\langle}{\rangle}{%
  \ifblank{#1}{\,\cdot\,}{#1} , \ifblank{#2}{\,\cdot\,}{#2}%
}
\DeclarePairedDelimiterX\sandwich[3]{\langle}{\rangle}%
  {#1\,\delimsize\vert\kern0.15ex\mathopen{}#2\kern0.15ex\delimsize\vert\kern0.15ex\mathopen{}#3}
\DeclarePairedDelimiterX\obraket[2]{(}{)}%
  {#1\kern0.15ex\delimsize\vert\kern0.15ex\mathopen{}#2}
\DeclarePairedDelimiterX\oketbra[2]{\vert}{\vert}%
  {#1\kern0.15ex\delimsize)\delimsize(\kern0.15ex\mathopen{}#2}
\DeclarePairedDelimiterX\osandwich[3]{(}{)}%
  {#1\,\delimsize\vert\kern0.15ex\mathopen{}#2\kern0.15ex\delimsize\vert\kern0.15ex\mathopen{}#3}
\newcommand{\tnd}[1]{^{\dagger, \otimes #1}}
\newcommand{\tnt}[1]{^{T, \otimes #1}}
\newcommand{\tn}[1]{^{\otimes #1}}
\begin{document}

\title{Will it glue? On short-depth designs beyond the unitary group}
\author[1]{Lorenzo Grevink}
\email{lorenzo.grevink@cwi.nl}
\author[2]{Jonas Haferkamp}
\author[3]{Markus Heinrich}
\author[1]{Jonas Helsen}
\author[4]{Marcel Hinsche}
\author[5]{Thomas Schuster}
\author[6]{Zoltán Zimborás}
\affil[1]{QuSoft and CWI, Amsterdam, The Netherlands}
\affil[2]{Saarland University, Saarbrücken, Germany}
\affil[3]{Institute for Theoretical Physics, University of Cologne, Cologne, Germany}
\affil[4]{Dahlem Center for Complex Quantum Systems, Freie Universität Berlin, Berlin, Germany}
\affil[5]{Walter Burke Institute for Theoretical Physics and Institute for Quantum Information and Matter,
California Institute of Technology, Pasadena, USA}
\affil[6]{QTF Centre of Excellence, Department of Physics,
University of Helsinki, P.O. Box 43, FI-00014 Helsinki, Finland}

\maketitle

\begin{abstract}
\noindent We study the formation of short-depth designs beyond the unitary group. 
    We provide a range of results on several groups of broad interest in quantum information science: the Clifford group, the orthogonal group, the unitary symplectic groups, and the matchgate group.
    For all of these groups, we prove that analogues of unitary designs cannot be generated by any circuit ensemble with light-cones that are smaller than the system size. %
    This implies linear lower bounds on the circuit depth in one-dimensional systems.
    For the {Clifford, orthogonal, and unitary symplectic group, we moreover show that commonly considered circuit ensembles cannot generate designs in sub-linear depth on any circuit architecture.}
    We show this by exploiting  observables in the higher-order commutants of each group, which allow one to distinguish any short-depth circuit from truly random.
    While these no-go results rule out short-depth designs over these subgroups, we prove that slightly weaker forms of randomness---including additive-error state designs and anti-concentration in sampling distributions---nevertheless emerge  at logarithmic depths in many cases. %
    Our results reveal that the onset of randomness in shallow quantum circuits is a widespread yet subtle phenomenon, dependent on the interplay between the group itself and the context of its application.
\end{abstract}

\section{Introduction}
Random quantum operations are a central tool in quantum information theory and beyond. Applications include a wide range of research fields such as 
quantum system characterization~\cite{
emerson_scalable_2005,helsen_general_2022,heinrich_randomized_2023}
,
quantum learning~\cite{huang_predicting_2020}
, 
quantum supremacy experiments~~\cite{arute2019quantum}, 
quantum machine learning~\cite{mcclean2018barren,larocca2025barren}
, 
quantum cryptography~\cite{kretschmer2021quantum,kretschmer2023quantum}
,
quantum many-body physics \cite{adam_operator_2018,fisher_random_2023}
,
and quantum gravity~\cite{hayden2007black,yoshida2017efficient,brown2018second}
. 

A natural, and the most commonly used, candidate for a random class of quantum operations are the Haar-random unitaries given by the unique uniform distribution on the unitary group.
While easy to work with in theory, the Haar measure is often far from practical:
Standard counting arguments show that exponential circuit depth is required to implement a Haar random unitary \cite{knill1995approximation}.
A second, independent objection to Haar-random unitaries is that they require access to a universal gate set, rendering them potentially expensive to implement on early fault-tolerant devices.
Significant effort has thus been spent to find low-depth constructions of random unitaries that appear Haar-random to a restricted observer~\cite{harrow_random_2009, brandao_local_2016, haferkamp_random_2022, chen2024efficient, chen_incompressibility_2024, schuster_random_2025,laracuente2024approximate,ma_how_2024,metger_simple_2024}.
Roughly, these constructions fall into two categories: Approximate unitary $k$-designs~\cite{gross2007evenly} that appear Haar-random to an observer with access to at most $k$ copies of the sampled unitary, and pseudorandom unitaries that appear Haar-random to a computationally bounded observer~\cite{ma_how_2024,ji_pseudorandom_2018}.

While approximate unitary designs and pseudorandom unitaries using circuits of depth linear in the number of qubits have been known for a few years \cite{brandao_local_2016}, sub-linear depth constructions were only found very recently \cite{schuster_random_2025,laracuente2024approximate}.
There, such ensembles are constructed in logarithmic and polylogarithmic depth, respectively, even in a $1\mathrm{D}$ geometry.
The results of Refs.~\cite{schuster_random_2025,laracuente2024approximate} came as a great surprise, because in this regime lightcones are of negligible size compared to the system size and yet the distinguishability to Haar-random unitaries (which generate near-maximally entangled states) decays exponentially.
Indeed, previous methods -- often based on analyzing the spectrum of so-called moment operators -- also apply to classical analogues of random quantum operations such as random reversible circuits.
But mixing in logarithmic depth appears to be a quantum phenomenon and can be proven to be false for classical analogues~\cite{schuster_random_2025}.
What is more, Ref.~\cite{schuster_random_2025} gives an argument that rules out log-depth construction of \textit{orthogonal} designs, i.e.~based on orthogonal matrices.
Not only do classical circuits not mix at logarithmic depth, but even the absence of complex numbers results in a linear lower bound.
Instead, the construction in Ref.~\cite{schuster_random_2025}, ``glues'' designs on local, log-sized patches together to form designs globally.
A dichotomy seems to emerge: Either, the class of operations is sufficiently universal to allow for a gluing construction or a linear lower bound can be proven.

In this paper we delineate the boundaries of this phenomenon and ask generally: \textit{Will it glue?} 
More precisely, we use lightcone arguments to prove strong lower bounds on the circuit depth for \emph{additive-error} group designs over the Clifford group, the orthogonal group, the unitary symplectic group, and the matchgate group. 
These results are particularly strong in one dimension, setting linear lower bounds on the required circuit depth.
In general architectures, these bounds depend on the connectivity in the form of the lightcone of qubits and still constitute non-trivial results.

Intriguingly, we are also able to show linear lower bounds \emph{independent} of the underlying architecture for a vast majority of circuit ensembles considered in the literature such as the popular ``brickwork'' circuits.
We derive these bounds for \emph{relative-error} state designs generated by such circuits which implies analogous circuit lower bounds for relative-error group designs.
These findings indicate that lightcones underestimate the mixing time of non-universal circuits, while Ref.~\cite{schuster_random_2025,laracuente2024approximate} show that they overestimate in the case of universal circuits.
In this sense, lightcone considerations may generally be misleading when studying designs.

However, for the strictly weaker notion of \emph{additive-error} state $k$-designs, we instead give logarithmic depth constructions for the orthogonal and symplectic, as well as the Clifford group (if $k<6$). Moreover, we show that for all of these cases, we have a generalized notion of anti-concentration in logarithmic depth. 
For the matchgate group, we show that linear depth is still necessary for additive-error state designs and we moreover give a (probably suboptimal) $\Omega(n^{1/3})$ depth lower bound for anti-concentration (relative to the corresponding value for the uniform matchgate group average), specifically for a ``gluing'' type construction.

\section{Summary of results}

In this paper, we study four subgroups of the $n$-qubit unitary group, namely the Clifford group $\Cl(n)$, the orthogonal group $\O(2^n)$, the unitary symplectic group $\USp(2^n)$, and the matchgate group $\M(n)$. 
For the four subgroups, we consider different notions of random ensembles, namely approximate group or state designs with additive or relative errors.
Here, all of these are assumed to be ensembles of unitaries from the respective subgroups. However, the local gates from which the unitaries are constructed need not be from the subgroup (cf.~Section \ref{sec:symplectic-group} in the Supplementary Information for examples in the symplectic case).
In addition, we investigate whether local circuits over the mentioned subgroups \emph{anti-concentrate} in logarithmic depth.
To be precise, Haar-random gates from $\Cl(n)$, $\O(2^n)$, and $\USp(2^n)$ do in fact anti-concentrate in the usual sense that $\EE_{U\sim G} |\sandwich{0}{U}{0}|^4 \leq c\, 4^{-n}$ for a constant $c\geq 1$, and thus it is expected that sufficiently deep random circuits do as well.
However, we adopt a slightly generalized notion of anti-concentration in this work: We say that a random circuit over a subgroup $G$ anti-concentrates up to order $k$ if the $G$-Haar value $\EE_{U\sim G} |\sandwich{0}{U}{0}|^{2k}$ is reproduced up to a multiplicative constant.\footnote{These higher moments are also referred to as ``inverse participation ratios'' in the many-body literature.}
This distinction is particularly relevant for the matchgate group which does not anti-concentrate in the conventional sense.

For all of these notions, we either show that they can be constructed in $\Or(\log(n))$ depth, or aim at deriving $\Omega(n)$ lower bounds, where $n$ denotes the number of qubits (we suppress the $k$ and $\epsilon$-dependence).
We obtain linear lower bounds unconditionally for 1D architectures, or for circuits constructed from local random gates (``generalized brickwork circuits'') in any architecture.

\begin{figure}
    \begin{minipage}[t]{0.55\linewidth}
        \vspace*{0mm}
        \includegraphics[width=\linewidth]{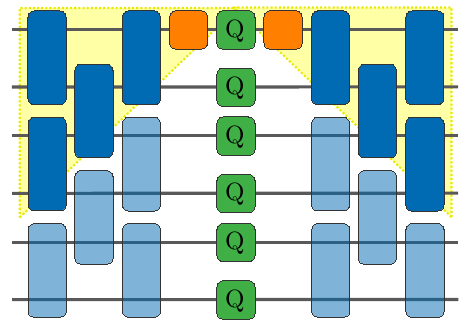}
    \end{minipage}%
    \begin{minipage}[t]{0.45\linewidth}
        \vspace*{0mm}
        \caption{Detection of short-depth local circuits over certain subgroups $G$ by a lightcone argument.
        We construct a factorizing operator $Q\tn{n}$ that commutes with the $k$-copy action of $G$.
        By `hiding' the first factor behind a single-qubit gate, a local circuit will no longer leave the perturbed operator invariant -- instead only the local gates outside the lightcone of the first qubit cancel due to the invariance of the local $Q$'s.
        If the circuit is too short and the lightcone does not contain all qubits, then some of the qubits stay unentangled with the rest. 
        This can be probed by a suitable observable, allowing to separate short-depth circuits from deep circuits and setting a lower bound on the depth of approximate designs.}
        \label{fig:lightcone}
    \end{minipage}
\end{figure}

Our lower bounds come in three different flavors, which all build on the existence of suitable invariances of the tensor power representation of $G$, i.e.~an operator $Q$ that commutes with $U\tn{k}$ for all $U\in G$.
Generalizing an observation in Ref.~\cite{schuster_random_2025}, we first use this to construct lightcone arguments that rule out short-depth circuits, cf.~Fig.~\ref{fig:lightcone}.
The bounds derived in this way scale with $\ell(n)$, i.e.\ the minimal depth of a lightcone that is supported on a constant fraction of all qubits.
They are thus particularly strong in 1D yielding $\Omega(n)$, but still give non-trivial lower bounds in other architectures, for instance $\Omega(n^{1/D})$ in $D$-dimensional lattices and $\Omega(\log n)$ in all-to-all connectivity.
Secondly, the same invariances can also be used to give lower bounds that depend on the \emph{diameter}, the maximal distance between any two qubit in the architecture, instead -- assuming that the ensemble is ``locally independent'', i.e.~gates on qubits that are far away from each other are drawn independently (cf.~Sec.~\ref{sec:methods} for an exact definition).
Finally, we go beyond architecture-dependent bounds by noting that the mentioned invariances allow to relate outcome probabilities to local Pauli expectation values and the latter may need linear depth to converge.
We show such a linear depth lower bound rigorously for the above mentioned generalized brickwork circuits.

Our upper bounds on the other hand all explicitly pass through the gluing lemma of Ref.~\cite{schuster_random_2025} (thus proving that it does ``glue''). We extend this construction by various non-trivial statements of \emph{unitary} $k$-design-ness of the Clifford, orthogonal and symplectic groups when restricting the input states to being positive partial transpose (PPT) on all copies of the state. 

Table~\ref{tab:results-overview} shows an overview of the results of this paper, together with a few  known results from the literature.
We note that we can construct unitary 1-designs in constant depth for every group, so results are only presented for $k\geq 2$.
We now discuss our results for each group individually.

\begin{table}
    \centering
    \begin{tblr}{
        columns={colsep=4pt},
        hline{1,3,Z}={1.2pt},
        hline{2,4-Y}={0.4pt},
        cells={c,m},
        cell{-}{1}={l,m},
        cell{3-6}{5}={r,m},
        cell{3-6}{6}={l,m},
        cell{1}{1}={r=2,c=1}{l,m},
        cell{1}{2}={c=5,r=1}{c,m},
        cell{2,Z}{5}={c=2,r=1}{c,m}
    }
    {Notion of\\ randomness} & Group\\
     & $\U(2^n)$ & $\O(2^n)$ &  $\USp(2^n)$ & $\Cl(n)$ && $\M(n)$ \\
    {additive-error\\ group designs}
        & $\Or(\log(\frac{n}{\epsilon}))$ \cite{schuster_random_2025,laracuente2024approximate}
        & $\Omega(\ell(n))$ 
        & $\Omega(\ell(n))$
        & {$k < 4$:\\ $k\geq 4$:} & {$\Or(\log(\frac{n}{\epsilon}))$  \cite{schuster_random_2025,laracuente2024approximate}\\ $\Omega(\ell(n))$}
        & $\Omega(n)$ \\
    {relative-error\\ state designs}
        & $\Theta(\log(\frac{n}{\epsilon}))$ \cite{schuster_random_2025,laracuente2024approximate}
        & {$\Omega(\dia)^\ast$\\$\Omega(n)^\dagger$}
        & $\Omega(\ell(n))$
        & {$k < 4$:\\$k\geq 4$:} & {$\Theta(\log(\frac{n}{\epsilon}))$  \cite{schuster_random_2025,laracuente2024approximate}\\$\Omega(\dia)^\ast,\Omega(n)^\dagger$}
        & $\Omega(n)$ \\
    {relative-error\\ group designs}
        & $\Theta(\log(\frac{n}{\epsilon}))$ \cite{schuster_random_2025,laracuente2024approximate}
        & all above
        & a.a., $\Omega(n)^\dagger$
        & {$k < 4$:\\$k\geq 4$:} & {$\Theta(\log(\frac{n}{\epsilon}))$  \cite{schuster_random_2025,laracuente2024approximate}\\ all above}
        & $\Omega(n)$ \\
    {additive-error\\ state designs}
        & $\Or(\log(\frac{n}{\epsilon}))$ \cite{schuster_random_2025,laracuente2024approximate}
        & $\Or(\log(\frac{n}{\epsilon}))$ 
        & $\Or(\log(\frac{n}{\epsilon}))$ 
        & $k < 6$: & $\Or(\log(\frac{n}{\epsilon}))$ 
        & $\Omega(n)$ \\
    {anti-\\concentration}
        & $\Theta(\log(n))$ \cite{barak_spoofing_2021,dalzell_random_2022,schuster_random_2025,laracuente2024approximate}
        & $\Or(\log(n))$ \cite{sauliere_universality_2025}
        & $\Or(\log(n))$ 
        & $\Theta(\log(n))$ \cite{magni_anticoncentration_2025,aharonov2023polynomial}
        && $\Omega(\frac{n^{1/3}}{\log(n)})$
    \end{tblr}
    \caption{Overview of circuit depth bounds obtained in this work and in the literature, for various notions of $k$-designs and different subgroups of the unitary group (bounds without references are derived in this work). 
    Here, we always assume $k\geq 2$ and indicate further restrictions on $k$ when necessary, but otherwise suppress the $k$-dependence  in the asymptotic notation.
    The first set of our lower bounds depend on the architecture through $\ell(n)$, the minimal depth of a lightcone that is supported on a constant fraction of all qubits.
    This translates to $\ell(n)=\Theta(n^{1/D})$ on a $D$-dimensional lattice and to $\ell(n)=\Theta(\log n)$ on all-to-all architectures.
    The other lower bounds are derived under mild structural assumption on the used circuit:
    The ones indicated by a $\ast$ depend instead on the diameter $\dia$, the largest distance of two qubits in the architecture, which can yield a different scaling as the lightcone bounds.
    The ones indicated by a $\dagger$ are derived for so-called ``generalized brickwork circuits'' and do no longer depend on the underlying architecture.
    The upper bounds are derived for the two-layer ``superblock'' circuits from Ref.~\cite{schuster_random_2025}.}\label{tab:results-overview}
\end{table}

\paragraph{Clifford group.}
We know that the multi-qubit Clifford group is an exact unitary $3$-design~\cite{zhu_clifford_2016}, thus the results from Refs.~\cite{schuster_random_2025,laracuente2024approximate} carry over to random Clifford circuits for $k\leq 3$. 
It is therefore interesting to see that a clear distinction can be made between Clifford $3$-designs and Clifford $4$-designs in terms of circuit depth.
Indeed, we prove that if the lightcone starting from the first qubit does not contain a constant fraction of all qubits, an ensemble cannot be an (additive and relative-error) Clifford 4-design (Theorem~\ref{theorem: clifford rel unitary} in the Supplementary Information). In the $1\mathrm{D}$ architecture this gives us a lower bound of $\Omega(n)$, which is exponentially worse than for $k=3$. For general architectures (e.g. all-to-all connectivity) this leads to a general lower bound of $\Omega(\log(n))$.

We also prove a lower bound of $\Omega(\dia)$ for relative-error Clifford state 4-designs, where $\dia$ is the architecture's diameter, under the additional assumption that the circuits are \emph{locally independent} (see Theorem~\ref{theorem:clifford_rel_state} in the Supplementary Information).
This is typically the case for the commonly considered random quantum circuits and thus a minor restriction only.
However, for Clifford \emph{group} designs, this assumption is not needed.
We expect that this can be generalized to Clifford state designs and thus the assumption of local independence can be eventually dropped.

Finally, we also prove a strong $\Omega(n)$ lower bound on relative-error state designs generated by generalized brickwork circuits \emph{in any architecture}. Of course, this implies the same lower bound for relative-error group designs for this class of random circuits. 
Again, we expect that the assumptions on the circuit ensemble can be considerably relaxed.

Given these linear lower bounds, it is somewhat surprising that the construction of additive-error Clifford state $4$- and $5$-designs of depth $\Or(\log(n/\epsilon))$ in $1\mathrm{D}$ is possible (Corollary~\ref{cor:clifford add state} in the Supplementary Information).
This result follows from the fact that, for $k\leq 5$, the $k$-copy ensemble of stabilizer states is close in trace distance to Haar-random states with an error $\Or(2^{-n})$ (Lemma~\ref{lem:clifford_state_ppt} in the Supplementary Information).
It is notable that our proof technique fails for $k\geq 6$, as there stabilizer states can be efficiently distinguished from Haar random states through Bell difference sampling. 
After the release of the first version of this work, the existence of $\Or(\log(n))$-depth additive-error Clifford state designs for $k\geq 6$ was proven in Ref.~\cite{zhang_designs_2025}

\paragraph{Orthogonal group.}
A lower bound on the depth of relative-error orthogonal group designs was already known from Ref.~\cite{schuster_random_2025}, based on a similar lightcone argument as we use in this work.
We extend the lower bound $\Omega(\ell(n))$ to an additive-error statement in Theorem~\ref{theorem: orthogonal rel group design} in the Supplementary Information.

We also extend their argument to relative-error orthogonal state designs in Theorem~\ref{theorem:orthogonal_rel_state} in the Supplementary Information, provided the same nuances on local independence as in the Clifford case. We moreover also prove the same strong $\Omega(n)$ depth lower bound for generalized random brickwork circuits in any architecture.

Surprising is the fact that we can construct additive-error state designs in $\Or(\log(n/\epsilon))$ depth, see Corollary~\ref{cor:orth add state} in the Supplementary Information. The theorem states that the unitary and orthogonal Haar twirl are equal on PPT states, up to an additive error exponentially small in $n$, more precisely, with $n$-dependence $\frac{1}{2^{n/2}}$. Note that an additive error of $\epsilon$ implies a relative error of $\epsilon \cdot 4^{nk}$~\cite{brandao2016local}. The exponent of our additive error is thus not small enough to imply a relative error statement.

We also show that random orthogonal circuits anti-concentrate in depth $\Or(\log(n))$ for $k=2$ at the example of the two-layer ``superblock'' circuits from Ref.~\cite{schuster_random_2025}. 
We note that anti-concentration at higher $k$ was shown for orthogonal MPS in Ref.~\cite{sauliere_universality_2025} and we think that the analogous statement for superblock circuits can also be derived along the lines of Ref.~\cite{magni_anticoncentration_2025}.

\paragraph{Symplectic group.}
We prove that the lower bound on the depth of additive-error orthogonal group designs also holds for symplectic group designs. For relative-error state designs we prove a result slightly stronger than for the orthogonal case, i.e.\ we do not need the assumption of locally independence. Also, the result does not depend on the diameter, but on the size of the lightcones. 
We finally prove a linear lower bound on relative-error symplectic \emph{group} designs for generalized random brickwork circuits in any architecture.
Note that we obtained a slightly stronger statement, namely for relative-error state designs, for the orthogonal and Clifford group.

The result on state designs is surprising, as we know that Haar-random ensembles of symplectic and unitary states are indistinguishable~\cite{west_random_2024}, i.e.\ symplectic state designs are the same as unitary state designs. We know that relative-error state designs can be constructed in $\Or(\log(n))$ depth using local unitary gates in the $1\mathrm{D}$ architecture~\cite{schuster_random_2025}. It is therefore an interesting result that relative-error state designs cannot be constructed in $\Or(\log(n))$ depth using local symplectic gates. The indistinguishability between symplectic and unitary state designs combined with the unitary gluing lemma~\cite{schuster_random_2025} does however imply symplectic anti-concentration in $\Or(\log(n))$ depth.

We also prove that additive-error symplectic state designs can be constructed in $\Or(\log(n/\epsilon))$ depth, in a similar way as in the orthogonal case.

\paragraph{Matchgate group.}
Finally we provide lower bounds for the matchgate group. We prove two main results. The first result, given in Theorem~\ref{theorem:matchgate_anti-concentrate} in the Supplementary Information states that the superblock architecture does not attain the minimal anti-concentration value (which is $\Theta(\sqrt{n})$) for any circuit depth that is $o(n^{1/3}/\log(n))$, and thus anti-concentration requires $\Omega(n^{1/3}/\log(n))$ circuit depth. We suspect that this scaling behavior is non-optimal and that a linear depth lower bound should be possible. Interestingly our proof is entirely combinatorial in nature and uses no lightcone arguments. The second result, given in Theorem~\ref{thm: matchgate state designs} in the Supplementary Information states that matchgate relative error state designs are not possible for any depth less than $\Omega(n)$, with the minor technical caveat that the proof only covers errors (in both the relative and additive cases) of size $O(1/n)$. This argument is essentially lightcone based, but is substantially less subtle than similar arguments for other groups as the matchgate group is inherently local with respect to an $1\mathrm{D}$ architecture. 

\paragraph{Outlook.}
Our results immediately impact constructions of efficient randomization protocols over restricted groups, for instance:
\begin{itemize}
    \item The circuit complexity of randomized protocols based on Clifford unitaries under circuit re-use (such as thrifty shadow estimation \cite{helsen_thrifty_2023}).
    \item Our results prevent the existence of short-depth matchgate designs, thus fermionic protocols, such as fermionic shadows, may need much deeper circuits than their qubit counterparts. This answer a question raised in Ref.~\cite{chapman2025fermionic} in the negative (at least for geometrically local circuits).
\end{itemize}
Beyond this, our work raises a number of interesting theoretical questions of which we list several key ones in the following:
\begin{itemize}
    \item Our work raises the question which subgroups of the unitary group could actually support a "gluing lemma". Beyond its strict meaning, gluing-type constructions succeed in many settings as demonstrated in this work. We believe that at least the strong existence question could be rigorously decided, as there are rather few "relevant" groups. Perhaps this could be done following a logic similar to the work of Tiep and Guralnick \cite{guralnick2005decompositions}. However, we suspect that any of these questions are highly nontrivial to answer. 
    \item Beyond the ones considered here, there are natural families of structured circuit distributions that uniformly approximate the Haar measure on the unitary group. Chief among these are so called T-doped Clifford circuits \cite{haferkamp_efficient_2023}. The key open question here is to characterize the density of T gates required to transition from a "lightcone phase" (c.f. the Clifford group) to a "gluing phase" (c.f. the unitary group). We use the language of phase transition here, but it is not clear if this will be a phase transition in the conventional sense.
    \item There is, for a fixed group, also the question which classes of observables (acting on several copies of the unitary) actually witness lightcones. We see that anti-concentration is generally possible in short depth even when designs cannot be formed. Is there a natural class of anti-concentration-like observables? We suspect that the PPT condition we use in several proofs could serve as at least a sufficient condition, but a general statement of that form is unclear.  
\end{itemize}

\paragraph{Related work.}
While finishing this manuscript, we became aware of the work~\cite{west_no-go_2025}, which shares several goals and results with our work.
There, it is shown that $G$-invariant pure states (in the sense that $U^{\otimes k}\ket{\psi} \propto \ket\psi$) can be used to discriminate short-depth circuits from $G$-Haar-random unitaries, which yields a lower bound on the diamond distance between the respective twirls. 
This implies that there cannot be additive-error designs over $G$ in such depth.
Both our work and Ref.~\cite{west_no-go_2025} consider the same set of examples for $G$ and arguments based on lightcones.
However, we consider a more general notion of invariance, namely the existence of factorizing positive-semidefinite operators $Q$ in the $k$-fold commutant (i.e.~$[U^{\otimes k},Q]=0$).
This allows us to already give lower bounds on Clifford 4-designs, while the criterion in Ref.~\cite{west_no-go_2025} only applies to 8-designs.
Moreover, we also rule out relative-error state designs over the considered subgroups, which is not implied by Ref.~\cite{west_no-go_2025}. 
Going beyond lightcone arguments, we prove stronger results for generalized random brickwork circuits which set linear lower bounds on relative designs in any architecture.
Finally, we also construct additive-error state designs over all but the matchgate groups in logarithmic depth, and show that the orthogonal, unitary symplectic, and matchgate group anti-concentrate in logarithmic depth (in the generalized sense).

\section{Methods}
\label{sec:methods}

In the following, we will give an overview of the methods used in this work and sketch the arguments used to derive our results. 
We refer the reader to the Supplementary Information for the exact statements of our results and detailed proofs.
We will begin by introducing some notation and the various notions of randomness we use in our results.

\subsection{Notation and concepts}\label{section: preliminaries}

Let $G$ be a (compact) subgroup of the unitary group $\U(2^n)$ on the $n$-qubit Hilbert space $\H = (\C^2)^{\otimes n}$ (in this work, we focus on $n$-qubit systems, although some of our results can be readily generalized to qudit systems).
The Haar-measure on $G$ should be thought of as the ``most uniform'' probability measure on $G$ -- Formally, it is the unique probability measure that is both left and right invariant under multiplication with elements of $G$.
We write $\E_{U \sim G}$ for the expectation value over the Haar measure on $G$. 

We often call general probability distributions over $G$ \emph{ensembles} and typically denote them by $\mcE$.
These will always be implemented by quantum circuits, so we will also sometimes speak of circuit ensembles. 
To quantify how close an ensemble $\mcE$ is to the Haar measure, we consider \emph{moments} of the distribution, given as
\begin{equation}
\label{eq:moment}
  \E_{U \sim \mathcal{E}} \tr ( B U\tn{k} A U\tnd{k} ) \,,
\end{equation}
for suitably sized matrices $A$ and $B$. 
If all $k$-th moments of $\mcE$ agree with those of the Haar measure, we call $\mcE$ a \emph{unitary $k$-design over the group $G$}.
We note that if $A$ (or $B$) commute with $U^{\otimes k}$ for all $U\in G$ (we say they are in the $k$-th commutant of $G$), then the moments trivially agree.

We typically only require that moments are reproduced up to a (small) error -- we then call $\mcE$ an \emph{approximate $k$-design}.
However, there are different notions of approximation, which are not equivalent in their resource demands.
\emph{Additive-error unitary $k$-designs over $G$} require that $k$-th moments are reproduced up to an additive error.
For this to be well-defined, the matrices $A$ and $B$ in Eq.~\eqref{eq:moment} have to be suitably restricted.
Here, we typically take $A$ to be a quantum state and $B$ to be a POVM element -- such that the error relates to the probability of distinguishing the averaging (or twirling) channels $\Phi_{\mcE}^{(k)} := \EE_{U\sim\mcE} U\tn{k} (\cdot) U\tnd{k}$ and $\Phi_{G}^{(k)} := \EE_{U\sim G} U\tn{k} (\cdot) U\tnd{k}$ and can be alternatively written in terms of the \emph{diamond norm} as $\dnorm{\Phi_{G}^{(k)} - \Phi_{\mathcal{E}}^{(k)}}$.\footnote{Strictly speaking, we would need to include auxillary systems in Eq.~\eqref{eq:moment}, but we omit this detail for this discussion.}
Similarly, \emph{relative-error unitary $k$-designs over $G$} require the reproduction of $k$-th moments up to a relative or multiplicative error.
We keep $A$ and $B$ to be restricted, but as overall normalizations do not matter for relative errors, we take them to be arbitary positive semi-definite (psd) matrices (i.e.~Hermitian matrices with non-negative eigenvalues).
A relative error $\epsilon$ always implies an additive error $2\epsilon$ \cite{brandao_local_2016}.
Even more, given a relative-error $\epsilon$-approximate unitary $k$-design $\mathcal{E}$, any quantum experiment that queries a unitary $U$ from $\mathcal{E}$ at most $k$ times outputs a state that is, on average, $2\epsilon$-close in trace distance to the output state obtained if $U$ were sampled from the Haar measure instead.
For an additive-error $\epsilon$-approximate unitary $k$-design, the same statement holds but then for non-adaptive quantum experiments~\cite{schuster_random_2025}. An additive error is thus a weaker requirement than a relative error.

We have analogous statements for state ensembles:
Haar-random quantum states over $G$ are given by $\mcS_G = \{ U\ket{0^n} \, | \, U \in G \}$ with the induced measure.
The moments of a state ensemble $\mcE'$ are given as 
\begin{equation}
   \E_{\psi \sim \mcE'} \tr ( B \ketbra{\psi}\tn{k} ) \,.
\end{equation}
\emph{Additive-error state $k$-designs over $G$} are those for which these moments are $\epsilon$-close to the Haar moments (for any POVM element $B$), or equivalently the one with trace distance error $\trnorm{\E_{\psi \sim \mcS_G} \ketbra{\psi}\tn{k} - \E_{\psi \sim \mcE'} \ketbra{\psi}\tn{k}} \leq \epsilon$.
Finally, \emph{relative-error state $k$-designs over $G$} reproduce all $k$-th moments with psd $B$ up to relative error.
Like with group designs, every quantum experiment that queries a state $\ket\psi$ at most $k$ times from a relative-error state $k$-design $\mathcal{E}'$, outputs a state that is, on average, $2\epsilon$-close to the state we would have gotten if we would have sampled from the Haar random distribution. For additive-error a similar statement holds, but then for non-adaptive algorithms. 
If $\mathcal{E}$ is a relative-error unitary $k$-design, then $\{U\ket{0^n}\mid U \sim \mathcal{E}\}$ is a relative-error state $k$-design. A similar statement holds for additive error. State designs are thus a weaker notion than group designs.

A weaker notion of randomness is \emph{anti-concentration}. 
As mentioned above, we here adopt a common misusage of this word in the random circuit community and mean that the moments \eqref{eq:moment} of the ensemble $\mathcal E$ with $A=B=\ketbra{0^n}\tn{k}$ approximate the Haar value over $G$ up to a relative error.
We then say that $\mcE$ anti-concentrates up to order $k$.
In contrast, anti-concentration in the conventional sense is a property of the ensemble of outcome distributions.
The conventional notion  is implied if the $k=2$ Haar moment scales as $4^{-n}$ which is however not necessarily true for any subgroup $G$.
For the here studied subgroups, this is the case for the matchgate group.
We note that if $\mathcal{E}$ forms a relative-error state $k$-design, then it anti-concentrates up to order $k$.

Most of our bounds depend on the \textit{architecture} according to which the qubits are ordered and connected.
An architecture is described by a graph, where the vertices of the graph are the qubits, and an edge between two vertices indicates that a 2-qubit gate can be applied on those 2 qubits. Commonly studied architectures are $1\mathrm{D}$ and $2\mathrm{D}$ lattices (with open or closed boundary conditions), or all-to-all connectivity (the complete graph). 
We define the \textit{diameter} (often denoted as $\dia$) of an architecture as the largest possible distance between two vertices, e.g.\ the diameter of $1\mathrm{D}$ is $\dia = \Theta(n)$, for $2\mathrm{D}$ $\dia = \Theta(\sqrt{n})$, and for the all-to-all connectivity $\dia = \Theta(1)$.
Note, however, that many known constructions of designs in the all-to-all connectivity actually use architectures of higher diameter. For example, the construction of Clifford 3-designs in Ref.~\cite{schuster_random_2025} in $\Or(\log\log(n))$ depth implicitly uses an architecture of diameter $\Or(n/\log(n))$.

Finally, we elaborate a bit on the assumptions we impose on the circuit ensembles to derive lower bounds that are independent of the architecture.
Due to their clearly defined structure, we obtain the strongest results for so-called \textit{generalized random brickwork circuits}.
We define them as a circuit ensemble that is constructed from independent $r$-local Haar-random gates, for some constant $r$.

\begin{definition}[Generalized random brickworks]\label{def:brickwork}
    Let $A_1,\ldots, A_L$ be partitions of $[n]$, where each $a_{i,j}\in A_{i}$ is a subset of $[n]$ of size at most some constant $r$, such that the qubits in $a_{i,j}$ are connected w.r.t to the underlying architecture. For each $a_{i,j}$ draw a unitary $U_{i,j}$ uniformly at random from the group $G$ on the qubits $a_{i,j}$ or apply the identity . Let $\mathcal{E}$ denote the ensemble of circuits generated by composing these unitaries in $L$ layers. We call an ensemble of this form a \textbf{fixed brickwork}. A \textbf{generalized random brickwork} is defined as a convex combination of (different) fixed brickworks.
\end{definition}

We emphasize that architecture-independent lower bounds derived for generalized random brickworks still hold under a significant relaxation of this definition, and we decided for Def.~\ref{def:brickwork} to simplify the presentation.
More precisely, we do not necessarily need that the local gates are uniformly distributed and their distributions do not have to be identical either (but they should be sampled independently).
Instead, our proofs require that the local gates preserve some single-qubit Pauli operator (up to a phase) with a finite probability or, more generally, that they can be replaced by such a distribution up to second moments (see also Sec.~\ref{sec:methods lower bounds} for a sketch of the argument).
This includes a vast majority of the random circuits considered in the literature.

See~Refs.~\cite{harrow_approximate_2023, dalzell_random_2022, haferkamp_random_2022, chen_incompressibility_2024}, and Corollary 1, first bullet point, in Ref.~\cite{schuster_random_2025} for examples of fixed brickworks, that are thus generalized random brickworks as well.
See Definition 1.4, first bullet point, in Ref.~\cite{chen_incompressibility_2024} for an example of a generalized random brickwork which is not a fixed brickwork.
See Ref.~\cite{suzuki_more_2024} (with CNOT as entangling gate) for an example that is a generalized random brickwork circuit under the relaxed definition .
See Refs.~\cite{metger_simple_2024, jiang_optimal_2020} and Corollary 1, second bullet point, in Ref.~\cite{schuster_random_2025} for examples of ensembles that are not generalized random brickworks.

We define a \textit{locally independent ensemble} as an ensemble such that the local gates acting on a qubit are drawn independently from the local gates acting on qubits that are ``far away''.
\begin{definition}[Locally independent ensemble]\label{def:local}
    Let $\mathcal{E}$ be an ensemble of unitary circuits on a fixed architecture.
    Let $S_x(U) \coloneq \{U^x_1, \ldots , U^x_{m_{U}}\}_{U}$ be the set of local unitaries that act on qubit $x$ for a circuit $U$. If for every pair of qubits $x,y$ that have distance at least $\dia/2$ we have that $S_x(U)$ and $S_y(U)$ are independent variables when $U$ is sampled from $\mathcal{E}$, then $\mathcal{E}$ is \textbf{locally independent}.
\end{definition}
Note that the notion of a locally independent ensemble does not make any sense in the all-to-all architecture, as there we have that $\dia = 1$.
Almost all of the ensembles named in the previous paragraph are locally independent. 
Only the ones in Ref.~\cite{ghosh_random_2024} and Definition 1.4, first bullet point, in Ref.~\cite{chen_incompressibility_2024} are not, as these ensembles use the all-to-all architecture.

\subsection{Lower bounds}
\label{sec:methods lower bounds}
Most of the lower bounds in this paper are based on a similar lightcone technique as used in Ref.~\cite{schuster_random_2025} to give a lower bound on the depth of orthogonal designs.
The exception are the lower bounds for generalized random brickwork circuits, on which we will elaborate more below. 

Recall from above that for $\mathcal{E}$ to be a relative-error group design over $G$, we require that expressions of the form $\tr(P U\tn{k}Q U\tnd{k})$, where $P,Q\geq 0$ are positive-semidefinite (psd) operators, with averages over $U$ either taken from $\mathcal{E}$ or from $G$, are close in relative error.
However, we will argue in the following that this generally requires linear-depth circuits to be accomplished, by relying on the existence of specific psd operators in the group's $k$-th order commutant (which are not in the unitary commutant).

More specifically, for the here considered $G$, we exploit that we can even find \emph{factorizing} $Q$ in the sense that $Q=Q_1\otimes\dots\otimes Q_n$, where all tensor products of the local operators $Q_i\geq 0$ are in the $k$-th commutant of $G$ restricted to the respective qubits.
Let us now add a small perturbation to $Q$, for instance by applying a Pauli-$Z$ matrix to the first qubit (on the first copy) to get $Q' \coloneqq (Z_1 \otimes \one^{\otimes(k-1)})Q(Z_1 \otimes \one^{\otimes(k-1)})$. 
Now $Q'$ is still psd but, in general, not invariant under $G$. 
However, everywhere except on the first qubit it is, i.e.\ if $V$ is a gate in $G$ that acts trivially on the first qubit, then $V Q' V^\dagger = Q'$. 

Hence, if we consider the action of a local $G$-circuit $U$ on $Q'$, all gates which are outside of the lightcone of the first qubit cancel:
This is most clearly seen at the example of a brickwork circuit, cf.~Fig.~\ref{fig:lightcone}. 
In the first layer of $U$, all gates cancel after conjugation of $Q'$ except the gates that act on the first qubit.
In the second layer, again all gates cancel except the ones that acts on qubits which are acted upon by the first layer, i.e.~the first two.
Iterating this argument shows that the only gates left after $d$ layers are the gates that form a lightcone starting from the first qubit.
If this lightcone does not include (a constant fraction of) all qubits, we are left with the operator $Q$ on the remaining qubits, unentangled with the remaining systems, which we can detect by measuring a suitable observable.
More precisely, we show that this gives means to distinguish deep circuits (or Haar-random unitaries) from short circuits, setting a lower bound of $\Omega(l(n))$ for the depth of $k$-designs for the specific group, where $l(n)$ is the minimal depth of a lightcone that is supported on a constant fraction of all qubits. In the $1\mathrm{D}$ architecture this implies a lower bound of $\Omega(n)$, in the all-to-all connectivity a lower bound of $\Omega(\log(n))$.

We note that finding such $Q$ for the considered groups is not difficult. 
For the orthogonal and unitary symplectic group, we can simply vectorize the preserved bilinear forms (the identify matrix $\one_n$ or the symplectic form $J_n$) to construct $Q$:
We obtain $Q$ in the $k=2$ commutant as the projector onto the maximally entangled state $\ket{\Omega_n}$ or on $\ket{J_n} \equiv (J_n\otimes \one_n) \ket{\Omega_n}$, respectively, which both factorize in the desired way.
For the Clifford group, we can find a suitable $Q$ in the $k=4$ commutant as the infamous stabilizer code projector $\propto\sum_P P^{\otimes 4}$, where the sum runs over all $n$-qubit Pauli operators.

The second lower-bound technique is built on the insight that the existence of factorizing $Q$ in the commutant allows to reduce some of the `psd' quantities $\tr(P U\tn{k}Q' U\tnd{k})$ to \emph{Pauli expectation values}.
The former are exactly the functions which can be estimated up to relative error using log-depth random circuits over the unitary group \cite{schuster_random_2025,laracuente2024approximate}, while the latter generally need linear depth, even for universal circuits.
On a high level, the existence of such a relation between these two types of functions for the here considered subgroups prevents the construction of log-depth (relative-error) designs.
More precisely, let us choose $P=P_1\otimes\cdots\otimes P_k$ to factorize over the $k$ copies such that
\begin{align}
 \tr(P U\tn{k}Q' U\tnd{k})
 &=
 \tr(P U\tn{k} (Z_1\otimes\one^{\otimes(k-1)}) U\tnd{k} Q U\tn{k} (Z_1\otimes\one^{\otimes(k-1)}) U\tnd{k})\notag \\
 &=
 \tr(P_1 (UZ_1 U^\dagger) \tilde Q (UZ_1 U^\dagger)) \,,
\end{align}
where $\tilde Q = \tr_{2,\dots,k}(\one\otimes P_2\otimes\cdots\otimes P_k Q)$ is the reduced operator on the first system.
Using the factorizing structure of $Q$, it is simple to find $P_i$ such that $\tilde Q \propto \ketbra{0}{0}$ and the above moment simply becomes $\propto\sandwich{0}{UZ_1 U^\dagger}{0}|^4$.
For the Clifford and orthogonal group, we can take $P=\ketbra{0^n}{0^n}\tn{k}$ (with $k=4$ and $k=2$, respectively) while for the symplectic group, we take $P=\ketbra{0^n}{0^n}\otimes\ketbra{1^n}{1^n}$.

For the previously introduced generalized random brickwork circuits, we then explicitly show that these Pauli expectation values need linear depth to approximate their Haar values within relative error.
Consequently, we obtain a lower bound of $\Omega(n)$ on the depth of relative-error state designs for the Clifford and orthogonal group, and relative-error group designs for the symplectic group.
Crucially, this result holds for generalized random brickwork circuits in \textit{any architecture} and is thus considerably stronger than the previously discussed lightcone arguments.
We are confident that our arguments can be generalized to an even broader class of circuit ensembles and may even hold for general circuits (cf.~the comment below Def.~\ref{def:brickwork}).

To prove these lower bounds, we exploit that we deal with second moments only and that we can thus replace the average action of each local gate with an (exact) 2-design over the respective group.
For all considered groups, we can take these 2-designs to be composed of Cliffords.
Then, there is a finite probability in every layer that the locality of $Z_1$ is preserved and thus the probability that the final operator is $Z_1$ decays exponentially with the depth.
To achieve the exponentially small Haar value, the depth thus has to be linear.\\

Finally, we use slightly different techniques to deal with the case of the matchgate group. The inherently one-dimensional nature of matchgates makes them relatively straightforward to deal with. We prove a lower bound on the anti-concentration value by relating it to a path counting problem in the commutant of the second moment averaged circuit. We cut a short depth circuit into roughly square pieces and lower bound the number of paths that pass through such cuts, which leads directly to a lower bound on the anticoncentration value. Our proof only deals with superblock circuits, but we believe the technique can be easily generalized to arbitrary matchgate circuit distributions.  Finally the absence of state designs is proven by a straightforward one-dimensional lightcone argument at the level of Majorana operators.    

\subsection{Upper bounds}\label{section: upper bounds}
The constructions for $\Or(\log(n))$-depth designs (of various types) in this paper are similar to the construction in~\cite{schuster_random_2025}.
The idea is to construct a two-layer brickwork circuit in which the local gates act on a large number of qubits (``superblocks'').
We organize $n$ qubits in $m$ local patches, each consisting of $\xi = n/m$ qubits. We choose $m$ to be even. In the first layer we apply superblocks to the local patches $(1,2), (3,4),\ldots, (m-1, m)$, and in the second layer to the local patches $(2,3), (4,5),\ldots, (m, 1)$.
In the first layer we thus have $m/2$ superblocks, that act on $2\xi$ qubits each. The second layer is the same as the first layer, except that the superblocks act on patches that are shifted $\xi$ qubits from the patches the superblock in the first layer act on. 
The superblocks are independently drawn from the Haar measure of a given group $G$ (or from $k$-designs if required).
The key step in Ref.~\cite{schuster_random_2025} is then the \textit{gluing lemma}.

\begin{lemma}[Gluing lemma, Theorem~$1$ in Ref.~\cite{schuster_random_2025}]
\label{lem:gluing}
    Let $\mathcal{E}$ be the ensemble of matrices drawn from the two-layer superblock architecture for the unitary group $\U$. Choosing $\xi = \Or(\log(nk/\epsilon))$, then $\mathcal{E}$ is a relative-error $\epsilon$-approximate unitary $k$-design.
\end{lemma}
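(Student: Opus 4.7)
The plan is to work directly with the $k$-fold moment superoperator $\Phi_{\mathcal{E}}(X) \coloneqq \EE_{U\sim\mathcal{E}} U\tn{k} X U\tnd{k}$ and establish the operator inequality $(1-\epsilon)\Phi_{\Haar} \preceq \Phi_{\mathcal{E}} \preceq (1+\epsilon)\Phi_{\Haar}$ in the completely positive order on superoperators, which is the standard equivalent of the relative-error $\epsilon$-design condition. Because each superblock is Haar-distributed on $2\xi$ qubits, the twirl of each layer factorizes into a tensor product of \emph{exact} local Haar projections onto the $k$-th commutant of $\U(2^{2\xi})$; the only approximation to analyze is how the two shifted layers compose.

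By Schur--Weyl duality, each local Haar projection has image $\linspan\{P_\sigma^{(2\xi)}\}_{\sigma\in S_k}$. Expanding both layer-twirls in this basis rewrites $\Phi_{\mathcal{E}}$ as a double sum over tuples of permutations---one per patch, per layer---with Weingarten-type coefficients. Two adjacent superblocks from opposite layers share an overlap of $\xi$ qubits, on which the matrix elements are controlled by the identity $\Tr(P_\sigma P_\tau^\dagger) = 2^{\xi\cdot c(\sigma^{-1}\tau)}$, with $c$ the number of cycles. The ``diagonal'' configurations in which every patch carries the same permutation sum, after Weingarten normalization, to the global Haar moment; any deviation drops at least one cycle on some boundary and is therefore suppressed by $2^{-\xi}$ relative to the diagonal contribution. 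Summing over the combinatorially many non-diagonal configurations and demanding that the total deviation lie below $\epsilon$ yields the condition $\xi = \Or(\log(nk/\epsilon))$.

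The hard part is upgrading this additive, norm-type estimate to the asymmetric positive-semidefinite inequalities required for \emph{relative} error: the global Haar twirl has eigenvalues as small as $2^{-kn}$ on its support, so any perturbation must be tracked in a basis adapted to this spectrum rather than merely in operator norm. I would handle this by passing to the Choi representation, decomposing within each Schur--Weyl isotypic block, and using Weingarten positivity together with a Neumann-series argument to treat the two-layer twirl as a small perturbation of the global Haar projector on each block. The main obstacle is ensuring that the combinatorial blow-up over the $m = n/\xi$ patch boundaries does not overwhelm the $2^{-\xi}$ per-boundary suppression; this is precisely what forces the \emph{two-layer shifted brickwork} structure, since a single layer would leave adjacent patches uncorrelated and fail to glue the local designs into a global one.
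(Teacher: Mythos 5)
You should first note that the paper does not prove Lemma~\ref{lem:gluing} at all: it is imported as Theorem~1 of \cite{schuster_random_2024}, so the comparison is against the proof in that reference. Your combinatorial core matches it: each layer twirl is an exact tensor product of local Haar projections, the two-layer composition is expanded in local permutation operators, overlaps on the shared $\xi$ qubits are governed by $\Tr(P_\sigma P_\tau^\dagger)=2^{\xi\, c(\sigma^{-1}\tau)}$, and every ``domain wall'' where adjacent patches disagree costs at least a factor $2^{-\xi}$, so that summing over the off-diagonal configurations forces $\xi=\Or(\log(nk/\epsilon))$. This is the transfer-matrix part of the cited argument (the same bookkeeping reappears in this paper's anti-concentration computations), and your remark that a single layer cannot glue is also correct.

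The genuine gap is exactly at the step you flag as hard, and your proposed fix is not a proof. The relative-error condition demands that $(1\pm\epsilon)\Phi_{\U_H}^{(k)}-\Phi_{\mathcal{E}}^{(k)}$ be completely positive; since the Choi operator of the Haar twirl has nonzero eigenvalues of order $k!/4^{nk}$, an operator-norm estimate on the difference is only useful if it is of that same exponentially small scale, and converting it into the two one-sided CP inequalities is not a generic perturbation argument. ``Weingarten positivity'' is not available (the unitary Weingarten matrix has negative off-diagonal entries), and a Neumann-series treatment of $\Phi_{\mathcal{E}}$ as a perturbation of the Haar projector on isotypic blocks does not by itself produce operator dominance of Choi matrices. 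The mechanism in \cite{schuster_random_2024} is specific and essential: one replaces the exact Haar twirl by the approximate twirl $\Phi_a(\cdot)=2^{-nk}\sum_{\pi\in S_k}\Tr((\cdot)\pi^{-1})\,\pi$, whose Choi operator $[\Phi_a\otimes\one](\ketbra{\Omega_n})=4^{-nk}\sum_\pi \pi\otimes\pi$ has a \emph{flat} spectrum on its support, proves an $\infty$-norm bound on $[(\Phi_{\mathcal{E}}-\Phi_a)\otimes\one](\ketbra{\Omega_n})$ via the domain-wall sum, and then pays the factor $4^{nk}/k!$ to turn this into a relative error (this is precisely the device the present paper re-uses in Lemma~\ref{lemma: relative error EPR}); the quantitative target for the combinatorial sum is therefore $\epsilon\cdot k!/4^{nk}$ in $\infty$-norm, not $\epsilon$. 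Without this (or an equivalent) ingredient, your sketch establishes at best an additive-error statement, so as written it does not prove the lemma.
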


In Ref.~\cite{chen_incompressibility_2024} it is shown that random brickwork circuits form unitary $k$-designs in depth~$\Or((nk + \log(1/\epsilon))\polylog(k))$.
If we thus insert random brickwork circuits of this depth in the superblock construction $\mathcal{E}$, we get a relative-error $\epsilon$-approximate unitary $k$-design in depth $d = \Or(\log(n/\epsilon) k\polylog(k))$. Note that this construction is a generalized random brickwork.

To prove that additive-error state designs can be constructed in depth $\Or(\log(n))$ for the orthogonal and the symplectic group,  we use a similar construction in conjection with a new technique:
We show that the Haar twirl over these groups is $\epsilon$-close in trace distance to the unitary Haar twirl, for $\epsilon$ polynomial in $k$ and $2^{-n}$, provided that the input state obeys the positive partial transpose (PPT) condition on every copy of the Hilbert space~$\H$.
This is proven by projecting on a subspace of the Hilbert space where the commutant of the considered subgroup is identical to the commutant of the unitary group. For the orthogonal group this is the distinct subspace~\cite{metger_simple_2024} and we define an analogus subspace for the symplectic group. 
We then observe that if $\rho$ is PPT, then $U\tn{k}\rho U\tn{k,\dagger}$ is as well. The two-layer superblock brickwork will thus provide a $\Or(\log(n))$-depth construction for additive-error state designs by the gluing lemma.

To obtain expressive circuit depth upper bounds on anti-concentration, we have to restrict the considered circuits to a relevant class of interest.
This is because we can otherwise find constant-depth circuits that anti-concentrate but are otherwise not interesting (for instance, a layer of parallel Hadamard gates  trivially anti-concentrates for the unitary group).
For concreteness, we here focus on the two-layer superblock construction from Ref.~\cite{schuster_random_2025}.
The bounds then follow by an explicit calculation using Weingarten calculus in the orthogonal case.
In the unitary symplectic case, this follows readily from the fact that the twirl over $\USp$-Haar random states is the same as over $\U$-Haar random states and thus the result follows directly from the gluing lemma.

\section*{Acknowledgements}
We are tremendously grateful to Yuzhen Zhang, Yimu Bao, and Yingfei Gu for several clarifying discussions regarding short-depth Clifford state designs for $k > 5$, and to Xhek Turkeshi and Ingo Roth for general discussions on short-depth random circuits. We are also grateful to the authors of Ref.~\cite{west_no-go_2025} for the motivation to finish our manuscript with greater than usual alacrity.
L.~Grevink and J.~Helsen acknowledge funding from NWO through NGF.1623.23.005 and Veni.222.331.
T.~Schuster acknowledges support from the Walter Burke Institute for Theoretical Physics at Caltech and the U.S. Department of Energy, Office of Science, National Quantum Information Science Research Centers, Quantum Systems Accelerator.
The Institute for Quantum Information and Matter, with which T.~Schuster is affiliated, is an NSF Physics Frontiers Center (NSF Grant PHY-2317110).
M.~Hinsche acknowledges funding by the German Federal Ministry for Education and Research (MUNIQC-Atoms).
M.~Heinrich acknowledges funding by the Deutsche Forschungsgemeinschaft (DFG, German Research Foundation) -- 547595784 and under Germany’s Excellence Strategy – Cluster of Excellence Matter and Light for Quantum Computing (ML4Q) EXC 2004/1 -- 390534769. 
Z. Zimbor\'as was supported by the Horizon Europe programme HORIZON-CL4-2022-QUANTUM-01-SGA via
the project 101113946 OpenSuperQPlus100 and the
QuantERA II project HQCC-101017733 as well as by the NKFIH OTKA grant FK 135220.
A substantial part of this work was developed during the Random Quantum Circuits workshop 2024 in Amsterdam.


\clearpage

\addcontentsline{toc}{section}{Supplementary Information}

 \begin{center}
 \textbf{\Large\sffamily Will it glue? On short-depth designs beyond the unitary group}\\[1em]
 \textit{\large -- Supplementary Information --}
 \end{center}
 \vspace{2em}

\setcounter{equation}{0}
\setcounter{figure}{0}
\setcounter{table}{0}
\setcounter{section}{0}
\setcounter{theorem}{0}
\makeatletter
\renewcommand{\theequation}{S\arabic{equation}}
\renewcommand{\thefigure}{S\arabic{figure}}
\renewcommand{\thesection}{S\arabic{section}}
\renewcommand{\thetheorem}{S\arabic{theorem}}

\tableofcontents

\section{Notation and definitions}\label{section: preliminaries}
Let $\H = \C^{D}$ be a $D$-dimensional Hilbert space. In this paper we focus on the qubit case, in which $D = 2^n$, where $n$ denotes the number of qubits. 
Let $X, Y, Z$ denote the Pauli matrices and let $ Z_i $ denote the operator $ \one^{\otimes (i-1)} \otimes Z \otimes \one^{\otimes (n-i)} $, i.e., the Pauli-$Z$ acting on the $i$-th qubit and identity elsewhere on $n$ qubits and analogously for $X_i,Y_i$.

\paragraph{Subgroups of the unitary group and Haar measure}
Let $G(D)$ be a subgroup of the unitary group $\U(D)$ on $\H$. The Haar-measure on $G(D)$ is defined as the unique probability measure that is both left and right invariant under multiplication with elements of $G(D)$, i.e.\ that for all $V \in G(D)$ and for every integrable function $f$ we have that
\begin{equation}
    \int_{G(D)} f(U) d\mu_H (D) = \int_{G(D)} f(VU) d\mu_H (D) = \int_{G(D)} f(UV) d\mu_H (D).
\end{equation}
We will often write $\E_{U \sim G}(f(U)) \coloneqq \int_{G(D)} f(U) d\mu_H (D)$ for integration over the Haar measure. Likewise, we can define the Haar measure on the state space $W_G \coloneqq \{\ket\psi \mid \exists g \in G; \ket\psi = g\ket{0_n}\}$, as the unique probability measure that is invariant under multiplication with elements of $G$.
By a slight abuse of notation we will write $\ket\psi\sim G$ for sampling from the Haar measure on $W_G$.

\paragraph{Ensembles and circuit depth}
We call distributions of elements of a group \emph{ensembles} (typically denoted by $\mathcal{E}$). These group elements will always be implemented by quantum circuits, so we will also sometimes speak of an ensemble of circuits. When we talk about the \emph{depth of an ensemble} (often just ``the depth") we mean the maximum over the circuit depths (conventionally defined) of all the circuits in an ensemble. 
\paragraph{Moment operators and group designs}
The \textit{k-wise twirl} (also called the $k$-th moment operator) $\Phi_{\mathcal{E}}^{(k)}$ of a ensemble of operators $\mathcal{E}$ is defined as
\begin{equation}
    \Phi_{\mathcal{E}}^{(k)}(A) \coloneqq \E_{U \sim \mathcal{E}} U\tn{k} A U\tnd{k},
\end{equation}
for a matrix $A \in \mathcal{L}(D^k)$. When it is clear from context what $k$ is, we will just write the twirl as~$\Phi_\mathcal{E}$, omitting the $k$ from the notation. We write $\Phi_{G}$ for the twirl over the Haar measure of group $G$.

\begin{definition}
    A unitary $k$-design over the group $G$ is an ensemble $\mathcal{E}$ of operators in $G$ such that
    \begin{equation}
        \Phi_{G}^{(k)}(A) = \Phi_{\mathcal{E}}^{(k)}(A),
    \end{equation}
    for every $A \in \mathcal{L}(\H\tn{k})$.
\end{definition}

The commutant of the \textit{$k$-th diagonal representation} $U \rightarrow U\tn{k}$ of $G$ (from now on we will just call this the commutant) is defined as $\Comm(G, k) \coloneqq \{A \in \mathcal{L}(\H\tn{k}) \mid A U\tn{k} = U\tn{k}A\}$. 

\begin{theorem}
    The Haar twirl is the projector on the commutant. More precisely, if $b_k$ is a basis of the commutant, then we can compute the Haar twirl as
    \begin{equation}
        \Phi_{G}^{(k)}(A) = \sum_{\sigma, \tau \in b_k} \Wg_G(\sigma,\tau) \tr(\sigma^\dagger A) \tau.
    \end{equation}
    Here the Weingarten function $\Wg_G(\sigma. \tau)$ is defined as the pseudo-inverse of the Gram matrix $\G_{\sigma, \tau} \coloneqq \tr(\sigma^\dagger \tau)$.
\end{theorem}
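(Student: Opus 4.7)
The plan is to proceed in two stages: first establishing that $\Phi_{G_H}^{(k)}$ is the orthogonal projection onto $\Comm(G,k)$ with respect to the Hilbert--Schmidt inner product $\langle A,B\rangle_{\mathrm{HS}} \coloneqq \Tr(A^\dagger B)$, and then using this structural property to extract the Weingarten formula. Throughout I will implicitly use that all four subgroups considered here (as well as the unitary group itself) are compact, so that the normalised Haar measure exists and both left- and right-invariance hold.

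For the first stage, three properties of $\Phi_{G_H}^{(k)}$ need to be verified. First, its image lies in $\Comm(G,k)$: for any $V\in G$, the substitution $U\mapsto V^{-1}U$ under the Haar integral gives $V\tn{k}\Phi_{G_H}^{(k)}(A)V\tnd{k}=\Phi_{G_H}^{(k)}(A)$. Second, it restricts to the identity on $\Comm(G,k)$: if $[A,U\tn{k}]=0$ for all $U\in G$, the integrand is the constant $A$. Third, it is self-adjoint with respect to $\langle\cdot,\cdot\rangle_{\mathrm{HS}}$, which follows from cycling under the trace together with invariance of the Haar measure under inversion, yielding $\langle B,\Phi_{G_H}^{(k)}(A)\rangle_{\mathrm{HS}}=\langle \Phi_{G_H}^{(k)}(B),A\rangle_{\mathrm{HS}}$. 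Properties one and two imply $\Phi_{G_H}^{(k)}$ is a projection onto $\Comm(G,k)$, and self-adjointness upgrades it to the \emph{orthogonal} projection.

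For the second stage, since $\Phi_{G_H}^{(k)}(A)\in\Comm(G,k)$ and $b_k$ is a basis of that subspace, expand $\Phi_{G_H}^{(k)}(A)=\sum_{\tau\in b_k} c_\tau(A)\,\tau$ with coefficients $c_\tau(A)$ that are linear in $A$. Pairing against any $\sigma\in b_k$ and using self-adjointness together with $\Phi_{G_H}^{(k)}(\sigma)=\sigma$ yields
\[
\Tr(\sigma^\dagger A)=\langle\sigma,\Phi_{G_H}^{(k)}(A)\rangle_{\mathrm{HS}}=\sum_{\tau\in b_k}\G_{\sigma,\tau}\,c_\tau(A).
\]
This is a linear system $\G\,\mathbf{c}(A)=\mathbf{t}(A)$ with $\mathbf{t}(A)_\sigma=\Tr(\sigma^\dagger A)$. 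Since $b_k$ is a basis, $\G$ is invertible, its (pseudo-)inverse is exactly $\Wg_G$, and inverting the system gives $c_\tau(A)=\sum_\sigma \Wg_G(\sigma,\tau)\Tr(\sigma^\dagger A)$, which is the claimed expression.

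The hard part is mostly notational rather than substantive: one has to justify talking about a \emph{pseudo}-inverse even though $b_k$ is declared to be a basis. If the theorem is meant to extend to a merely spanning set (as is convenient for the symmetric group or Brauer algebra acting on tensor space when $D$ is small and linear dependencies occur), then the orthogonal projection characterization still pins down $\Phi_{G_H}^{(k)}$ uniquely, and the Moore--Penrose pseudo-inverse of $\G$ produces the correct coefficients on the range of $\G$ regardless of redundancy in the spanning set. Aside from this, the proof is essentially bookkeeping once the orthogonal projection statement is in hand.
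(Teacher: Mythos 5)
Your proof is correct: the paper itself states this theorem without proof, deferring to the cited review~\cite{mele_introduction_2024}, and your argument (Haar twirl is the Hilbert--Schmidt-orthogonal projection onto $\Comm(G,k)$, then solve the Gram-matrix linear system for the expansion coefficients) is exactly the standard derivation used there. The only cosmetic point is the indexing/conjugation convention in $\Wg_G=\G^{+}$ versus its transpose, which is immaterial since $\G$ is Hermitian (and real symmetric for all commutant bases used in the paper), and your remark that the pseudo-inverse handles merely spanning sets is the right reason the paper's phrasing is harmless.
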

If we know the commutant of a group, we thus know how to compute the twirl. See~\cite{mele_introduction_2024} for more details on the Haar measure.

\paragraph{Approximate group designs}

As it is difficult to find unitary $k$-designs for higher values of $k$, we introduce different notions of approximate group designs. All the following definitions can be defined for any subgroup of the unitary group.

\begin{definition}[Relative-error approximate unitary $k$-designs]
    An ensemble $\mathcal{E}$ is an $\epsilon$-approximate relative-error unitary $k$-design over $G$ for $\epsilon > 0$ if
    \begin{equation}
        (1-\epsilon) \Phi_{G}^{(k)} \preceq \Phi_{\mathcal{E}}^{(k)} \preceq (1+\epsilon)\Phi_{G}^{(k)}.
    \end{equation}
    The notation $\Phi \preceq \Phi'$ means that $\Phi - \Phi'$ is a completely positive map.
\end{definition}

\begin{definition}[Additive-error approximate unitary $k$-designs]
    An ensemble $\mathcal{E}$ is an $\epsilon$-approximate additive-error unitary $k$-design over $G$ for $\epsilon > 0$ if
    \begin{equation}
        ||\Phi_{G}^{(k)} - \Phi_{\mathcal{E}}^{(k)}||_\lozenge \leq \epsilon. 
    \end{equation}
\end{definition}
Here, the diamond norm is defined as $||\Phi||_\lozenge = \max_{\{X; ||X||_1 \leq 1\}} ||(\Phi \otimes \one_n)(X)||_1$. A relative error $\epsilon$ always implies an additive error $2\epsilon$ \cite{brandao_local_2016}.
Even more, given a relative-error $\epsilon$-approximate unitary $k$-design $\mathcal{E}$, any quantum experiment that queries a unitary $U$ from $\mathcal{E}$ at most $k$ times outputs a state that is, on average, $2\epsilon$-close in trace distance to the output state obtained if $U$ were sampled from the Haar measure instead.
For an additive-error $\epsilon$-approximate unitary $k$-design, the same statement holds but then for non-adaptive quantum experiments~\cite{schuster_random_2025}. An additive error is thus a weaker requirement than a relative error.

\paragraph{Approximate state designs}
We will now introduce some notions of state designs.
\begin{definition}[Relative-error approximate state design]
    An ensemble $\mathcal{E}$ of pure states is an $\epsilon$-approximate relative-error state $k$-design over $G$ for $\epsilon > 0$ if
    \begin{equation}
        (1-\epsilon) \E_{\ket{\psi} \sim G} \ketbra{\psi}\tn{k} \preceq \E_{\ket{\psi} \sim \mathcal{E}} \ketbra{\psi}\tn{k} \preceq (1+\epsilon)\E_{\ket{\psi} \sim G} \ketbra{\psi}\tn{k}.
    \end{equation}
\end{definition}

\begin{definition}[Additive-error approximate state design]
    An ensemble $\mathcal{E}$ is an $\epsilon$-approximate additive-error state $k$-design for $\epsilon > 0$ if
    \begin{equation}
        \left|\left|\E_{\ket{\psi} \sim G} \ketbra{\psi}\tn{k} - \E_{\ket{\psi} \sim \mathcal{E}} \ketbra{\psi}\tn{k}\right|\right|_1 \leq \epsilon.
    \end{equation}
\end{definition}

Like with group designs, every quantum experiment that queries a state $\ket\psi$ at most $k$ times from a relative-error state $k$-design $\mathcal{E}$, outputs a state that is, on average, $2\epsilon$-close to the state we would have gotten if we would have sampled from the Haar random distribution. For additive-error a similar statement holds, but then for non-adaptive algorithms. 
If $\mathcal{E}$ is a relative-error unitary $k$-design, then $\{U\ket{0_n}\mid U \sim \mathcal{E}\}$ is a relative-error state $k$-design. A similar statement holds for additive error. State designs are thus a weaker notion than group designs.

\paragraph{Anti-concentration}
An even weaker notion of randomness is anti-concentration. 
As mentioned above, we here adopt a common misusage of this word in the random circuit community and mean that certain moments of the ensemble $\mathcal E$ approximate the Haar value over $G$.
For some groups $G$, not even the outcome distribution induced by Haar-random unitaries anti-concentrates in the conventional sense 
, and thus also not ensembles $\mathcal E$ obeying our definition. 
For the here studied groups, this is the case for the matchgate group.

Concretely, the definition we will use throughout this paper is the following:
\begin{definition}
    An ensemble $\mathcal{E}$ anti-concentrates up to order $k$ if
    \begin{equation}
        \E_{U\sim \mathcal{E}}\sum_{x \in \{0,1\}^n}|\bra{x} U \ket{0_n}|^{2k} \leq \alpha \cdot \E_{U\sim G}\sum_{x \in \{0,1\}^n}|\bra{x} U \ket{0_n}|^{2k}
    \end{equation}
    for some constant $\alpha \geq 1$.
\end{definition}
We note that if $\mathcal{E}$ forms a relative-error state $k$-design, then it anti-concentrates. 

\paragraph{Architectures}
Qubits are ordered in a \textit{architecture}. An architecture is described by a graph, where the vertices of the graph are the qubits, and an edge between two vertices indicates that a 2-qubit gate can be applied on those 2 qubits. Commonly studied architectures are $1\mathrm{D}$ and $2\mathrm{D}$ lattices (with open or closed boundary conditions), or all-to-all connectivity (the complete graph). 
We define the \textit{diameter} (often denoted as $\dia$) of an architecture as the largest possible distance between two vertices, e.g.\ the diameter of $1\mathrm{D}$ is $\dia = \Theta(n)$, for $2\mathrm{D}$ $\dia = \Theta(\sqrt{n})$, and for the all-to-all connectivity $\dia = \Theta(1)$.
Note, however, that many known constructions of designs in the all-to-all connectivity actually use architectures of higher diameter. For example, the construction of Clifford 3-designs in~\cite{schuster_random_2025} in $\Or(\log\log(n))$ depth implicitly uses an architecture of diameter $\Or(n/\log(n))$.

\paragraph{Circuit distribution classes}
We define a \textit{generalized random brickwork circuit} as a circuit ensemble that is constructed from independent $r$-local Haar-random gates, for some constant $r$.

\begin{definition}[Generalized random brickworks]\label{def:brickwork}
    Let $A_1,\ldots, A_L$ be partitions of $[n]$, where each $a_{i,j}\in A_{i}$ is a subset of $[n]$ of size at most some constant $r$, such that the qubits in $a_{i,j}$ are connected w.r.t to the underlying architecture. For each $a_{i,j}$ draw a unitary $U_{i,j}$ uniformly at random from the group $G$ on the qubits $a_{i,j}$ or apply the identity . Let $\mathcal{E}$ denote the ensemble of circuits generated by composing these unitaries in $L$ layers. We call an ensemble of this form a \textbf{fixed brickwork}. A \textbf{generalized random brickwork} is defined as a convex combination of (different) fixed brickworks.
\end{definition}

We emphasize that architecture-independent lower bounds derived for generalized random brickworks still hold under a significant relaxation of this definition, and we decided for Def.~\ref{def:brickwork} to simplify the presentation.
More precisely, we do not necessarily need that the local gates are uniformly distributed and their distributions do not have to be identical either (but they should be sampled independently).
Instead, our proofs require that the local gates preserve some single-qubit Pauli operator (up to a phase) with a finite probability or, more generally, that they can be replaced by such a distribution up to second moments (see also Sec.~\ref{sec:methods lower bounds} for a sketch of the argument).
This includes a vast majority of the random circuits considered in the literature.

See~Refs.~\cite{harrow_approximate_2023, dalzell_random_2022, haferkamp_random_2022, chen_incompressibility_2024}, and Corollary 1, first bullet point, in Ref.~\cite{schuster_random_2025} for examples of fixed brickworks, that are thus generalized random brickworks as well.
See Definition 1.4, first bullet point, in Ref.~\cite{chen_incompressibility_2024} for an example of a generalized random brickwork which is not a fixed brickwork.
See Ref.~\cite{suzuki_more_2024} (with CNOT as entangling gate) for an example that is a generalized random brickwork circuit under the relaxed definition .
See Refs.~\cite{metger_simple_2024, jiang_optimal_2020} and Corollary 1, second bullet point, in Ref.~\cite{schuster_random_2025} for examples of ensembles that are not generalized random brickworks.

When studying anti-concentration, we only consider generalized random brickworks. If not, then ensembles with constant depth can be found that anti-concentrate. For example, the circuit that applies a Hadamard gate $H$ on every qubit anti-concentrates for the unitary group.
We also obtain the strongest results for generalized random brickworks due to their clearly defined structure.

We define a \textit{locally independent ensemble} as an ensemble such that the local gates acting on a qubit are drawn independently from the local gates acting on qubits that are ``far away''.
\begin{definition}[Locally independent ensemble]\label{def:local}
    Let $\mathcal{E}$ be an ensemble of unitary circuits on a fixed architecture.
    Let $S_x(U) \coloneq \{U^x_1, \ldots , U^x_{m_{U}}\}_{U}$ be the set of local unitaries that act on qubit $x$ for a circuit $U$. If for every pair of qubits $x,y$ that have distance at least $\dia/2$ we have that $S_x(U)$ and $S_y(U)$ are independent variables when $U$ is sampled from $\mathcal{E}$, then $\mathcal{E}$ is \textbf{locally independent}.
\end{definition}
Note that the notion of a locally independent ensemble does not make any sense in the all-to-all architecture, as there we have that $\dia = 1$.
Almost all of the ensembles named in the previous paragraph are locally independent. Only~\cite{ghosh_random_2024} and \cite[Definition 1.4 first bullet]{chen_incompressibility_2024} are not, as these ensembles use the all-to-all architecture.

\section{Clifford group}
\label{section:clifford-group}

\subsection{Preliminaries}
The $n$-qudit Clifford group $\Cl_n$ is defined as the group of unitaries that normalize the Pauli group $\mathcal{P}_n$, i.e.~the group of all unitaries $U$ such that $U \mathcal{P}_n U^\dagger \subset \mathcal{P}_n$. 
Here, we will only consider qubits, in which case he Clifford group can equivalently be defined as the group generated by the CNOT-gate, the Hadamard gate $H$ and the phase gate $S$. 
The $n$-qubit Clifford group forms an exact unitary 3-design, but not a 4-design~\cite{zhu_clifford_2016}. We can thus directly extend the gluing lemma from~\cite{schuster_random_2025} to $k=3$. 

We note that the results from this section can be readily generalized to prime-dimensional qudits, in which case the Clifford group only forms a unitary 2-design, but not a 3-design \cite{zhu_multiqubit_2017}.
The analogous statements of our theorems would then concern bounds on the circuit depth of Clifford 3-designs.

We are interested in the diagonal representation of the Clifford group, $U \rightarrow U^{\otimes k}$. The commutant of this representation can be described using \textit{Stochastic Lagrangian subspaces}~\cite{gross_schurweyl_2021}.

\begin{definition}
    The subspace $T \subset \Z_2^{2k}$ is a Stochastic Lagrangian subspace if the following three properties hold.
    \begin{enumerate}
        \item For every $(x,y) \in T$, we have that $x \cdot x = y \cdot y \mod 4$
        \item $T$ has dimension $k$
        \item $T$ is stochastic, which means that $(1,\ldots,1) \in T$.
    \end{enumerate}
    We define $\Sigma_{k,k}$ as the set of all Stochastic Lagrangian subspaces.
\end{definition}

We note that condition 1 implies that $T$ has dimension at most $k$. Condition 2 could thus equivalently be stated as $T$ having the largest possible dimension. The commutant of the diagonal representation can now be described as follows.
\begin{theorem}[\cite{gross_schurweyl_2021}]
    If $n \geq k-1$, then the commutant of the diagonal representation is spanned by the operators $R(T)=r(T)\tn{n}$ for $T \in \Sigma_{k,k}$, where
    \begin{equation}
        r(T) \coloneqq \sum_{(x,y) \in T} \ketbra{x}{y}.
    \end{equation}
\end{theorem}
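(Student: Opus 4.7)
The plan is to prove the statement in the standard Schur--Weyl style in two steps: first establish containment $R(T)\in\Comm(\Cl_n,k)$ for every $T\in\Sigma_{k,k}$, then match dimensions and verify linear independence to upgrade containment to a spanning statement. Containment should hold for every $n$; the hypothesis $n\geq k-1$ only enters the linear-independence step.

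For containment, the Clifford group $\Cl_n$ is generated by the single-qubit gates $H,S$ together with $\mathrm{CNOT}$, and since $R(T)=r(T)\tn{n}$ factorizes across sites, it suffices to check two things: $[r(T), U\tn{k}]=0$ for $U\in\{H,S\}$ acting on one qubit, and $[r(T)\otimes r(T), \mathrm{CNOT}\tn{k}]=0$ for the two-qubit gate. The phase gate case is essentially immediate from the definition: $S\tn{k}$ acts diagonally in the computational basis as $\ket{x}\mapsto\i^{x\cdot x}\ket{x}$, so condition (1) in the definition of $\Sigma_{k,k}$, namely $x\cdot x\equiv y\cdot y\pmod 4$, is exactly what is needed so that $S\tn{k}r(T)$ and $r(T)S\tn{k}$ agree term by term. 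The Hadamard case is more substantive: expanding $H\tn{k}\ket{x}$ as a character sum and comparing matrix elements of $H\tn{k}r(T)$ with $r(T)H\tn{k}$ reduces the required identity to the statement that $T$ is self-dual with respect to the natural symplectic-type form on $\Z_2^{2k}$, a property that follows from (1) together with the maximality condition (2). Finally, the $\mathrm{CNOT}$ check is combinatorial: $\mathrm{CNOT}\tn{k}$ implements the $\Z_2$-linear map $(x,y)\mapsto (x,x\oplus y)$ on pairs of basis labels, and one verifies directly that this map preserves $T\oplus T\subset \Z_2^{2k}\oplus\Z_2^{2k}$ as a consequence of $T$ being a stochastic Lagrangian subspace.

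For the spanning step, the plan is to compute the Gram matrix $\G_{T,T'}=\Tr(r(T)^\dagger r(T'))^n$ explicitly. It factorizes over sites, and the single-site overlap is a combinatorial count of pairs lying in $T\cap T'$, so the full matrix is the $n$-th entrywise power of a fixed $|\Sigma_{k,k}|\times|\Sigma_{k,k}|$ matrix. Standard arguments (e.g.\ an inductive reduction on pairs $T\neq T'$ that first differ at some site) should show that this matrix becomes full rank exactly once $n\geq k-1$; this is the sole place the threshold enters. To conclude that the $R(T)$ span all of $\Comm(\Cl_n,k)$, one matches the count $|\Sigma_{k,k}|$ against the dimension of the commutant. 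The cleanest way is to observe that any $A\in\Comm(\Cl_n,k)$ commutes in particular with the $n$-qubit Pauli group $\mathcal{P}_n\subset\Cl_n$, which forces $A$ to be block-diagonal with respect to a Pauli stabilizer decomposition labeled by self-dual codes in $\Z_2^{2k}$; identifying these codes with stochastic Lagrangian subspaces yields exactly $|\Sigma_{k,k}|$ degrees of freedom, closing the dimension count.

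The main obstacle is the spanning step: the Hadamard commutation and the Gram-matrix rank computation are both linear-algebraic once the right parametrization is in place, but the final equality $\dim\Comm(\Cl_n,k)=|\Sigma_{k,k}|$ requires a genuine representation-theoretic input and is where careful bookkeeping of the $n\geq k-1$ threshold is essential. A small but important technical point I would nail down early is the explicit form of the single-site Gram entry $\Tr(r(T)^\dagger r(T'))$ in terms of $|T\cap T'|$, since this underlies both the independence argument and the Weingarten-style formulas that downstream results in this paper ultimately rely on.
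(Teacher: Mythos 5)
The statement you are proving is not actually proven in this paper: it is imported wholesale from Gross--Nezami--Walter \cite{gross_schurweyl_2021}, so there is no in-paper argument to match against. Judged on its own merits, your containment step is sound: checking $[r(T),S^{\otimes k}]=0$ from condition (1), $[r(T),H^{\otimes k}]=0$ from the fact that (1) plus maximality makes $T$ Lagrangian for the form $(x,y)\cdot(u,v)=x\cdot u+y\cdot v$ (note $H^{\otimes k}r(T)H^{\otimes k}=r(T^{\perp})$), and CNOT-invariance from $T$ being a subspace, is exactly the standard verification, and the single-site Gram entry $\Tr(r(T)^{\dagger}r(T'))=|T\cap T'|$ is correct.

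The genuine gap is in your spanning step. You propose to close the dimension count by observing that any $A\in\Comm(\Cl_n,k)$ commutes with the Pauli group $\mathcal P_n$ and claiming this forces a block structure ``labeled by self-dual codes in $\Z_2^{2k}$'' with exactly $|\Sigma_{k,k}|$ degrees of freedom. That is false: the commutant of the $k$-fold diagonal Pauli action alone has dimension $4^{n(k-1)}$ (it is the full endomorphism algebra of the multiplicity space of the irreducible Pauli representation, equivalently the span of all $\sum_x\ketbra{x_1+a,\dots,x_k+a}{x_1,\dots,x_k}$-type operators with $\sum_i x_i$-parity constraints), which grows exponentially in $n$ and has no a priori relation to site-uniform operators of the form $r(T)^{\otimes n}$. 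Cutting this huge algebra down to the $|\Sigma_{k,k}|$-dimensional span of the $R(T)$ requires commutation with the \emph{symplectic} part of the Clifford group acting across all qubits simultaneously, and establishing that this suffices --- e.g.\ via the character-sum computation $\dim\Comm(\Cl_n,k)=\EE_{U\sim\Cl_n}|\Tr(U^{\otimes k})|^2$ and showing it equals $\prod_{i=0}^{k-2}(2^i+1)=|\Sigma_{k,k}|$ for $n\geq k-1$ --- is precisely the hard content of the Gross--Nezami--Walter theorem, not a bookkeeping afterthought. Your linear-independence sketch (Gram matrix becoming full rank at $n\geq k-1$) is plausible but also only asserted; without the correct upper bound on the commutant dimension the argument does not close, and as written the key step is essentially assuming what is to be proven.
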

The Clifford group is a subgroup of the unitary group. This means that the commutant of the Clifford group should at least include the commutant of the unitary group, which we know to be the permutations. Indeed, for $\sigma \in S_k$ a permutation, we see that $x \cdot x = (\sigma x) \cdot (\sigma x) \mod 4$. Hence, the space $T_\sigma=\{(\sigma x, x) \mid x \in \Z_2^k\}\in \Sigma_{k,k}$. In the following, we will hence view $S_k$ as a subset of $\Sigma_{k,k}$ via this identification, i.e., $S_k \subseteq \Sigma_{k,k}$. Indeed, for $k\leq 3$, $S_k = \Sigma_{k,k} $. For $k=4$ however, there exist other subspaces and so $S_k \subset \Sigma_{k,k}$. An example of such a space is $T_4$ with representation
\begin{equation}\label{eq:T4}
    R(T_4) = 2^{-n}\sum_{P \in \Pa_{n}} P^{\otimes 4},
\end{equation}
where the sum runs over all $n$-qubit Pauli matrices.

\subsection{Clifford group designs}
In this section we prove a lower bound for the circuit depth of additive-error Clifford (unitary) designs based on lightcone arguments. This bound is remarkably general, but depends on the underlying architecture. 

For generalized random brickwork circuits (see Def.~\ref{def:brickwork}), we obtain a much stronger bound, namely $\Omega(n)$, independent of the architecture.
This intriguing result follows from a different argument exploiting the structure of such circuits and the discreteness of the Clifford group, and is derived for relative-error state designs in the next section.

\begin{theorem}\label{theorem: clifford rel unitary}
    Let $\mathcal{E}$ be an approximate Clifford unitary $k$-design (with $k\geq 4$) with additive error $\varepsilon < 1/4$. Then the lightcone of at least one Clifford in the ensemble must cover a constant fraction of the qubits. 
    This implies $\mathcal{E}$ must have circuit depth $d = \Omega(n)$ in $1\mathrm{D}$ circuits and $d = \Omega(\log n)$ in all-to-all connected circuits.
\end{theorem}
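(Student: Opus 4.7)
The plan is to exploit the factorizing positive-semidefinite operator $Q \coloneqq R(T_4) = r(T_4)^{\otimes n} = 2^{-n}\sum_{P\in\mathcal{P}_n} P^{\otimes 4}$ in the Clifford $4$-commutant from~\eqref{eq:T4}, which crucially lies \emph{outside} the unitary commutant since $T_4\in\Sigma_{4,4}\setminus S_4$. Following the general recipe sketched in Section~3, I would introduce the locally perturbed operator
\[
Q' \coloneqq (X_1\otimes\one^{\otimes 3})\, Q\, (X_1\otimes\one^{\otimes 3}),
\]
with $X_1$ acting only on the first of the four copies of qubit~$1$. Since conjugation by a unitary preserves positivity and trace, $Q'$ is positive semidefinite with $\Tr(Q') = 8^n$, so $\rho \coloneqq Q'/8^n$ is a valid density operator on $4n$ qubits; this $\rho$ will serve as the distinguishing input state in the diamond-norm bound.

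The core of the argument is a short explicit computation. Using $U^{\otimes 4} Q U^{\otimes 4,\dagger} = Q$ for $U\in\Cl_n$, together with the fact that a Clifford conjugation $UPU^\dagger=\pm P'$ produces a sign that vanishes upon raising to the fourth power, one obtains
\[
U^{\otimes 4} Q' U^{\otimes 4,\dagger} \;=\; 2^{-n}\sum_{P\in\mathcal{P}_n}\epsilon(P,\,U X_1 U^\dagger)\, P^{\otimes 4},
\]
where $\epsilon(P,P')=+1$ if $[P,P']=0$ and $-1$ otherwise. The Hilbert--Schmidt orthogonality relation $\Tr(P^{\otimes 4}(P')^{\otimes 4}) = 2^{4n}\delta_{P,P'}$ then yields, for each qubit $i\in[n]$,
\[
\Tr\bigl((Z_i)^{\otimes 4}\, U^{\otimes 4} Q' U^{\otimes 4,\dagger}\bigr) \;=\; 8^n\, \epsilon(Z_i,\,U X_1 U^\dagger).
\]
The Pauli $U X_1 U^\dagger$ is supported precisely on the forward lightcone $L_U$ of qubit~$1$, and hence $\epsilon(Z_i, U X_1 U^\dagger)=+1$ for every $i\notin L_U$.

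To close the argument I would suppose, for contradiction, that $|L_U|\le\alpha n$ for every $U\in\mathcal{E}$. Then for each such $U$ at least $(1-\alpha)n$ of the $Z_i$ contribute $+1$ and at most $\alpha n$ contribute $-1$, so $n^{-1}\sum_i \epsilon(Z_i, U X_1 U^\dagger) \ge 1-2\alpha$. Averaging over $U\sim\mathcal{E}$ and then pigeonholing over~$i$ produces an index $i^{*}$ with $\Tr\bigl((Z_{i^{*}})^{\otimes 4}\Phi_\mathcal{E}(\rho)\bigr)\ge 1-2\alpha$. For the Haar comparison, the transitivity of Clifford conjugation on non-identity Paulis makes $U X_1 U^\dagger$ uniform on $\mathcal{P}_n\setminus\{\pm\one\}$, yielding $\Tr\bigl((Z_{i^{*}})^{\otimes 4}\Phi_{G_H}(\rho)\bigr) = -(4^n-1)^{-1}$. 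Since $\snorm{(Z_{i^{*}})^{\otimes 4}}=1$ and $\rho$ is a state, combining the two estimates forces $\dnorm{\Phi_\mathcal{E}-\Phi_{G_H}}\ge 1 - 2\alpha - (4^n-1)^{-1}$, which is incompatible with $\varepsilon < 1/4$ unless $\alpha > 3/8 - o(1)$. The stated $\Omega(n)$ and $\Omega(\log n)$ depth bounds then follow from the standard lightcone growth rates $|L_U|=O(d)$ in $1\mathrm{D}$ and $|L_U|=O(2^d)$ for all-to-all architectures. The main subtlety I anticipate is that different $U\in\mathcal{E}$ need not share a common qubit outside their respective lightcones; the averaging-plus-pigeonhole step above is precisely what converts this obstacle into a single effective observable. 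Extension to $k\ge 5$ is automatic because $R(T_4)\otimes\one^{\otimes(k-4)}$ remains in the $k$-fold Clifford commutant for all $k\ge 4$.
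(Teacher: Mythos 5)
Your proposal is correct and follows essentially the same route as the paper's proof: perturb the invariant operator $R(T_4)/8^n$ by a single-qubit Pauli on the first copy, probe with $Z_i^{\otimes 4}$, use the lightcone to force expectation close to $1$ for short circuits, and compare against the exponentially small Haar--Clifford value to lower-bound the diamond distance. The differences (an $X_1$ rather than $Z_1$ perturbation, an exact Pauli-orthogonality computation instead of the commutation/stabilizer bound, and pigeonholing over $i$ rather than sampling it uniformly) are cosmetic.
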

    
\begin{proof}
    To prove the theorem, we will show that any Clifford unitary with light-cone smaller than the system size can be distinguished from a Haar-random Clifford unitary using four parallel queries and constant-depth quantum and classical operations.
    
    Let $C$ be a short-depth Clifford unitary and $\chi_n = \frac{1}{8^n} R(T_4) = \frac{1}{16^n} \sum_{P \in P_n} P^{\otimes 4}$ be the mixed state with stabilizers $P^{\otimes 4}$ for all Pauli operators $P$.
    The state $\chi_n$ can be prepared in constant-depth since it is a tensor product of $n$ 4-qubit stabilizer states, $\chi_n = \chi_1^{\otimes n}$.
    Our distinguishing protocol proceeds as follows.
    We first perturb $\chi_n$ by a single-qubit Pauli operator $Z_1$ on the first qubit (taking some arbitrary but fixed order of the qubits) of the first copy.
    We then apply $C$ to all four copies in parallel.
    Finally we choose $i\in[n]$ uniformly at random and we measure the expectation value of the local tensor power Pauli operator $Z_i^{\otimes 4}$.

    This procedure yields an expectation value,
    \begin{equation*}
        \mathbb{E}_{i\sim [n]}\tr\left[ Z_i^{\otimes 4} C^{\otimes 4} (Z_1 \otimes \mathbbm{1}^{\otimes 3}) \chi_n (Z_1 \otimes \mathbbm{1}^{\otimes 3}) C^{\dagger,\otimes 4} \right]
        =
        \mathbb{E}_{i\sim [n]}\tr\left[ Z_i^{\otimes 4} (C Z_1 C^\dagger \otimes \mathbbm{1}^{\otimes 3}) \chi_n (C Z_1 C^\dagger \otimes \mathbbm{1}^{\otimes 3})  \right].
    \end{equation*}
    In the second expression, we use that $\chi_n = C^{\dagger, \otimes 4} \chi_n C^{\otimes 4}$ since $\chi_n$ commutes with every Clifford unitary $C$.
    We then cancel the Clifford unitaries on the final three copies.
  Now imagine all Cliffords  $C$ in $\mathcal{E}$ have a lightcone of size less than $n/4$. Then the resulting expectation value is larger than $1/2$ because $Z_i^{\otimes 4}$ and $C Z_1 C^\dagger$ commute whenever $i$ is outside of the lightcone and $Z_i^{\otimes 4}$ is a stabilizer of $\chi_n$ (which is true for a $3/4$'th fraction of the qubits). For the qubits inside the lightcone (a less than $1/4$'th fraction of the qubits) the expectation value cannot be more negative that $-1$, making the overall expectation value larger than $1/2$.
    Meanwhile, in the Haar-random Clifford unitary ensemble, the expectation value of the expectation value is exponentially small in $n$.
    This implies that $\lVert \Phi^{(4)}_\mathcal{E} - \Phi^{(4)}_{\text{Cl}_H} \rVert_\lozenge \geq 1/4$ for any Clifford ensemble $\mathcal{E}$ of Cliffords whose light-cone is of size less than $n/4$.
    This completes the proof.
\end{proof}

\subsection{Clifford state designs: relative error}\label{section: cliffords rel state}
\begin{theorem}\label{theorem:clifford_rel_state}
    Let $\mathcal{E}$ be an ensemble such that $\Phi_\mathcal{E} \left(\ketbra{0^n}\tn{4}\right)$ is a relative-error $\epsilon$-approximate Clifford state $4$-design of depth $d$ for $\epsilon < 1$. The following 2 statements hold:
    \begin{enumerate}
        \item If the architecture has diameter $\dia$ and $\mathcal{E}$ is locally independent (Def.~\ref{def:local}), then $d = \Omega(\dia)$.
        \item If $\mathcal{E}$ is a generalized random brickwork (Def.~\ref{def:brickwork}), then $d = \Omega(n)$.
    \end{enumerate}
\end{theorem}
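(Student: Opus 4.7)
The plan is to exploit the extra element $R(T_4) = 2^{-n}\sum_{P \in \Pa_n} P\tn{4}$ of the Clifford $4$-th order commutant (see \eqref{eq:T4}), which distinguishes Clifford designs from unitary designs at $k=4$. Let $\Pi_V\tn{n} \coloneqq R(T_4)/2^n$ denote the associated rank-$4^n$ orthogonal projector, and, for a subset $S \subseteq [n]$, define the PSD witness
\begin{equation*}
M_S \;\coloneqq\; (Z_S \otimes \one\tn{3})\,\Pi_V\tn{n}\,(Z_S \otimes \one\tn{3}), \qquad Z_S \coloneqq \prod_{i \in S} Z_i.
\end{equation*}
Using $[\Pi_V\tn{n}, C\tn{4}] = 0$ and the direct computation that $\bra{0^n}\tn{4}(P\otimes\one\tn{3})\Pi_V\tn{n}(P\otimes\one\tn{3})\ket{0^n}\tn{4}$ equals $2^{-n}$ when $P\in\pm\{I,Z\}\tn{n}$ and $0$ otherwise, one obtains the Born-probability-to-Pauli-expectation reduction
\begin{equation*}
\Tr\!\bigl[M_S\,\Phi_\mathcal{E}^{(4)}(\ketbra{0^n}\tn{4})\bigr]
\;=\; 2^{-n}\cdot \Pr_{C \sim \mathcal{E}}\!\bigl[C^\dagger Z_S C \in \pm\{I,Z\}\tn{n}\bigr].
\end{equation*}
For Haar Clifford this probability equals $(2^n-1)/(4^n-1)\approx 2^{-n}$ for every non-empty $S$, so the relative-error condition pins $\Pr_\mathcal{E}[C^\dagger Z_S C \in \pm Z\text{-type}]$ into $[(1-\epsilon),(1+\epsilon)]\cdot(2^n-1)/(4^n-1)$ for each such $S$.

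For Part~1, I would instantiate the witness for $S = \{a\}$, $\{b\}$ and $\{a,b\}$, where $a,b$ are qubits at graph distance exactly $\dia$. If $d < \dia/4$, the backward lightcones $L_a, L_b$ are contained in balls of radius $d$ around $a,b$ and are disjoint; in fact every $x\in L_a$, $y \in L_b$ satisfies $\dist(x,y)\geq \dia - 2d > \dia/2$, so local independence (\Cref{def:local}) makes $P_a \coloneqq C^\dagger Z_a C$ and $P_b \coloneqq C^\dagger Z_b C$ independent random variables on $L_a$ and $L_b$. By disjoint support, $P_aP_b$ is $\pm Z$-type iff each factor is $Z$-type on its lightcone, and independence combined with the relative-error constraints on $M_{\{a\}}, M_{\{b\}}$ gives $\Pr_\mathcal{E}[P_aP_b \in \pm Z] = \Pr_\mathcal{E}[P_a \in Z]\cdot \Pr_\mathcal{E}[P_b \in Z] \approx (1\pm\epsilon)^2 \cdot 2^{-2n}$. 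On the other hand, relative error on $M_{\{a,b\}}$ demands $\Pr_\mathcal{E}[P_aP_b \in \pm Z] \approx (1\pm\epsilon)\cdot 2^{-n}$. The ratio between the two requirements is of order $2^{-n}$, which cannot be accommodated for $\epsilon<1$ and $n$ large, yielding $d = \Omega(\dia)$.

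For Part~2, lightcones may grow arbitrarily fast in a general architecture, so the Part~1 argument breaks down. Instead, I would exploit the \emph{non-interacting brick} event: a uniformly random Clifford on an $r$-qubit brick factorizes as a $1$-qubit Clifford on qubit $a$ tensored with an $(r-1)$-qubit Clifford with constant probability $p_0 \coloneqq |\Cl_1|\cdot|\Cl_{r-1}|/|\Cl_r|>0$. With probability at least $p_0^d$, \emph{every} brick along qubit $a$ factorizes; conditionally, $C^\dagger Z_a C$ stays a single-qubit Pauli on $a$, and the composition of $d$ independent uniform $1$-qubit Cliffords leaves its distribution uniform over $\{\pm X_a,\pm Y_a,\pm Z_a\}$. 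Hence $C^\dagger Z_a C \in \pm Z_a$ with conditional probability $1/3$, and unconditionally
\[
\Pr_\mathcal{E}\!\bigl[C^\dagger Z_a C \in \pm Z\text{-type}\bigr] \;\geq\; p_0^d/3.
\]
Combined with the relative-error upper bound $(1+\epsilon)\cdot (2^n-1)/(4^n-1)$, this forces $p_0^d \leq 3(1+\epsilon)\,2^{-n}$, i.e.\ $d \geq (n - O(1))/\log(1/p_0) = \Omega(n)$, uniformly over architectures.

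The main delicate point is Part~1: the contradiction rests on simultaneously pinning the singleton probabilities $\Pr_\mathcal{E}[P_a \in Z]$, $\Pr_\mathcal{E}[P_b \in Z]$ \emph{and} the joint probability $\Pr_\mathcal{E}[P_aP_b \in Z]$ to their Haar values from both sides, which is exactly what two-sided relative error affords. A one-sided (e.g.\ additive) bound would let a pathological ensemble systematically suppress the singleton probabilities and so evade the product-based contradiction. Part~2, by contrast, only needs a one-sided lower bound, which is handed to us gratis by the non-interacting-brick event; this is the reason its conclusion is architecture-independent but restricted to generalized random brickworks.
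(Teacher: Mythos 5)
Your proposal is correct and is essentially the paper's own argument: you use the same commutant element $R(T_4)$ conjugated by $Z$-operators as a PSD witness, reduce its expectation to the probability that the Clifford-conjugated Pauli is $Z$-type (Haar value $1/(2^n+1)$), derive Part~1 from the disjoint-lightcone factorization plus local independence versus the two-sided relative-error pinning, and derive Part~2 from the constant per-layer probability that the brick containing the marked qubit factorizes, giving $p^d$ against the $2^{-\Theta(n)}$ Haar value. The only differences are cosmetic (your $1/3$ versus the paper's $1/6$ for the final single-qubit step, and phrasing via Pauli-conjugation probabilities rather than $\bra{00}U^{\otimes 2}Z_i^{\otimes 2}U^{\dagger\otimes 2}\ket{00}$).
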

\begin{proof}
    Let $U$ be a Clifford unitary and consider $R(T_4)$ as in Eq.~\eqref{eq:T4}.
    We find:
    \begin{align}
        \tr&\left[(Z_i \otimes \one\tn{3}) R(T_4) (Z_i \otimes \one\tn{3}) U\tn{4}(\ketbra{0000})U\tnd{4}\right] \\
        &= \sandwich{0000}{(U Z_i U^\dagger \otimes \one^{\otimes 3}) R(T_4) (U Z_i U^\dagger \otimes \one^{\otimes 3}) }{0000} \\
        &= \frac{1}{2^n} \sandwich{0}{(U Z_i U^\dagger) (\one+Z)^{\otimes n} (U Z_i U^\dagger) }{0} \\
        &= \bra{00} U\tn{2} Z_i\tn{2} U\tnd{2} \ket{00}. \label{eq:Clifford rel state Pauli EV}
    \end{align}
    In step two, we performed the partial contraction on the last three systems and used that $\sandwich{0}{P}{0} = 1$ if $P$ is a diagonal Pauli and zero else.
    In the last step, we resumed the stabilizers of the all-zero state.
    If $U$ is sampled from the Haar measure on the Clifford group, we get
    \begin{equation}\label{eqn: clifford random small}
        \mathbb{E}_{U\sim\text{Cl}_H} \bra{00} U\tn{2} Z_i\tn{2} U\tnd{2} \ket{00} 
        = \frac{1}{2^{n}(2^n + 1)}\tr\left[Z_i\tn{2} (\one + \F) \right]
        = \frac{1}{2^n + 1}.
    \end{equation}
    For sampling from an ensemble $\mathcal{E}$ of Clifford circuits of depth $d$, we find:
    \begin{equation}\label{eqn: Clifford lightcone state}
    \begin{split}
        \E_{U \sim \mathcal{E}}\bra{00} U\tn{2} Z_i\tn{2} U\tnd{2} \ket{00} &= \E_{U \sim \mathcal{E}}\bra{00}_{l_i} U_{l_i}\tn{2} Z_i\tn{2} U_{l_i}\tnd{2} \ket{00}_{l_i},
    \end{split}
    \end{equation}
    where the subscript $l_i$ indicates that the unitary only acts on the qubits inside the lightcone starting from qubit $i$. As $\mathcal{E}$ is a 4-design, Eq.~\eqref{eqn: Clifford lightcone state} is within relative error $\epsilon$ to the Haar random value $\frac{1}{2^n + 1}$.
    
    1.~Choose $i$ and $j$ as two qubits that have distance $\dia$ from each other. If $\mathcal{E}$ has depth $d < \dia/4$, the following equations need to hold:
    \begin{equation}\label{eqn: Clifford rel state contradiction}
    \begin{split}
        \E_{U \sim \mathcal{E}}\bra{00}_{l_i} U_{l_i}\tn{2} Z_i\tn{2} U_{l_i}\tnd{2} \ket{00}_{l_i}
        &\leq (1+\epsilon) \frac{1}{2^n + 1}\\
        \E_{U \sim \mathcal{E}}\bra{00}_{l_j} U_{l_j}\tn{2} Z_j\tn{2} U_{l_j}\tnd{2} \ket{00}_{l_j}
        &\leq (1+\epsilon) \frac{1}{2^n + 1}\\
        \E_{U \sim \mathcal{E}}\bra{00} U\tn{2} (Z_i Z_j)\tn{2} U\tnd{2} \ket{00}
        &= \E_{U \sim \mathcal{E}} \bra{00}_{l_i} U_{l_i}\tn{2} Z_i\tn{2} U_{l_i}\tnd{2} \ket{00}_{l_i} \\ 
        & \qquad \times \bra{00}_{l_j} U_{l_j}\tn{2} Z_j\tn{2} U_{l_j}\tnd{2} \ket{00}_{l_j} \geq (1-\epsilon) \frac{1}{2^n + 1},
    \end{split}
    \end{equation}
    where we used that $d < \dia/4$, so that the lightcones starting from $i$ and $j$ do not intersect. We directly see that these equations contradict the locally independent assumption. This proves point~1.
    
    2.~Let $\mathcal{E}$ be a generalized random brickwork (see Def.~\ref{def:brickwork}) of depth $d$.
    By construction Eq.~\eqref{eq:Clifford rel state Pauli EV} is non-negative, hence we can lower bound its average by the probability that $\bra{00} U\tn{2} Z_1\tn{2} U\tnd{2} \ket{00} = 1$ over all circuit instances $U$.
    This probability can be equivalently phrased as the probability that $U Z_1 U^\dagger$ is a diagonal Pauli operator.
    Here, we lower bound this probability exploiting that the local Clifford gates in every layer are uniformly distributed -- we however note that such an argument can likely be made for much more general distributions (cf.~the comment below Def.~\ref{def:brickwork}).
    Because each brick has constant-size support, there is a constant probability $p$ that the first brick maps $Z_1$ to a weight-one Pauli operator (not necessarily supported on the first qubit anymore). 
    More precisely, if the support is of size $r$, then $p=3r/(4^r-1)$.
    The probability that the bricks in the remaining layers preserve the weight is also precisely $p$.
    Since the ensemble is invariant under local Cliffords, the probability that the final Pauli operator after $d$ layers is diagonal is $p^d/3$.
    Hence, we have $\bra{00} U\tn{2} Z_1\tn{2} U\tnd{2} \ket{00} = 1$ with probability is at least $p^d/3$, resulting in the lower bound
    \begin{equation}
    \begin{split}
        \EE_{U\sim\mathcal E}\bra{00} U\tn{2} Z_i\tn{2} U\tnd{2} \ket{00}
        \geq \frac{1}{2^{\Or(d)}}.
    \end{split}
    \end{equation}
    This, together with Eq.~\eqref{eqn: clifford random small}, implies a lower bound of $d = \Omega(n)$ for $\mathcal{E}$ to be a relative-error Clifford state 4-design.
\end{proof}
We believe the assumption of local independence in the first point of the theorem can be dropped, but we do not know how to prove this. In particular, our proof cannot easily be generalized. Define for instance $\mathcal{E}$ as the ensemble that draws~$\one$ with probability $\frac{1}{2^n+1}$, and draws $H\tn{n}$ with probability $1 - \frac{1}{2^n+1}$. Then all three values in Eq.~\eqref{eqn: Clifford rel state contradiction} are exactly $\frac{1}{2^n(2^n+1)}$, the Haar random value. The ensemble $\mathcal{E}$ is however not locally independent, and it does not define a Clifford state design.

\subsection{Clifford state designs: additive error}
Above we have shown that the Clifford group does not allow for short-depth state $k$-designs (for $k>3$) in the relative-error model. However, somewhat surprisingly, it turns out to be possible to construct short-depth state $k$-designs for $k=4$ and $k=5$ with small (i.e. $\Or(2^{-n})$) additive error. To prove this we need the following lemma that shows that for $k\leq 5$ Clifford unitaries applied to a PPT input state (for all possible partial transposes of $n$ qubits) generate $U(2^n)$ state $k$-designs with small additive error:
\begin{lemma}\label{lem:clifford_state_ppt}
    Consider a state $\rho$ on $\mathcal{H}^{\otimes k}$. Suppose that $\rho$ obeys the PPT condition on each copy of $\mathcal{H}$, i.e.~$\rho^{\Gamma_i} \geq 0$ for all $i = 1,\ldots,k$, where $\Gamma_i$ denotes the partial transpose on the $i$-th copy.
    Then, if $k\leq 5$, the Clifford twirl of $\rho$ is equal to the Haar unitary twirl of $\rho$ up to additive error:
\begin{equation}
\norm{\Phi_{\mathrm{Cl}_{H}}^{(k)}\left(\rho\right)-\Phi_{U_{H}}^{(k)}\left(\rho\right)}_1
\leq \Or(2^{-n}).
\end{equation}
\end{lemma}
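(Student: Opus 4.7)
For $k \le 3$ the Clifford group is already an exact unitary $k$-design~\cite{webb2015clifford, zhu_multiqubit_2017}, so $\Phi_{\mathrm{Cl}_H}^{(k)}(\rho) = \Phi_{U_H}^{(k)}(\rho)$ and the bound is trivial. I would thus focus on $k \in \{4, 5\}$. Since $\linspan\{R(\sigma) : \sigma \in S_k\}$ is a Hilbert--Schmidt subspace of the Clifford commutant $\linspan\{R(T) : T \in \Sigma_{k,k}\}$, the difference
\[
\Delta(\rho) \coloneqq \Phi_{\mathrm{Cl}_H}^{(k)}(\rho) - \Phi_{U_H}^{(k)}(\rho)
\]
is supported on the ``extra'' commutant elements $R(T)$ with $T \in \Sigma_{k,k} \setminus S_k$. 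For $k=4$ the prototype is $R(T_4) = 2^{-n}\sum_{P \in \Pa_n} P^{\otimes 4}$ and all other extras arise by permuting copies; for $k=5$ a mildly richer but still explicitly classifiable set of extras enters.

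The next step would be to bound the Hilbert--Schmidt overlap $\Tr(R(T)^\dagger \rho)$ for every extra $T$ using the PPT hypothesis. Each $r(T)$ is proportional to a stabiliser-code projector on $k$ copies associated to an explicit abelian Pauli group (for example $\{I, X^{\otimes 4}, Y^{\otimes 4}, Z^{\otimes 4}\}$ for $T_4$). Applying the partial-transpose identity $\Tr(R(T)^\dagger \rho) = \Tr(R(T)^{\dagger, \Gamma_i} \rho^{\Gamma_i})$ and using that $\rho^{\Gamma_i} \succeq 0$ and $\Tr(\rho^{\Gamma_i}) = 1$, the overlap is bounded by $\|R(T)^{\Gamma_i}\|_\infty$. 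Since partial transpose maps $r(T)$ to another rescaled stabiliser projector of the same rank---differing only in the signs of the $Y$-containing Pauli terms---this operator norm is explicit and equals $\Or(2^n)$.

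Finally, these overlaps are inserted into the Weingarten decomposition $\Delta(\rho) = \sum_{T, T'} M(T, T') \Tr(R(T)^\dagger \rho) R(T')$, where $M$ is the Clifford Weingarten function minus the unitary one (the latter extended by zero off the permutation block). The Gram matrix $G(T,T') = \Tr(R(T)^\dagger R(T'))$ factorises across qubits as the $n$-th Hadamard power of a fixed single-qubit matrix $g$; a direct calculation shows that the entries of $g$ connecting an extra to a permutation are strictly smaller than the diagonal, so taking the Hadamard $n$-th power and inverting suppresses every contribution involving an extra by an additional factor $2^{-n}$ relative to the unitary Weingarten $\Or(2^{-nk})$. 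After carefully accounting for extra-extra and cross-block contributions, and combining with the $\Or(2^n)$ PPT overlaps and the $\Or(2^{n(k-1)})$ trace norms of the extra $R(T')$ (strictly smaller than the $\Or(2^{nk})$ for permutations because single-qubit $r(T')$ has rank less than $2^k$), the exponents cancel to yield $\|\Delta(\rho)\|_1 = \Or(2^{-n})$. The main obstacle will be the quantitative Weingarten suppression uniformly over all extras, especially at $k = 5$. This is also why the argument breaks at $k \ge 6$: the new elements of $\Sigma_{k,k} \setminus S_k$ no longer exhibit this single-qubit Gram gap and correspond precisely to the Bell-difference-sampling operators that efficiently distinguish stabiliser from Haar-random states.
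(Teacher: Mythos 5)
There is a genuine gap, and it sits exactly at the point where the PPT hypothesis has to do its work. Your bound on the overlaps, $\abs{\Tr(R(T)^\dagger\rho)}\leq \norm{R(T)^{\Gamma_i}}_\infty = \Or(2^n)$, is no stronger than the trivial H\"older bound $\abs{\Tr(R(T)\rho)}\leq\norm{R(T)}_\infty=2^n$ that holds without any PPT assumption, so in your argument the PPT condition buys nothing. Moreover, your characterization of the partial transpose is incorrect: on a single qubit, $r(T_4)=\tfrac12(\one^{\otimes4}+X^{\otimes4}+Y^{\otimes4}+Z^{\otimes4})$ is twice a rank-$4$ stabilizer projector, but transposing one copy flips the sign of the $Y^{\otimes4}$ term and the resulting operator is \emph{not} a rescaled stabilizer projector of the same rank --- it squares to the identity, i.e.\ it is a Hermitian unitary. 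This is precisely the fact the paper exploits: for every extra $T\in\Sigma_{k,k}\setminus S_k$ (written as $R(T)=R(T_k)R(\pi)$) one can choose a copy $i$ with $\norm{(R(T_k)R(\pi))^{\Gamma_i}}_\infty=1$, whence the PPT condition gives $\abs{\Tr(R(T)\rho)}\leq 1$ rather than $2^n$. That exponential improvement is what makes the exponents close.

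Your attempt to compensate with a uniform extra $2^{-n}$ suppression of every Weingarten entry ``involving an extra'' does not hold on the diagonal of the extra block: the Gram matrix is approximately $2^{nk}$ times the identity with off-diagonal corrections $\Or(2^{n(k-1)})$, so $\Wg_{\Cl}(T,T)\approx 2^{-nk}$ for extras $T$ as well, with no additional factor of $2^{-n}$ (the suppression you describe is only present in the cross entries between extras and permutations). Consequently, with your overlap bound the diagonal extra--extra contribution scales as $2^{-nk}\cdot 2^{n}\cdot 2^{n(k-1)}=\Or(1)$, and the trace-norm difference does not go to zero; with the correct overlap bound $\abs{\Tr(R(T)\rho)}\leq 1$ it becomes $\Or(2^{-n})$, which is how the paper's proof closes. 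A secondary point: your opening claim that the difference of twirls is supported on the extras glosses over the permutation--permutation block, which contributes through $\abs{\Wg_{\Cl}(\pi,\pi')-\Wg_{\U}(\pi,\pi')}=\Or(2^{-n(k+1)})$ (times $\norm{R(\pi')}_1=2^{nk}$); this estimate is needed and should be stated, though it is a standard Weingarten calculation and not the main obstruction.
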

\begin{proof}
The lemma is trivial for $k\leq 3$. For $k=4,5$, the set of stochastic Lagrangian subspaces is given by $\Sigma_{k,k}=S_k \cup \overline{S_k}$ where $\overline{S_k}=\Sigma_{k,k}\setminus S_k$ denotes the complement of $S_k$. Hence, the commutant of the Clifford group is spanned by the operators 
\begin{equation*}
   \{R(T)\}_{T \in \Sigma_{k,k}}= \{R(\pi)\: |\: \pi \in S_k\}\cup \{R(T) \; |\; T\in \overline{S_k} \}.
\end{equation*}
 Specifically for $k=4,5$, all $T \in \overline{S_k}$, $R(T)$ can be written as $R(T)=R(T_k) R(\pi)$ for a suitable permutation $\pi \in S_k$ with $T_4$ defined in Eq. \ref{eq:T4} and $T_5 = T_4\otimes I$. The $R(T)$ for $T\in \overline{S_k}$ are not unitary, and we have $\norm{R(T)}_1 = 2^{n(k-1)}$. We will now show that for each of the operators $R(T_k)R(\pi)$ there is a partial transpose such that $\norm{(R(T_k)R(\pi))^{\Gamma_{i}}}=1$. Note that this is certainly true for $R(T_k)$. It can be checked by direct computation that $R(T_k)^{\Gamma_i}$ for $i\in\{1,2,3,4\}$ is unitary. Now choose $i\in \{1,2,3,4\}$ such that $\pi(i)\neq 5$ (which can always be done) and apply $\Gamma_i$. Up to a permutation $\sigma$ (which does not increase the operator norm of $R(T_k)R(\pi))^{\Gamma_{i}}$ we have
\begin{equation}
    (R(T_k)R(\pi))^{T_{i}}R(\sigma) = 2^{n}\sum_{P \in \Pa_{n}} (I\otimes)^{\delta_{k=5}} P\tn{2}\otimes \big(I\otimes P \dens{\Phi}I\otimes P\ct\big)
\end{equation}
where $\Phi$ is the maximally entangled state. This operator factorizes over $n$ and can be checked numerically to have operator norm $1$. As a consequence, we can show that for $k\leq 5$,
\begin{equation}
    \left|\tr\left(R(T)\rho\right)\right| \leq 1, \quad \forall\, T\in \Sigma_{k,k}. \label{eq:r-t-ppt-bound}
\end{equation}
To see this, note that using Hölder's inequality, we have
\begin{equation}
    \left|\tr\left(R(T)\rho\right)\right|=\left|\tr\left(R(T)^{\Gamma_{i}}\rho^{\Gamma_{i}}\right)\right|\leq\left\Vert R(T)^{\Gamma_{i}}\right\Vert _{\infty}\left\Vert \rho^{\Gamma_{i}}\right\Vert _{1}=1.
\end{equation}

\noindent Now, we can bound the distance of the twirls directly, as follows

\begin{align}
&\left\Vert \Phi_{\mathrm{Cl}_{H}}^{(k)}\left(\rho\right)-\Phi_{U_{H}}^{(k)}\left(\rho\right)\right\Vert _{1} \\
=& \norm{\sum_{T,T'\in\Sigma_{k,k}}\Wg_{\mathrm{Cl}}^{\left(k\right)}\left(T,T'\right)\tr\left(R(T)\rho\right)R\left(T'\right)-\sum_{T,T'\in S_{k}}\Wg_{\mathrm{U}}^{\left(k\right)}\left(T,T'\right)\tr\left(R(T)\rho\right)R\left(T'\right) }_1 \\
  \leq&\sum_{T,T'\in\Sigma_{k,k}}\left|\Wg_{\mathrm{Cl}}^{\left(k\right)}\left(T,T'\right)-\Wg_{\mathrm{U}}^{\left(k\right)}\left(T,T'\right)\right|\left|\tr\left(R(T)\rho\right)\right|\left\Vert R\left(T'\right)\right\Vert _{1}\\
  \leq&\sum_{T,T'\in\Sigma_{k,k}}\left|\Wg_{\mathrm{Cl}}^{\left(k\right)}\left(T,T'\right)-\Wg_{\mathrm{U}}^{\left(k\right)}\left(T,T'\right)\right|\left\Vert R\left(T'\right)\right\Vert _{1} \label{eq:clifford_state_ppt_weingarten_bound}
\end{align}
where we slightly abuse notation by setting $\Wg_{\mathrm{U}}^{\left(k\right)}\left(T,T'\right)=0$
if $T\not\in S_{k}$ or $T'\not\in S_{k}$. In the last step, we used Eq. \ref{eq:r-t-ppt-bound}. Next, we split the sum over $\Sigma_{k,k}$.

In particular, we can upper bound this sum by four terms by choosing $T, T'$ in $S_k$ or $\overline{S_k}$. For visual help, we will write $\pi\in S_k$ and $T \in \overline{S_k}$. In order we have
\begin{equation}
(1) \leq \sum_{\pi,\pi' \in S_k} 2^{nk} |\Wg_{\Cl}^{(k)} (\pi, \pi') - \Wg_{\U}^{(k)}(\pi, \pi') | \leq \Or (2^{-n})
\end{equation}
where we used that $\norm{R(\pi)}_1 = 2^{nk}$ for all $\pi\in S_k$ and $|\Wg_{\Cl}^{(k)}(\pi,\pi') - \Wg_{\U}^{(k)}(\pi,\pi')|\leq \Or(2^{-n(k+1)})$ (via a straightforward Weingarten calculation, see also~Ref.~\cite{helsen_thrifty_2023}).
Similarly we can see that 
\begin{equation}
(2) \leq \sum_{\pi \in S_k, T'\in \overline{S_k}} |\Wg_{\Cl}^{(k)}(\pi,T')|  \norm{R(T')}_1 = \Or(2^{-n}),
\end{equation}
since $|\Wg_{\Cl}^{(k)}(\pi,T')| \leq 2^{-(k+1)n}$ and $\norm{R(T')}_1\leq 2^{n(k-1)}$. Moving on we have
\begin{equation}
(3)\leq  \sum_{T \in \overline{S_k},\, \pi'\in {S}_k} |\Wg_{\Cl}^{(k)}(T,\pi')|  \norm{R(\pi')}_1 = \Or(2^{-n}),
\end{equation}
since again $|\Wg_{\Cl}^{(k)}(T,\pi')| \leq 2^{-(k+1)n}$. Finally we have 
\begin{equation}
(4)\leq  \sum_{T \in \overline{S_k}, \, T'\in \overline{S_k}} |\Wg_{\Cl}^{(k)}(T,T')| \norm{R(T')}_1 \leq \Or(2^{-n})
\end{equation}
which follows from $\norm{R(T')}_1 =2^{n(k-1)}$ and $|\Wg_{\Cl}^{(k)}(T,T')| \leq O(2^{-nk})$. This completes the proof. 
\end{proof}

We note that Eq.~\eqref{eq:clifford_state_ppt_weingarten_bound} can be computed exactly using Weingarten calculus.
This allows to bound the constant in Lemma \ref{lem:clifford_state_ppt} by 1/4 (for $n > 2$) \cite{turkeshi_PC_2025}.

From this construction and the gluing lemma due to \cite{schuster_random_2025} we get in a straightforward way that the superblock construction applied to the all-zero initial state is an additive error state design.
\begin{corollary}\label{cor:clifford add state}
Consider the two-layer superblock brickwork circuit composed of uniformly random Clifford circuits  acting on $2\xi$ qubits, applied to the all-zero initial state.
There is a constant $C\geq1$ such that for any $\epsilon = \Omega(2^{-n})$, $k=4,5$, and $\xi \geq \log_2(C n/\epsilon)$, this ensemble forms an $\epsilon$-approximate additive-error Clifford state $k$-design with depth $\Or(\xi)=\Or(\log(n/\epsilon))$.
\end{corollary}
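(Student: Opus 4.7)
\emph{Proof plan.} Write $\rho_0 = \ketbra{0^n}\tn{k}$ and let $\sigma_\Cl = \Phi_{\Cl_H}(\rho_0)$ be the \emph{Clifford} moment state that the corollary targets, and $\sigma_U = \Phi_{U_H}(\rho_0)$ the Haar unitary moment state. Introduce the unitary analogue $\mathcal{E}_U$ of $\mathcal{E}$: the same two-layer superblock layout, but with each Haar-Clifford block on $2\xi$ qubits replaced by a Haar-random unitary on the same qubits. Triangulating,
\begin{equation*}
    \|\Phi_\mathcal{E}(\rho_0) - \sigma_\Cl\|_1
    \leq
    \underbrace{\|\Phi_\mathcal{E}(\rho_0) - \Phi_{\mathcal{E}_U}(\rho_0)\|_1}_{(\mathrm{I})}
    +
    \underbrace{\|\Phi_{\mathcal{E}_U}(\rho_0) - \sigma_U\|_1}_{(\mathrm{II})}
    +
    \underbrace{\|\sigma_U - \sigma_\Cl\|_1}_{(\mathrm{III})}.
\end{equation*}
Term $(\mathrm{II})$ is $\leq \epsilon/4$ by the gluing lemma (Lemma~\ref{lem:gluing}) applied with relative error $\epsilon/8$ and $\xi \geq \log_2(C_0 nk/\epsilon)$, since on the trace-$1$ state $\rho_0$ a relative-error bound converts to an additive one. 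Term $(\mathrm{III})$ is $\Or(2^{-n}) \leq \epsilon/4$ by Lemma~\ref{lem:clifford_state_ppt}, which applies because $\rho_0$ is a product state and hence PPT on every copy and $k \leq 5$; this is the step that uses $\epsilon = \Omega(2^{-n})$ and also the step that pulls the target from the Haar moment state to the \emph{Clifford} moment state.

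For term $(\mathrm{I})$ I would use a two-layer hybrid
\begin{equation*}
    \Phi_\mathcal{E} - \Phi_{\mathcal{E}_U} = \Phi^{(2)}_\Cl \circ (\Phi^{(1)}_\Cl - \Phi^{(1)}_U) + (\Phi^{(2)}_\Cl - \Phi^{(2)}_U) \circ \Phi^{(1)}_U,
\end{equation*}
where $\Phi^{(\ell)}_G = \bigotimes_j \Phi_{G, B_j^{(\ell)}}$ denotes the tensor product of local Haar-$G$ twirls on the superblocks of layer $\ell$. Applying this to $\rho_0$ and using trace-norm contractivity of the CPTP map $\Phi^{(2)}_\Cl$ reduces $(\mathrm{I})$ to a first-layer and a second-layer summand. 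The first-layer summand is easy: $\rho_0$ factorizes across the first-layer superblocks, so a standard tensor-product telescoping combined with Lemma~\ref{lem:clifford_state_ppt} per $2\xi$-qubit block (on the trivially-PPT input $\ketbra{0^{2\xi}}\tn{k}$) bounds it by $\Or(n\cdot 2^{-2\xi}/\xi) \leq \epsilon/8$ once $\xi = \log_2(Cn/\epsilon)$ with $C$ large enough.

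The main obstacle is the second-layer summand $\|(\Phi^{(2)}_\Cl - \Phi^{(2)}_U)(\Phi^{(1)}_U(\rho_0))\|_1$, because $\Phi^{(1)}_U(\rho_0) = \bigotimes_A \sigma_{U,A}$ is \emph{not} a product across the staggered second-layer superblocks $B_j$. I would telescope within layer $2$, swapping one $\Phi_{U,B_j}$ for $\Phi_{\Cl,B_j}$ at a time, giving steps $\|(\Phi_{\Cl,B_j} - \Phi_{U,B_j})(\tau)\|_1$ for intermediate states $\tau$ obtained from $\Phi^{(1)}_U(\rho_0)$ by further mixed-unitary twirls on other $B_{j'}$. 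The key structural input is that $\Phi^{(1)}_U(\rho_0)$ is PPT on every full copy (each factor $P_{\mathrm{sym},A}/\binom{D_A+k-1}{k} = \int \ketbra{\psi_A}^{\otimes k}\,d\psi_A$ has positive partial transpose in any copy, by positivity of the integrand), and mixed-unitary channels preserve PPT on each full copy, so every $\tau$ in the telescope is PPT on full copies. Since Lemma~\ref{lem:clifford_state_ppt} as stated assumes PPT on the twirled system (not the ambient one), I would establish a local version by expanding
\begin{equation*}
    (\Phi_{\Cl,B_j} - \Phi_{U,B_j})(\tau) = \sum_{T,T'\in\Sigma_{k,k}} (\Wg_\Cl - \Wg_U)(T,T')\, R(T')_{B_j} \otimes A_T,
    \qquad A_T = \Tr_{B_j}[(R(T)^\dagger_{B_j} \otimes \one_{B_j^c})\tau],
\end{equation*}
and bounding $\|A_T\|_1$ via the variational formula $\|A_T\|_1 = \sup_{\|Y\|_\infty\leq 1}|\Tr[(R(T)^\dagger_{B_j}\otimes Y_{B_j^c})\tau]|$ together with the identity $\Tr[XZ] = \Tr[X^{\Gamma_i}Z^{\Gamma_i}]$ on full copy $i$. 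For $T \in S_k$, unitarity of $R(T)$ gives $\|A_T\|_1 \leq 1$ directly; for $T \in \overline{S_k}$ with $k \in \{4,5\}$, the factorizing 1-qubit analysis already used in the proof of Lemma~\ref{lem:clifford_state_ppt} shows $\|R(T)^{\Gamma_i^{B_j}}_{B_j}\|_\infty = 1$, and Hölder combined with PPT of $\tau$ on full copies controls the inner product. Combining with $|\Wg_\Cl - \Wg_U|(T,T') = \Or(2^{-(k+1)\cdot 2\xi})$ and $\|R(T')_{B_j}\|_1 \leq 2^{k\cdot 2\xi}$ yields a per-superblock bound (with slack absorbed by taking $C$ large); summing over the $\Or(n/\xi)$ second-layer superblocks gives $\leq \epsilon/8$. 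Collecting, $(\mathrm{I}) \leq \epsilon/4$ and altogether $\|\Phi_\mathcal{E}(\rho_0) - \sigma_\Cl\|_1 \leq \epsilon$, as required.
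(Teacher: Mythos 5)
Your proposal follows essentially the same route as the paper's proof: replace the Haar-random Cliffords in each superblock layer by Haar-random unitaries via Lemma~\ref{lem:clifford_state_ppt}, invoke the gluing lemma for the all-unitary superblock circuit, and convert the global Haar moment state back to the Clifford one at the end. Your reorganization into the three terms $(\mathrm{I})$--$(\mathrm{III})$ plus the two-layer hybrid for $(\mathrm{I})$ is just a more explicit bookkeeping of the paper's sequential replacement, and terms $(\mathrm{II})$, $(\mathrm{III})$ and the first-layer summand are handled exactly as in the paper. You correctly identify the one genuinely delicate point, which the paper dispatches in a single sentence (``since the state after this first layer is still separable\ldots''): the second-layer superblocks straddle two first-layer blocks, so the intermediate states are not products across a second-layer block and its complement, and Lemma~\ref{lem:clifford_state_ppt} as stated has no ancilla register.

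However, the specific mechanism you propose to close that step does not work as written. In the bound on $\|A_T\|_1 = \sup_{\|Y\|_\infty\leq 1}|\Tr[(R(T)^\dagger_{B_j}\otimes Y_{B_j^c})\tau]|$ you want to combine the identity $\Tr[XZ]=\Tr[X^{\Gamma}Z^{\Gamma}]$ with the PPT property of $\tau$ on full copies. But the two requirements pull in opposite directions: if $\Gamma=\Gamma_i^{B_j}$ (transpose only the $B_j$-part of copy $i$, so that $\|R(T)^{\Gamma_i^{B_j}}\|_\infty=1$ and $Y$ is untouched), then you need $\|\tau^{\Gamma_i^{B_j}}\|_1\leq 1$, which does \emph{not} follow from PPT on full copies --- for $\tau$ a mixture of $\ketbra{\phi}^{\otimes k}$ with $\phi$ entangled across the cut $B_j|B_j^c$, $\|\tau^{\Gamma_i^{B_j}}\|_1$ can be as large as $2^{2\xi}$. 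If instead $\Gamma=\Gamma_i$ on the full copy, then $\tau^{\Gamma_i}\geq 0$ is fine, but $Y$ acquires a partial transpose on $B_j^c\cap(\text{copy }i)$ and $\|Y^{\Gamma}\|_\infty$ is not controlled by $\|Y\|_\infty$. The fix is to use the stronger structural fact you already observe but do not exploit: every intermediate state $\tau$ in your telescope is a convex mixture of $k$-fold tensor powers $\ketbra{\phi}^{\otimes k}$ of pure states (the first-layer twirl and the mixed-unitary maps $\mathbb{E}_V V^{\otimes k}(\cdot)V^{\dagger\otimes k}$ preserve this form), not merely PPT. For such $\tau$, write $R(T)=R(T_k)R(\pi)$ with $R(T_k)\geq 0$ and permutation-invariant; then Cauchy--Schwarz gives
\begin{equation*}
|\bra{\phi}^{\otimes k}(R(T_k)^{1/2}\otimes\one)(R(T_k)^{1/2}R(\pi)\otimes Y)\ket{\phi}^{\otimes k}|
\leq \|Y\|_\infty\,\bra{\phi}^{\otimes k}(R(T_k)\otimes\one)\ket{\phi}^{\otimes k}
= \|Y\|_\infty\, 2^{-2\xi}\!\!\sum_{P}\Tr[P\rho_{B_j}]^k \leq 1,
\end{equation*}
where $\rho_{B_j}=\Tr_{B_j^c}\ketbra{\phi}$, so $\|A_T\|_1\leq 1$ and the rest of your Weingarten accounting goes through. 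With that substitution your argument is complete and, if anything, more rigorous on this point than the paper's own one-line appeal to separability.
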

\begin{proof}
Let $C'$ the constant in Lemma~\ref{lem:clifford_state_ppt}.
As the input state $\ket{0}\tn{k}$ is separable, we can replace the uniformly random Cliffords in the first layer with Haar random unitaries (using \cref{lem:clifford_state_ppt}), incurring a total additive error $\leq C' n 2^{-2\xi} \leq C' C^{-2} \epsilon^2  k^{-2} n^{-1} \leq C'C^{-1}\epsilon$ (as there are $n/2\xi\leq n$ many gates). 
Similarly, since the state after this first layer of unitaries is still separable, we can replace the uniformly random Cliffords in the second layer by Haar-random unitaries, incurring the same amount of error again. 
We now have a superblock circuit with Haar-random unitaries for which we can invoke the gluing lemma of Ref.~\cite{schuster_random_2025} to bound the relative error of the whole circuit (to a global Haar-random unitary).
More precisely, this error is computed in Ref.~\cite[App.~B.3.4]{schuster_random_2025}.
Inserting $k\geq 4$ in the derivation of \cite[Eq.~(B.31)]{schuster_random_2025}, we find that $\xi\geq\log_2(C nk^2/\epsilon)$ implies a relative error $\leq \epsilon/(2C)$.
This converts into an additive error $\epsilon/C$ \cite[Lem.~3]{brandao_local_2016}.
Finally, exchanging the global Haar unitary back to a random Clifford yields another additive error $\leq C' 2^{-n}$.
The total error is then $\frac{C' + 1}{C}\epsilon + C' 2^{-n}$.
If the target error is larger than $C'2^{-n}$, then choosing $C=C'+1$ and redefining $\epsilon$ to be the difference between the target error and $C'2^{-n}$ shows that the ensemble is the required $\epsilon$-design. Since one can implement arbitrary Clifford circuits in $1$D in depth $O(\xi)$ the overall depth of the circuit is $O(\log(nk/\epsilon)$, which is what we set out to prove. 
\end{proof}

\section{Orthogonal group}
\subsection{Preliminaries}

The orthogonal group $\O(D)$ is defined as the set of all unitary matrices that have real entries, or, equivalently, the set of real matrices $O$ such that $O O^T = O^T O = \one$. It is known that orthogonal unitary $k$-designs can be constructed in depth linear in $n$ and polynomial in $k$~\cite{odonnell_explicit_2023}.

We quickly review orthogonal Weingarten calculus here. The commutant of the orthogonal group is a representation of the \textit{Brauer algebra}.

Let $M_k$ denote the set of all possible pairings of $2k$ vertices. If we number the vertices $1,2,3,4$, the elements of $M_2$ are thus $\{\{1,2\}, \{3,4\}\}, \{\{1,3\}, \{2,4\}\}$ and $\{\{1,4\}, \{2,3\}\}$. We define a multiplication $\sigma \cdot \tau$ for $\sigma, \tau \in M_k$ as concatenating the last $k$ vertices of $\sigma$ with the first $k$ vertices of $\tau$. For $k = 2$ this means for example that $\sigma \cdot \tau \coloneqq \{\{1,2\}, \{3,4\}\} \cdot \{\{1,3\}, \{2,4\}\} = \{\{1,2\}, \{3,4\}\}$, or, in graphical notation

\begin{equation}
\left(
\begin{tikzpicture}[baseline={(current bounding box.center)}, scale=0.5]
\node[draw, circle, scale = 0.8] (v1) at (0, 0) {1};
\node[draw, circle, scale = 0.8] (v2) at (0, -2) {2};
\node[draw, circle, scale = 0.8] (v3) at (2, 0) {3};
\node[draw, circle, scale = 0.8] (v4) at (2, -2) {4};
\draw (v1) -- (v2);
\draw (v3) -- (v4);
\end{tikzpicture}
\right)
\cdot
\left(
\begin{tikzpicture}[baseline={(current bounding box.center)}, scale=0.5]
\node[draw, circle, scale = 0.8] (v1) at (0, 0) {1};
\node[draw, circle, scale = 0.8] (v2) at (0, -2) {2};
\node[draw, circle, scale = 0.8] (v3) at (2, 0) {3};
\node[draw, circle, scale = 0.8] (v4) at (2, -2) {4};
\draw (v1) -- (v3);
\draw (v2) -- (v4);
\end{tikzpicture}
\right)
=
\left(
\begin{tikzpicture}[baseline={(current bounding box.center)}, scale=0.5]
\node[draw, circle, scale = 0.8] (v1) at (0, 0) {1};
\node[draw, circle, scale = 0.8] (v2) at (0, -2) {2};
\node[draw, circle, scale = 0.8] (v3) at (2, 0) {3};
\node[draw, circle, scale = 0.8] (v4) at (2, -2) {4};
\draw (v1) -- (v2);
\draw (v3) -- (v4);
\end{tikzpicture}
\right).
\end{equation}

However, sometimes it happens that by concatenating elements from $M_k$, we get one or more closed loop. For every closed loop, we multiply by some scalar $D$. An example of this for $k=2$ is

\begin{equation}
\left(
\begin{tikzpicture}[baseline={(current bounding box.center)}, scale=0.5]
\node[draw, circle, scale = 0.8] (v1) at (0, 0) {1};
\node[draw, circle, scale = 0.8] (v2) at (0, -2) {2};
\node[draw, circle, scale = 0.8] (v3) at (2, 0) {3};
\node[draw, circle, scale = 0.8] (v4) at (2, -2) {4};
\draw (v1) -- (v2);
\draw (v3) -- (v4);
\end{tikzpicture}
\right)
\cdot
\left(
\begin{tikzpicture}[baseline={(current bounding box.center)}, scale=0.5]
\node[draw, circle, scale = 0.8] (v1) at (0, 0) {1};
\node[draw, circle, scale = 0.8] (v2) at (0, -2) {2};
\node[draw, circle, scale = 0.8] (v3) at (2, 0) {3};
\node[draw, circle, scale = 0.8] (v4) at (2, -2) {4};
\draw (v1) -- (v2);
\draw (v3) -- (v4);
\end{tikzpicture}
\right)
= D 
\left(
\begin{tikzpicture}[baseline={(current bounding box.center)}, scale=0.5]
\node[draw, circle, scale = 0.8] (v1) at (0, 0) {1};
\node[draw, circle, scale = 0.8] (v2) at (0, -2) {2};
\node[draw, circle, scale = 0.8] (v3) at (2, 0) {3};
\node[draw, circle, scale = 0.8] (v4) at (2, -2) {4};
\draw (v1) -- (v2);
\draw (v3) -- (v4);
\end{tikzpicture}
\right).
\end{equation}
The Brauer algebra is now defined as the set of formal linear combinations of pairings:
\begin{equation}
\mathcal{B}_k(D) = \left\{\sum_{\sigma \in M_k} c_\sigma \sigma \mid c_\sigma \in \Z(D) \right\}.
\end{equation}

For an element $\sigma \in M_k$, let us define a map $f_{\sigma}:\ [2k] \rightarrow [2k]$ which sends a vertex $x$ to the vertex $y$ that it is paired with under $\sigma$. We can now define a representation $R_{\O}:\ \mathcal{B}_k(D) \rightarrow \mathcal{U}(D^k)$ of the Brauer algebra. For $\sigma \in M_k$ we define
\begin{equation}
R_{\O}(\sigma) \coloneqq \sum_{i_1,\ldots i_{2k}} \left(\prod_{j=1}^{2k} \delta_{i_j, i_{f_\sigma(j)}}\right) \ketbra{i_1\ldots i_k}{i_{k+1}\ldots i_{2k}}.
\end{equation}

We chose $D$ to be the dimension of one copy of our Hilbert space. If we define $\sigma = \{\{1,2\}, \{3,4\}\}$, we indeed get that
\begin{equation}
R_{\O}(\sigma) \cdot R_{\O}(\sigma) = \left(\sum_{i_1, i_{2}} \ketbra{i_1 i_1}{i_{2} i_{2}}\right) \left( \sum_{i_1, i_{2}} \ketbra{i_1 i_1}{i_{2} i_{2}}\right) = D \sum_{i_1, i_{2}} \ketbra{i_1 i_1}{i_{2} i_{2}} = D \cdot R_{\O}(\sigma).
\end{equation}

We note that all permutations are in the representation of the Brauer algebra.

The elements of $M_k$ under representation $R_O$ form a basis of the commutant of the orthogonal group~\cite{aharonov2021quantum}. We can thus use orthogonal Weingarten calculus to compute the moments of the orthogonal group as
\begin{equation}
\Phi_{\O}^{(k)}(A) = \sum_{\sigma, \tau \in M_k} \Wg_{\O}(\sigma, \tau) \tr(A \sigma^T) \tau,
\end{equation}
where $\Wg_O$ is the inverse of the orthogonal Gram matrix $(G_O)_{\sigma,\tau} = \tr(\sigma\tau^T)$. We note that, as $M_k$ does not form a group and in particular does not have inverses, the orthogonal Gram and Weingarten matrix do not have a constant row sum.

It will be useful to have a closer look at the case of $k=2$. We define $\ket{\Omega_n} \coloneqq \frac{1}{2^{n/2}}\sum_{x\in \{0,1\}^n} \ket{xx}$ as the maximally entangled state on two copies of $(\C^2)^{\otimes n}$. The basis for the commutant will now consist of the identity operator $\one$, the Flip operator $\mathbb{F}$ and $\Pi_o \coloneqq 2^n \ketbra{\Omega_n}$. Computing the Gram matrix and its inverse we get

\begin{align}
    G_{\O} = D \begin{pmatrix} D & 1 & 1 \\ 1 & D & 1 \\ 1 & 1 & D \end{pmatrix},\ 
    \Wg_{\O} = \frac{1}{D(D-1)(D+2)} \begin{pmatrix} D+1 & -1 & -1 \\ -1 & D+1 & -1 \\ -1 & -1 & D+1 \end{pmatrix} \,.
\end{align}
Hence,
\begin{equation}\label{eqn: orthogonal 2 twirl}
\begin{split}
    \EE_{U\sim \O} U\tn{2} A U\tnd{2} 
    &=
    \frac{1}{D(D-1)(D+2)}
    \Big(
    \left[ (D+1)\tr(A) - \tr(A\F) - \tr(A\Pi_o) \right] \one \\
    &+
    \left[ -\tr(A) + (D+1) \tr(A\F) - \tr(A\Pi_o) \right] \F \\
    &+
    \left[ -\tr(A) - \tr(A\F) + (D+1) \tr(A\Pi_o) \right] \Pi_o
   \Big).
\end{split}
\end{equation}

\subsection{Orthogonal group designs}\label{section:matrix_orthogonal}
The following argument was already given by~\cite{schuster_random_2025}, we repeat the main idea of the proof here for completeness.
For every orthogonal matrix $O\in\O(2^n)$, we have $O\otimes O \ket{\Omega_n} = OO^T\otimes\one\ket{\Omega_n}=\ket{\Omega_n}$, i.e.~$\ket{\Omega_n}$ is an invariant state under the tensor square representation.
We then study the action on the initial state $Z_1\otimes\one \ket{\Omega_n}$ where $Z_1$ denotes the Pauli $Z$ operator on the first qudit.
If we act with a Haar-random $O\in\O(2^n)$, we obtain the state $(O Z_1 O^T\otimes\one) \ket{\Omega_n}$ which will be highly entangled between the first $n$ qudits with high probability.
In contrast, if we act with a local random $\O$-circuit, then all gates outside the lightcone of the first qudit will cancel.
In particular, if the lightcone does not encompass all qudits, we are left with local maximally entangled states $\ket{\Omega_{\overline{l}}}$ in the complement of the lightcone $l$.
These are unentangled with the remaining qubits and can be detected, for instance by measuring their fidelity.
Hence, short local random $\O$-circuits cannot reproduce this second (psd) moment correctly.
In a $1\mathrm{D}$ nearest-neighbor architecture, they need at least $\Omega(n)$ depth, and in an all-to-all architecture $\Omega(\log(n))$ depth is required. We provide a formal proof here:
\begin{theorem}\label{theorem: orthogonal rel group design}
    Let $\mathcal{E}$ be an approximate orthogonal unitary $k$-design with additive error $\varepsilon < 1/4$. Then at least one orthogonal unitary must have a lightcone covering a $1/4$'th fraction of the qubits. 
    This implies $\mathcal{E}$ must have circuit depth $d = \Omega(n)$ in $1\mathrm{D}$ circuits and $d = \Omega(\log n)$ in all-to-all connected circuits for any $k \geq 2$.
\end{theorem}

\begin{proof}
    To prove the theorem, we will show that any orthogonal unitary with light-cone size smaller than $n/4$ can be distinguished from a Haar-random orthogonal unitary using two parallel queries and constant-depth quantum and classical operations.
    
    Let $O$ be a short-depth orthogonal unitary and $\ket{\Omega_n}$ the EPR state on two copies.
    Our distinguishing protocol proceeds as follows.
    We first perturb the EPR state by a single-qubit Pauli operator $Z_1$ on the first qubit (defined in a fixed but arbitrary order) of the first copy.
    We then apply $O$ to both copies in parallel.
    Finally, we choose $i\in [n]$ uniformly at random and measure the expectation value of the local EPR projector, $P^{(i)} = \mathbbm{1}^{\otimes 2(n-1)} \otimes \dyad*{\Omega^{(i)}_1}$, where $\ket*{\Omega^{(i)}_1}$ is the EPR state on the $i$'th qubit of the two copies.

    This procedure yields an expectation value,
    \begin{equation*}
        \mathbb{E}_{i\sim [n]}\bra{\Omega_n}(Z_1\otimes \one) (O^T \otimes O^T) P^{(i)}(O\otimes O)(Z_1\otimes \one) \ket{\Omega_n}
        =
        \mathbb{E}_{i\sim [n]} \bra{\Omega_n}(O Z_1 O^T \otimes \one)  P^{(i)}(O Z_1 O^T\otimes \one) \ket{\Omega_n}.
    \end{equation*}
    Here, we have used that $O^\dagger = O^T$ and $(\mathbbm{1} \otimes O) \ket{\Omega_i} = (O^T \otimes \mathbbm{1}) \ket{\Omega_i}$.
    Now consider an orthogonal unitary $O$ with a lightcone of size less than $n/4$. Note that the internal expectation value is equal to one whenever $i$ lies outside of the lightcone of $O Z_1 O^T$, since then $P^{(i)}$ and $O Z_1 O^T$ commute. Since the expectation value is otherwise no less than $-1$ it follows that the overall expectation value is larger than $1/2$.
    Meanwhile, in a Haar-random orthogonal unitary, the expectation value of the expectation value is exponentially small in $n$.
    This implies that $\lVert \Phi^{(2)}_\mathcal{E} - \Phi^{(2)}_{O_H} \rVert_\lozenge \geq 1/4$, which contradicts $\mathcal{E}$ being an additive error design. The rest of the statement follows. 
\end{proof}

\subsection{Orthogonal state designs: relative error}\label{section: orthogonal rel state}

\begin{theorem}\label{theorem:orthogonal_rel_state}
    Let $\mathcal{E}$ be an ensemble such that $\Phi_\mathcal{E}\left(\ketbra{00}\right)$ is a relative-error $\epsilon$-approximate orthogonal state $2$-design of depth $d$ for $\epsilon < 1$. The following 2 statements hold:
    \begin{enumerate}
                \item If the architecture has diameter $\dia$ and $\mathcal{E}$ is locally independent (Def.~\ref{def:local}), then $d = \Omega(\dia)$.
        \item If $\mathcal{E}$ is a generalized random brickwork (Def.~\ref{def:brickwork}), then $d = \Omega(n)$.
    \end{enumerate}
\end{theorem}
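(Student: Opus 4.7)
The plan is to adapt the proof of Theorem~\ref{theorem:clifford_rel_state} to the orthogonal group, substituting the Clifford $4$-commutant element $R(T_4)$ by the factorising positive-semidefinite $2$-commutant element $\Pi_o = 2^n\ketbra{\Omega_n}$ introduced in Section~\ref{section:matrix_orthogonal}. The first step is the analogue of the Clifford identity. Using the invariance $U^{\otimes 2}\ket{\Omega_n}=\ket{\Omega_n}$ for orthogonal $U$ together with $\braket{\Omega_n}{0^n 0^n}=2^{-n/2}$, one obtains
\begin{equation*}
    \Tr\bigl[(Z_i\otimes\one)\Pi_o(Z_i\otimes\one)\Phi_\mathcal{E}(\ketbra{00})\bigr]=\EE_{U\sim\mathcal E}\bigl|\sandwich{0^n}{U^T Z_i U}{0^n}\bigr|^2.
\end{equation*}
Plugging the orthogonal $2$-twirl formula~\eqref{eqn: orthogonal 2 twirl} in for the Haar side evaluates the expectation under $\O_H$ to $\tfrac{2}{2^n+2}$; the same Haar value arises for any product $Z_I=\prod_{i\in I}Z_i$ of distinct Pauli-$Z$ operators, since $(Z_I\otimes Z_I)\ket{\Omega_n}=\ket{\Omega_n}$.

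For part~1 I would repeat the Clifford lightcone argument verbatim. Picking qubits $i,j$ at graph distance $\dia$ and assuming $d<\dia/4$, their lightcones are disjoint, so $U^T Z_iZ_j U$ splits across the two lightcones and
\begin{equation*}
    \sandwich{0^n}{U^T Z_iZ_j U}{0^n}=\sandwich{0^n}{U^T Z_i U}{0^n}\cdot\sandwich{0^n}{U^T Z_j U}{0^n}.
\end{equation*}
Local independence (Def.~\ref{def:local}) makes the two factors independent, so the expectation of their squared product factorises. Inserting the relative-error design bounds (upper for the single-$Z$ moments, lower for the $Z_iZ_j$ moment) yields
\begin{equation*}
    (1-\epsilon)\tfrac{2}{2^n+2}\leq (1+\epsilon)^2\tfrac{4}{(2^n+2)^2},
\end{equation*}
and for $\epsilon<1$ this forces $2^n=\Or(1)$, a contradiction; hence $d=\Omega(\dia)$.

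Part~2 is the harder statement. The Clifford proof used the discreteness of the group to lower-bound by $p^d$ the probability that the brick touching qubit~$1$ factorises in every layer, but this event has measure zero for the continuous orthogonal group. A continuous analogue is nevertheless available: for a Haar-random $V\in\O(2^r)$, $V\ket{0^r}$ is uniform on the real unit sphere in $\R^{2^r}$, so the overlap $\bra{0^r}V\ket{0^r}$ has density proportional to $(1-x^2)^{(2^r-3)/2}$ on $[-1,1]$, and there is a constant $p_r>0$ (e.g.\ $p_2=5/16$) such that $|\bra{0^r}V\ket{0^r}|\geq 1/2$ with probability at least $p_r$. Conditioned on this event, the brick leaves qubit~$1$ with constant overlap with $\ket{0}$. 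Iterating across the independent Haar bricks touching qubit~$1$ in a fixed brickwork and averaging over the convex combination, one should obtain
\begin{equation*}
    \EE_{U\sim\mathcal E}\bigl|\sandwich{0^n}{U^T Z_1 U}{0^n}\bigr|^2\geq c^d
\end{equation*}
for an architecture-independent constant $c\in(0,1)$; comparing with the design upper bound $\tfrac{2(1+\epsilon)}{2^n+2}$ then forces $d=\Omega(n)$.

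The principal obstacle will be making the iterated argument in part~2 rigorous. Unlike the Clifford case, where the good event pins the relevant value to exactly $1$, the continuous-group event only preserves a \emph{constant} overlap per brick, and this lower bound has to be propagated through the whole sequence of bricks without collapsing. A cleaner framework is likely to operate directly on the per-brick $2$-twirl, projecting onto the $3$-dimensional commutant $\linspan\{\one,\F,\Pi_o\}$, and to track the $\Pi_o$-weight of the evolving two-copy state, showing that a single Haar brick contracts it by at most a constant factor, independently of the remaining brickwork and of the underlying architecture.
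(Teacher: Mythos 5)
Your part 1 is essentially the paper's own argument: the same psd commutant element $(Z_i\otimes\one)\ketbra{\Omega_n}(Z_i\otimes\one)$, the same reduction to $\EE\,|\sandwich{0^n}{U^T Z_i U}{0^n}|^2$ with Haar value $\tfrac{2}{2^n+2}$, and the same disjoint-lightcone plus local-independence contradiction carried over from Theorem~\ref{theorem:clifford_rel_state}. That half is fine.

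Part 2, however, has a genuine gap, and you flag it yourself. Your proposed per-brick event, $|\sandwich{0^r}{V}{0^r}|\geq 1/2$ with constant probability for Haar $V\in\O(2^r)$, does not control the quantity you actually need. The relevant object is the diagonal matrix element $\sandwich{0^n}{U^T Z_1 U}{0^n}$, i.e.\ how the conjugated Pauli $Z_1$ spreads through the circuit, not state overlaps of individual bricks; even if every brick touching qubit~$1$ had a constant overlap with $\ket{0\cdots0}$, the conjugated operator acquires support on all qubits the bricks touch, later bricks act jointly on that grown support together with fresh qubits, and sign cancellations can make the final diagonal element arbitrarily small. So the claimed bound $\EE|\sandwich{0^n}{U^TZ_1U}{0^n}|^2\geq c^d$ does not follow from the sketch, and the ``cleaner framework'' you gesture at (tracking the $\Pi_o$-weight through per-brick $2$-twirls) is not carried out either. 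The paper's resolution is much simpler and sidesteps the continuity issue entirely: the quantity $\EE_{U\sim\mathcal E}\bra{00}U^{T,\otimes2}Z_1^{\otimes2}U^{\otimes2}\ket{00}$ is an exact second moment in each brick separately, so every Haar-random orthogonal brick in the generalized random brickwork (Def.~\ref{def:brickwork}) can be replaced by a uniformly random \emph{real Clifford} gate, since the real Clifford group is a finite exact orthogonal unitary $2$-design~\cite{hashagen_real_2018}. After this replacement the discreteness argument from part~2 of Theorem~\ref{theorem:clifford_rel_state} applies verbatim: in every layer there is a constant probability that the gate on qubit~$1$ factorizes, and a constant probability that the resulting single-qubit gate fixes $\ket{0}$, so the expectation is at least $2^{-\Or(d)}$, which against the design upper bound $(1+\epsilon)\tfrac{2}{2^n+2}$ forces $d=\Omega(n)$. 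Without this $2$-design substitution (or a completed contraction argument of the kind you only sketch), your part~2 is not a proof.
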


\begin{proof}
    We follow a similar strategy as in the proof of Theorem~\ref{theorem:clifford_rel_state}.
    We know that the state $\ketbra{\Omega}$ lies in the commutant. To prove the theorem we take the trace inner product of the relative-error state design with the psd $(Z_i\otimes \one)\ketbra{\Omega}(Z_i\otimes \one)$.
    For an orthogonal $O$ we get
    \begin{equation}
        \begin{split}
            \tr\left[(Z_i\otimes \one) \ketbra{\Omega_n} (Z_i\otimes \one) O\tn{2}\ketbra{00}O\tnt{2}\right]
            &= \frac{1}{2^n}\bra{00} O\tnt{2} Z_i\tn{2} O\tn{2} \ket{00}.
        \end{split}
    \end{equation}
    For part 1 we compute that 
    \begin{equation}
    \begin{split}
        \frac{1}{2^n}\E_{O\sim \O}\bra{00} O\tnt{2} Z_i\tn{2} O\tn{2} \ket{00}
        &= \frac{1}{2^n}\tr\left[ Z_i\tn{2} \Phi_{\O}(\ketbra{00})\right]\\
        &= \frac{1}{2^{2n}(2^n+1)}\tr\left[Z_i\tn{2} (\one + \F + \Pi_o)\right]\\
        &= \frac{1}{2^{2n}(2^n+1)}\tr\left[\F + \Pi_o\right]\\
        &= \frac{2}{2^n(2^n+2)},
    \end{split}
    \end{equation}
    and similar for the perturbations $Z_j$ and $Z_i Z_j$. We get a contradiction similar as in Eq.~\eqref{eqn: Clifford rel state contradiction}, provided that $\mathcal{E}$ is locally independent.

    For part 2 we note that the real Clifford group is an orthogonal unitary 2-design~\cite{hashagen_real_2018} (and is moreover finite), and that in every layer we thus have a constant probability that the real Clifford factorizes in the first qubit. This implies the lower bound $d = \Omega(n)$ if $\mathcal{E}$ is a generalized random brickwork.
\end{proof}

\subsection{Orthogonal state designs: additive error}\label{section:orthogonal add state}

We prove the following theorem, which shows that the action of the Haar orthogonal twirl and the Haar unitary twirl are equal (up to small additive error) when applied to states that are unentangled between copies.

\begin{theorem} [Equivalence of orthogonal and unitary designs on unentangled states] \label{thm: orthogonal to unitary}
    Consider a state $\rho$ on $\mathcal{H}^{\otimes k}$. Suppose that $\rho$ obeys the PPT condition on each copy of $\mathcal{H}$, i.e.~$\rho^{\Gamma_i} \geq 0$ for all $i = 1,\ldots,k$, where $\Gamma_i$ denotes the partial transpose on the $i$-th copy.
    Then the Haar orthogonal twirl of $\rho$ is equal to the Haar unitary twirl of $\rho$ up to additive error:
\begin{equation}
\lVert \Phi_{\O}(\rho) - \Phi_{\U}(\rho) \rVert_1 \leq \mathcal{O}(k^{7/2}/2^n) + \mathcal{O}(k/2^{n/2}).
\end{equation}
\end{theorem}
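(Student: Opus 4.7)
The plan is to adapt the strategy of Lemma~\ref{lem:clifford_state_ppt} to the orthogonal setting, replacing the Stochastic Lagrangian formalism with the Brauer-algebra description of the orthogonal commutant. Using the Weingarten formulas for $\O$ and $\U$, I first write
\begin{equation*}
\Phi_{\O_H}^{(k)}(\rho) - \Phi_{\U_H}^{(k)}(\rho) = \sum_{\sigma,\tau \in M_k}\bigl[\Wg_O(\sigma,\tau) - \Wg_U(\sigma,\tau)\bigr]\,\Tr\bigl(R(\sigma)^T \rho\bigr)\,R(\tau),
\end{equation*}
with the convention $\Wg_U(\sigma,\tau)=0$ whenever either index lies in $M_k \setminus S_k$. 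Taking the trace norm with the triangle inequality and partitioning $M_k = S_k \sqcup (M_k \setminus S_k)$ decomposes the bound into an $S_k \times S_k$ block and three further blocks involving at least one non-permutation Brauer diagram, each of which I estimate separately.

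The main ingredient shared by every block is a uniform PPT bound $|\Tr(R(\sigma)^T \rho)| \le 1$ for every $\sigma \in M_k$, proved in direct analogy with Eq.~\eqref{eq:r-t-ppt-bound}. One chooses a subset $I \subseteq [k]$ consisting of one endpoint per horizontal pair of $\sigma$ lying in the top row, so that the compound partial transpose $\Gamma_I$ converts every horizontal pair of $R(\sigma)^T$ into a vertical one. The operator $R(\sigma)^{T,\Gamma_I}$ then becomes a permutation-type operator of operator norm $1$, and H\"older's inequality combined with $\|\rho^{\Gamma_I}\|_1=1$ (which uses the PPT hypothesis, iterated across the copies in $I$) yields the claim. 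The trace norms $\|R(\tau)\|_1 = D^{k-\ell(\tau)}$, where $\ell(\tau)$ counts horizontal pairs per row of $\tau$, follow directly from the Brauer diagram expansion.

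It then remains to estimate the Weingarten coefficients. On $S_k \times S_k$, $\Wg_O$ and $\Wg_U$ agree to leading order in $1/D$, so $|\Wg_O - \Wg_U| = \Or(D^{-(k+1)})$; combined with $\|R(\tau)\|_1 = D^k$, this contributes $\Or(k^{7/2}/D)$ after polynomial-in-$k$ bookkeeping on the permutation Weingartens. For the remaining blocks, asymptotics of $\Wg_O$ at non-permutation entries are obtained by pseudoinverting the Brauer Gram matrix: each extra horizontal pair suppresses $\Wg_O$ by a factor $1/D$, which is partially cancelled by the corresponding reduction in $\|R(\tau)\|_1$, leaving contributions at most $\Or(k/\sqrt D)$ coming from diagrams with a single horizontal pair. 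I expect the main obstacle to be replacing the naive combinatorial prefactor $|M_k|=(2k-1)!!$ by polynomial-in-$k$ constants; the natural route is to exploit the representation-theoretic block decomposition of the Brauer Gram matrix in the style of Collins--Matsumoto and to enumerate Brauer diagrams by their number of contractions, which should expose enough cancellation among diagrams to yield the stated $\Or(k^{7/2}/D)+\Or(k/\sqrt D)$.
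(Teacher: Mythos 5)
Your route is genuinely different from the paper's, but it has two gaps, one of which is structural. The key step you import from Lemma~\ref{lem:clifford_state_ppt} is the uniform bound $|\Tr(R(\sigma)^T\rho)|\le 1$ for \emph{all} Brauer diagrams $\sigma\in M_k$. In the Clifford case ($k\le 5$) a \emph{single} partial transpose $\Gamma_i$ sufficed, which is exactly what the hypothesis $\rho^{\Gamma_i}\ge 0$ licenses. For a Brauer diagram with $m\ge 2$ horizontal pairs you must flip a whole subset $I\subseteq[k]$ of copies to turn $R(\sigma)^T$ into a norm-one (permutation-type) operator, and your H\"older step then needs $\lVert\rho^{\Gamma_I}\rVert_1=1$, i.e.\ positivity of the \emph{compound} partial transpose. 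This does not follow from single-copy PPT: $\rho^{\Gamma_i}\ge 0$ for each $i$ does not imply $\rho^{\Gamma_I}\ge 0$ for $|I|\ge 2$, so "iterating the PPT hypothesis across the copies in $I$" is not available under the theorem's assumptions. (As a lesser point, your recipe "one endpoint per horizontal pair in the top row" is not in general a valid choice of $I$; the flips must also be consistent with the bottom-row pairs and through-strands, e.g.\ for the $k=3$ diagram $\{1,2\},\{2',3'\},\{3,1'\}$ flipping copy $1$ fails while flipping copy $2$ works. A consistent $I$ always exists, but the existence argument is not the one you give.)

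The second gap is quantitative and you flag it yourself: after the triangle inequality you must sum $|\Wg_{\O}(\sigma,\tau)-\Wg_{\U}(\sigma,\tau)|\,\lVert R(\tau)\rVert_1$ over all of $M_k\times M_k$, and turning the naive $((2k-1)!!)^2$ diagram count plus the off-permutation-block asymptotics of $\Wg_{\O}$ into the stated $\Or(k^{7/2}/2^n)+\Or(k/2^{n/2})$ is precisely the hard part of the estimate; hoping for "enough cancellation" in the Brauer Gram pseudo-inverse is not a proof, and no analogue of the clean $\Or(2^{-n})$ bookkeeping of the Clifford lemma is known here. The paper sidesteps both problems by a different architecture: write $\Phi_{\O_H}=\Phi_{\O_H}\circ\Phi_{\O_H}$, insert distinct-subspace projectors $\bPD$ at the cost of an additive error controlled by the gentle measurement lemma (where PPT is used only through the two-copy overlaps $\tr(\Pi_{o,ij}\rho)\le 1$, needing just one partial transpose), and then compare $\bPD\circ\Phi_{\O_H}\circ\bPD$ with $\bPD\circ\Phi_{\U_H}\circ\bPD$ via Lemma~\ref{lemma: relative error EPR}. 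Since $\sigma\,\PD=0$ for every non-permutation diagram, only the permutation block of $\Wg_{\O}$ survives, and the needed sum $\sum_{\pi\in S_k\setminus\{\one\}}|\Wg_{\O}(\pi)|$ is controlled with polynomial-in-$k$ constants by Collins's bounds (Lemma~\ref{lemma orthogonal sum weingarten}). If you want to salvage your direct expansion, you would at minimum have to restrict the theorem to states that are PPT across all compound bipartitions of the copies (e.g.\ fully separable product inputs, which is what the design application actually uses) \emph{and} supply genuine off-block Weingarten asymptotics with polynomial combinatorial prefactors; as written, neither ingredient is in place.
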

Before presenting the proof of the theorem, we briefly discuss its consequences for the construction of low-depth orthogonal state designs. Concretely we have the following corollary:
\begin{corollary}\label{cor:orth add state}
Consider the two-layer superblock brickwork circuit composed of orthogonal $O(\epsilon)$ additive error unitary $k$-designs acting on $\xi = \Omega(\log(nk /\epsilon))$ qubits. This ensemble, with depth $\Or(n\poly(k)\log(1/\epsilon))$ is an $\epsilon$-approximate additive error orthogonal state $k$-design.
\end{corollary}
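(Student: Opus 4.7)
The strategy mirrors that of Corollary~\ref{cor:clifford add state}: we compare our orthogonal superblock ensemble, applied to $\rho_0 = \ketbra{0^n}\tn{k}$, with the target $\Phi_{\O_H}^{(k)}(\rho_0)$ by routing through the unitary Haar twirl $\Phi_{\U_H}^{(k)}(\rho_0)$ as an intermediate. Theorem~\ref{thm: orthogonal to unitary} is the key tool: whenever a state is PPT on each of its $k$ copies, its orthogonal and unitary Haar twirls agree in trace distance up to $\Or(k^{7/2}\,2^{-m} + k\,2^{-m/2})$ on $m$ qubits. We invoke it at every point where we want to swap an orthogonal Haar twirl for a unitary one.

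The proof proceeds in four steps. First, on the initial pure product state $\rho_0$ (trivially PPT on each copy), replace the first-layer orthogonal additive-error designs by orthogonal Haar twirls, incurring total error $\tfrac{n}{2\xi}\cdot\Or(\epsilon)$ by telescoping the additive-design guarantee across the $n/(2\xi)$ independent superblocks, and then by unitary Haar twirls, incurring an analogous additive cost coming from Theorem~\ref{thm: orthogonal to unitary} on each $2\xi$-qubit patch. Second, repeat the same swap for the second layer. For this to be justified, the state entering each second-layer superblock must itself be PPT on each copy: after the first unitary-Haar layer the state on each first-layer superblock is proportional to the symmetric projector $\Pi_{\mathrm{sym}}$, which by the de Finetti representation is a convex combination of i.i.d.\ product states $\sigma\tn{k}$; each such term (and therefore any marginal thereof) is PPT on every copy, and tensor products of PPT states are again PPT. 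Since a second-layer superblock sees the tensor product of two such marginals coming from two distinct first-layer superblocks, its input is PPT and Theorem~\ref{thm: orthogonal to unitary} applies.

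Third, we are left with a two-layer superblock brickwork in which every superblock is a unitary Haar twirl. By Lemma~\ref{lem:gluing} the choice $\xi = \Omega(\log(nk/\epsilon))$ makes this a relative-error $\epsilon$-approximate unitary $k$-design, whose action on $\rho_0$ is within $2\epsilon$ in trace distance of $\Phi_{\U_H}^{(k)}(\rho_0)$. Finally, one further application of Theorem~\ref{thm: orthogonal to unitary} on the full $n$-qubit system, again using PPT of $\rho_0$, replaces $\Phi_{\U_H}^{(k)}(\rho_0)$ by $\Phi_{\O_H}^{(k)}(\rho_0)$ at additive cost $\Or(k^{7/2}\,2^{-n} + k\,2^{-n/2})$. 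Collecting the contributions, and choosing the per-superblock additive design error to be $\Or(\epsilon\xi/n)$ so that the telescoped errors in the first two steps are absorbed into $\Or(\epsilon)$, yields the claimed bound.

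The main obstacle is the second step: verifying that the input to each second-layer superblock remains PPT on every copy. This rests on the product-state (de Finetti) decomposition of the symmetric projector together with the elementary facts that partial trace and tensor product preserve PPT on each copy. Everything else is routine error-propagation, essentially identical to the analysis in Corollary~\ref{cor:clifford add state}.
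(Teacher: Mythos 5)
Your proposal is correct and follows essentially the same route as the paper: replace the orthogonal superblocks by Haar-unitary twirls via Theorem~\ref{thm: orthogonal to unitary} (using PPT of the input on every copy), invoke the gluing lemma for the resulting two-layer Haar-unitary circuit, and convert the global unitary twirl back to the orthogonal one with a final application of Theorem~\ref{thm: orthogonal to unitary}. Your explicit justification that the second-layer inputs remain PPT (symmetric projector as a mixture of i.i.d.\ product states, with partial trace and tensor product preserving PPT) is a point the paper leaves implicit, but it is the same argument in substance.
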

\begin{proof}
Theorem~\ref{thm: orthogonal to unitary} immediately implies that the two-layer brickwork circuit composed of small orthogonal unitaries acting on $\xi = \Omega(\log(nk /\varepsilon))$ qubits, is an orthogonal state design generator.

In order for a unitary ensemble to be an orthogonal state design generator, we require that it replicate the action of a Haar orthogonal unitary on states of the form $\rho = \dyad{\psi}^{\otimes k}$.
Such states clearly obey the PPT condition in Theorem~\ref{thm: orthogonal to unitary}.
In the context of the two-layer brickwork ensemble, this allows us to replace each small Haar orthogonal unitary with a small Haar unitary up to additive error $\mathcal{O}(n k^{7/2}/2^\xi) + \mathcal{O}(n k/2^{\xi/2})$.
From Theorem~1 of Ref.~\cite{schuster_random_2025}, the two-layer brickwork circuit with small Haar unitaries forms a unitary design with relative error $\mathcal{O}(nk^2/2^\xi)$. This translates to an additive error of the same magnitude. Finally, we can apply Theorem~\ref{thm: orthogonal to unitary} again, to show that the Haar unitary twirl applied to $\rho$ is equal to the Haar orthogonal twirl up to additive error $\mathcal{O}(k^{7/2}/2^n) + \mathcal{O}(k/2^{n/2})$. The total error from this sequence of steps is less than $\varepsilon$ provided that $\xi = \Omega(\log(nk/\varepsilon))$, as claimed. Note that we can construct orthogonal designs in depth $\Or(n\poly(k)\log(1/\epsilon))$ by Ref.~\cite{odonnell_explicit_2023}, so we can replace the superblocks on $\xi$ qubits by linear-depth orthogonal group designs. This all adds up to additive-error orthogonal state designs in depth $\Or(\log(n/\epsilon) \poly(k))$.
\end{proof}
We can now prove Theorem~\ref{thm: orthogonal to unitary}:
\begin{proof}
    We prove that $\lVert \Phi_{\O}(\rho) - \Phi_{\U}(\rho) \rVert_1 \leq \mathcal{O}(k^{7/2}/2^n) + \mathcal{O}(k/2^{n/2})$.
    Our proof proceeds in five steps.
    \begin{enumerate}
        \item The Haar orthogonal twirl squares to itself, $\Phi_{\O} = \Phi_{\O} \circ \Phi_{\O}$.

        \item One can insert projectors onto the distinct subspace after the first and second Haar orthogonal twirls, up to small additive error.
        Specifically, we will show that
        \begin{equation}
            \left\lVert \big[\bPD \circ \Phi_{\O} \circ \bPD \circ \Phi_{\O} \big](\rho) - \big[\Phi_{\O} \circ \Phi_{\O} \big](\rho) \right\rVert_1 \leq 5 k / 2^{n/2},
        \end{equation}
        where $\bPD(\cdot) \equiv \PD (\cdot) \PD$ denotes conjugation by the distinct subspace projector.

        \item Within the distinct subspace, the Haar orthogonal twirl and Haar unitary twirl are equal up to small relative error.
        Specifically, we will show that
        \begin{equation}
            \left(1-\varepsilon' \right) \big[ \bPD \circ \Phi_{\U} \circ \bPD \big]
            \preceq
            \big[ \bPD \circ \Phi_{\O} \circ \bPD \big]
            \preceq
            \left(1+\varepsilon'\right) \big[ \bPD \circ \Phi_{\U} \circ \bPD \big],
        \end{equation}
        with $\varepsilon' = \mathcal{O}(k^{7/2}/2^n)$.
        This allows us to replace the second Haar orthogonal twirl above with a Haar unitary twirl.

        \item One can remove the projectors onto the distinct subspace after the Haar orthogonal and Haar unitary twirls, up to small additive error,
        \begin{equation}
            \left\lVert \big[ \bPD \circ \Phi_{\U} \circ \bPD \circ \Phi_{\O} \big](\rho) - \big[\Phi_{\U} \circ \Phi_{\O} \big](\rho) \right\rVert_1 \leq 5 k / 2^{n/2},
        \end{equation}
        similar to step 2.

        \item The composition of the Haar unitary and Haar orthogonal twirl is equal to the Haar unitary twirl, $\Phi_{\U} \circ \Phi_{\O} = \Phi_{\U}$.
    \end{enumerate}
    In total, these steps show that $\lVert \Phi_{\O}(\rho) - \Phi_{\U}(\rho) \rVert_1 \leq \mathcal{O}(k^{7/2}/2^n) + 10 k/2^{n/2}$, as claimed.

    Let us begin with step 2. From the triangle inequality, we have
    \begin{equation}
    \begin{split}
            \big\lVert \big[\bPD \circ \Phi_{\O} \circ \bPD \circ \Phi_{\O} \big]& (\rho)  - \big[\Phi_{\O} \circ \Phi_{\O} \big](\rho) \big\rVert_1 
            \\
            & \leq  \left\lVert \big[\bPD \circ \Phi_{\O} \circ \big( \bPD - \mathbbm{1} \big) \circ \Phi_{\O} \big](\rho) \right\rVert_1\\
            &+
            \left\lVert \big[ \big( \bPD - \mathbbm{1} \big) \circ \Phi_{\O} \circ \mathbbm{1} \circ \Phi_{\O} \big](\rho) \right\rVert_1.
    \end{split}
    \end{equation}
    To proceed, note that the operation $\bPD \circ \Phi_O$ cannot increase the 1-norm of an operator.
    Thus, the first term above is upper bounded by $\left\lVert \big[ \big( \bPD - \mathbbm{1} \big) \circ \Phi_{\O} \big](\rho) \right\rVert_1$.
    This is in fact equal to the second term, since $\Phi_{\O} \circ \mathbbm{1} \circ \Phi_{\O} = \Phi_{\O}$.
    These give,
    \begin{equation}
            \big\lVert \big[\bPD \circ \Phi_{\O} \circ \bPD \circ \Phi_{\O} \big] (\rho)  - \big[\Phi_{\O} \circ \Phi_{\O} \big](\rho) \big\rVert_1 \leq 2 \left\lVert \big[ \big( \bPD - \mathbbm{1} \big) \circ \Phi_{\O} \big](\rho) \right\rVert_1.
    \end{equation}
    We will now bound the right side.

    From the gentle measurement lemma, the right side is bounded as,
    \begin{equation}
            2 \left\lVert \big[ \big( \bPD - \mathbbm{1} \big) \circ \Phi_{\O} \big](\rho) \right\rVert_1
            \leq 
            4 \sqrt{ 1 - \tr(\PD  \Phi_{\O}(\rho) ) },
    \end{equation}
    where $\tr(\PD  \Phi_{\O}(\rho) )$ is the probability of measuring $\Phi_{\O}(\rho)$ in the distinct subspace.
    We can now apply the operator inequality, $\mathbbm{1}-\PD \leq \sum_{i<j} \Pi_{ij}$, where $\Pi_{ij} = \sum_{x \in \{0,1\}^n} \dyad{x}_i \otimes \dyad{x}_j$ projects onto states with the same bitstring in copy $i$ as in copy $j$.
    This yields, 
    \begin{equation} \label{eq: bound to Piij}
            2 \left\lVert \big[ \big( \bPD - \mathbbm{1} \big) \circ \Phi_{\O} \big](\rho) \right\rVert_1
            \leq 
            4 \sqrt{ \sum_{i<j} \tr(\Pi_{ij}  \Phi_{\O}(\rho) ) }.
    \end{equation}
    We can bound each probability as follows.
    Note that $\tr( \Pi_{ij} \Phi_{\O}(\rho)) = \tr( \Phi_{\O}(\Pi_{ij}) \rho)$, since $\Phi_{\O}^T = \Phi_{\O}$.
    The twirl of $\Pi_{ij}$ over an orthogonal 2-design yields, by Eq.~\eqref{eqn: orthogonal 2 twirl},
    \begin{equation}
        \Phi_{\O}(\Pi_{ij}) = \frac{1}{2^n+2} \left( \mathbbm{1}_{ij} + \F_{ij} + \Pi_{o,ij} \right).
    \end{equation}
    For the first two terms, we have $\tr( \one_{ij} \rho ) = 1$, and $| \tr( \F_{ij} \rho ) | \leq 1$ since $\lVert \F_{ij} \rVert_\infty \leq 1$.
    To bound the third term, we apply the PPT condition on subsystem $i$.
    We have,
    \begin{equation}
        \tr( \Pi_{o,ij} \rho ) = \tr( (\Pi_{o,ij} )^{\Gamma_i} \rho^{\Gamma_i} ) \leq \lVert ( \Pi_{o,ij} )^{\Gamma_i} \rVert_\infty \cdot \lVert \rho^{\Gamma_i} \rVert_1 =  \lVert \F_{ij}  \rVert_\infty \cdot 1 = 1.
    \end{equation}
    In the second step, we apply Holder's inequality.
    In the third step, we use that $(\Pi_{o,ij} )^{\Gamma_i} = \F_{ij}$, and that $\lVert \rho^{\Gamma_i} \rVert_1 = \tr( \rho^{\Gamma_i} ) = \tr(\rho) = 1$, where the first equality follows from the PPT condition, $\rho^{\Gamma_i} \geq 0$.
    In total, these bounds yield
    \begin{equation}
        \tr(\Pi_{ij}  \Phi_{\O}(\rho) ) \leq \frac{3}{2^n+2}.
    \end{equation}
    Inserting into Eq.~\eqref{eq: bound to Piij} and performing the sum over $i<j$, we find
    \begin{equation}
            2 \left\lVert \big[ \big( \bPD - \mathbbm{1} \big) \circ \Phi_{\O} \big](\rho) \right\rVert_1
            \leq 
            4 \sqrt{ \frac{k(k-1)}{2} \frac{3}{2^n+2} } \leq \frac{5 k}{2^{n/2}},
    \end{equation}
    which completes our proof of step 2.
    
    We can prove step 4 in an extremely similar manner.
    Following the same logic as for step 2, we have
    \begin{align} \label{eq: bound PD U O}
            \big\lVert \big[\bPD \circ \Phi_{\U} &\circ \bPD \circ \Phi_{\O} \big] (\rho)  - \big[\Phi_{\U} \circ \Phi_{\O} \big](\rho) \big\rVert_1 \notag\\
            &\leq  \left\lVert \big[ \big( \bPD - \mathbbm{1} \big) \circ \Phi_{\O} \big](\rho) \right\rVert_1
            + 
            \left\lVert \big[ \big( \bPD - \mathbbm{1} \big) \circ \Phi_{\U} \big](\rho) \right\rVert_1.
    \end{align}
    From the previous paragraph, the first term is upper bounded by $(5k/2^{n/2})/2$.
    Moreover, the second term is upper bounded by the same value.
    This follows since $\Phi_U = \Phi_{\O} \circ \Phi_{\U}$, and $\Phi_{\U}(\rho)$ obeys the PPT condition because $\rho$ obeys it.
    Thus, Eq.~\eqref{eq: bound PD U O} is bounded by $5k/2^{n/2}$, as claimed.

    It remains only to prove step 3.
    To show this, we first establish a technical lemma, which allows one to easily bound the relative error \emph{within the distinct subspace} between the twirl over any unitary ensemble $\mathcal{E}$ and the Haar unitary twirl.
    The lemma is closely adapted from Lemma~2 of Ref.~\cite{schuster_random_2025}.
    To more easily bound the error, it leverages the ``approximate Haar unitary twirl''~\cite{schuster_random_2025},
    \begin{equation}
        \Phi_a( \cdot ) = \frac{1}{2^{nk}} \sum_{\pi\in S_k} \tr( (\cdot) \pi^{-1} ) \cdot \pi,
    \end{equation}
    which is close to the exact Haar unitary twirl up to relative error $k^2/2^n$ (Lemma~1 of Ref.~\cite{schuster_random_2025}).
    \begin{lemma}[Relative error on the distinct subspace from EPR states] \label{lemma: relative error EPR}
    Consider a unitary ensemble $\mathcal{E}$ and its twirl $\Phi_{\mathcal{E}}$.
    Within the distinct subspace, the twirl is approximated by $\Phi_U$ up to relative error,
    \begin{equation} \label{eq: relative error EPR}
        \varepsilon' = \frac{4^{n k}}{k!} \left( 1 + \frac{k^2}{2^n} \right) \big\lVert [ \Pi_{\text{\emph{dist}}} \otimes \Pi_{\text{\emph{dist}}}] \, \big[ (\Phi_{\mathcal{E}} - \Phi_a) \otimes \mathbbm{1} \big]( \ketbra{\Omega_n} ) \, [ \Pi_{\text{\emph{dist}}} \otimes \Pi_{\text{\emph{dist}}}] \big\rVert_\infty + \frac{k^2}{2^n}.
    \end{equation}
    That is, we have $(1-\varepsilon') \, \bs{\Pi}_{\text{\emph{dist}}} \circ \Phi_{\U} \circ \bs{\Pi}_{\text{\emph{dist}}} \preceq \bs{\Pi}_{\text{\emph{dist}}} \circ \Phi_{\mathcal{E}} \circ \bs{\Pi}_{\text{\emph{dist}}} \preceq (1+\varepsilon') \, \bs{\Pi}_{\text{\emph{dist}}} \circ \Phi_{\U} \circ \bs{\Pi}_{\text{\emph{dist}}}$, with $\varepsilon'$ as above.
\end{lemma}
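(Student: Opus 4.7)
The plan is to translate the desired relative-error inequality into a comparison of Choi operators, where the distinct-subspace structure renders the ``denominator'' completely explicit. By channel--state duality, the semidefinite inequality $\bPD\circ\Phi_\mathcal{E}\circ\bPD \preceq (1+\varepsilon')\,\bPD\circ\Phi_{\U_H}\circ\bPD$ is equivalent to the analogous inequality between the Choi operators $J(\Phi)\coloneqq(\Phi\otimes\id)(\dens{\Omega_{nk}})$. Using the identity $(A\otimes\one)\ket{\Omega_{nk}}=(\one\otimes A^T)\ket{\Omega_{nk}}$ together with $\PD^T=\PD$ (the projector has real entries in the computational basis), the Choi of $\bPD\circ\Phi\circ\bPD$ is precisely $(\PD\otimes\PD)J(\Phi)(\PD\otimes\PD)$. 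So the task reduces to a semidefinite comparison of such sandwiched operators, which I would attack by exhibiting a lower bound on the denominator and using the hypothesis to control the numerator.

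Next, I would use the approximate Haar twirl $\Phi_a$ as an intermediate: by Lemma~1 of Ref.~\cite{schuster_random_2024}, $\Phi_a$ approximates $\Phi_{\U_H}$ to relative error $k^2/2^n$, which is exactly what produces both the additive $k^2/2^n$ and the $(1+k^2/2^n)$ prefactor in $\varepsilon'$ after a triangle inequality. It then suffices to compare $\Phi_\mathcal{E}$ with $\Phi_a$ on the distinct subspace. A direct calculation, using that $\pi^{-T}=\pi$ for permutation matrices, yields
\begin{equation*}
    J(\Phi_a) \;=\; \frac{1}{4^{nk}}\sum_{\pi\in S_k}\pi\otimes\pi.
\end{equation*}
Reorganising the $2k$ tensor factors into $k$ pairs $(\H\otimes\H)\tn{k}$, each $\pi\otimes\pi$ acts as the same permutation of the $k$ pairs, so $\sum_\pi\pi\otimes\pi = k!\,\Psym$ with $\Psym$ the symmetric projector on the pair system. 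Since permuting the pairs preserves the distinctness of strings on each side separately, $\Psym$ commutes with $\PD\otimes\PD$, and one obtains
\begin{equation*}
    (\PD\otimes\PD)J(\Phi_a)(\PD\otimes\PD) \;=\; \frac{k!}{4^{nk}}\,P,\qquad P\coloneqq\Psym(\PD\otimes\PD),
\end{equation*}
with $P$ an honest projector. This identifies $k!/4^{nk}$ as the eigenvalue of the denominator on its support, making $4^{nk}/k!$ the relative-error prefactor, while the operator-norm quantity in the hypothesis exactly controls the numerator $(\PD\otimes\PD)[(\Phi_\mathcal{E}-\Phi_a)\otimes\id](\dens{\Omega_{nk}})(\PD\otimes\PD)$.

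The main subtlety I anticipate is the triangle-inequality bookkeeping that combines these two steps. One has to chain the $\Phi_{\U_H}\leftrightarrow\Phi_a$ relative error with the sandwich-form bound on $\Phi_\mathcal{E}-\Phi_a$, verify that the operator-norm bound on the full distinct--distinct subspace dominates its restriction to the support $P$ (which is automatic, since $P$ is a sub-projector), and check that both one-sided inequalities in the relative-error definition are absorbed with the correct signs and multiplicative factors. Performing this bookkeeping yields exactly the claimed $\varepsilon' = (4^{nk}/k!)(1+k^2/2^n)\,\|\cdots\|_\infty + k^2/2^n$.
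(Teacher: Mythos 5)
Your proposal is correct and takes essentially the same route as the paper, which simply carries over Lemma~2 (and Lemma~7) of Ref.~\cite{schuster_random_2024} and rests on exactly the observation you make: $\PD\otimes\PD$ commutes with $\frac{1}{4^{nk}}\sum_{\pi\in S_k}\pi\otimes\pi$, so the projected Choi operator of $\Phi_a$ is still $\frac{k!}{4^{nk}}$ times a projector, which fixes the prefactor $4^{nk}/k!$, while chaining with the $\Phi_a$-vs-$\Phi_{\U_H}$ relative error $k^2/2^n$ gives the stated $\varepsilon'$. The only step to spell out in your bookkeeping is that $\vecmap(U^{\otimes k})$ lies in the pair-symmetric subspace (since $U^{\otimes k}$ commutes with every permutation), so the projected numerator $(\PD\otimes\PD)\big[(\Phi_{\mathcal{E}}-\Phi_a)\otimes\one\big](\ketbra{\Omega})(\PD\otimes\PD)$ is genuinely supported inside the projector $P=\Psym(\PD\otimes\PD)$, which is what turns the operator-norm bound into the operator inequality against the denominator.
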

\begin{proof}
    The proof follows identically to the proof of Lemma~2 and Lemma~7 in Ref.~\cite{schuster_random_2025}.
    The only difference is the presence of the distinct subspace projector.
    Crucially, in the proof of Lemma~2, this modification does not change the eigenvalue of $[ \Phi_a \otimes \mathbbm{1} ](\ketbra{\Omega_n}) = \frac{1}{4^{nk}} \sum_\pi \pi \otimes \pi$ (which has a flat spectrum), owing to the fact that $\PD \otimes \PD$ and $[ \Phi_a \otimes \mathbbm{1} ](\ketbra{\Omega_n})$ commute, since $\PD$ and $\pi$ commute for all $\pi$.
\end{proof}

We can now apply the lemma to bound the relative error between $\bPD \circ \Phi_{\O} \circ \bPD$ and $\bPD \circ \Phi_U \circ \bPD$.
To do so, we first compute $[\Phi_O \otimes \mathbbm{1}] (\ketbra{\Omega_n})$,
\begin{equation}
    [\Phi_{\O} \otimes \mathbbm{1}] (\ketbra{\Omega_n}) = \frac{1}{2^{nk}} \sum_{\sigma,\tau \in M_k} \Wg_{\O}(\sigma,\tau) [ \tau \otimes \sigma ],
\end{equation}
where $\sigma,\tau$ are summed over the generators of the commutant of the orthogonal group.
Multiplying by the distinct subspace projectors yields,
\begin{equation}
    [\PD \otimes \PD] \, [\Phi_{\O} \otimes \one] (\ketbra{\Omega_n}) \, [\PD \otimes \PD]  = \frac{1}{2^{nk}} \sum_{\pi,\tilde{\pi} \in S_k} \Wg_{\O}(\pi,\tilde{\pi}) \, [ \PD  \pi \, \PD \otimes \PD  \tilde{\pi} \, \PD ],
\end{equation}
where $\pi,\tilde{\pi}$ are summed over the generators of the commutant of the unitary group.
This follows since all other members of the commutant vanish upon multiplication with the projector, i.e.~$\sigma \, \PD = 0$ if $\sigma \notin S_k$.
Taking the difference with $[\PD \otimes \PD] \, [\Phi_a \otimes \mathbbm{1}] (\ketbra{\Omega_n}) \, [\PD \otimes \PD]$ gives,
\begin{equation}
\begin{split}
    [\PD \otimes \PD]& \, [(\Phi_{\O}-\Phi_a) \otimes \mathbbm{1}] (\ketbra{\Omega_n}) \, [\PD \otimes \PD]\\
    &= \frac{1}{2^{nk}} \sum_{\pi,\tilde{\pi} \in S_k} \left( \Wg_{\O}(\pi,\tilde{\pi}) - \frac{1}{2^{nk}} \delta_{\pi,\tilde{\pi}} \right) \, [ \PD  \pi \, \PD \otimes \PD  \tilde{\pi} \, \PD ].
\end{split}
\end{equation}
Applying the triangle inequality, the spectral norm of the above is bounded by
\begin{equation} \label{eq: bound spectral norm orth epr}
    \lVert \ldots \rVert_\infty \leq  \frac{1}{2^{nk}} \sum_{\pi,\tilde{\pi} \in S_k}  \left| \Wg_{\O}(\pi,\tilde{\pi}) - \frac{1}{2^{nk}} \delta_{\pi,\tilde{\pi}}  \right| = \frac{k!}{2^{nk}} \sum_{\pi \in S_k}  \left| \Wg_{\O}( \pi) - \frac{1}{2^{nk}} \delta_{\pi,\one}  \right|,
\end{equation}
where in the third expression we use that the Weingarten function depends only on the difference, $\pi \tilde{\pi}^{-1} \rightarrow \pi$, if $\pi,\tilde\pi$ are permutations and thus invertible.
Separating out the identity term, we have
\begin{equation}
\begin{split}
    \sum_{\pi \in S_k} \left| \Wg_{\O}( \pi) - \frac{1}{2^{nk}} \delta_{\pi,\mathbbm{1}}  \right| 
    & \leq
    \sum_{\pi \in S_k\setminus \{\one\}} \left| \Wg_{\O}( \pi) \right| + \left| \Wg_{\O}( \mathbbm{1} ) - \frac{1}{2^{nk}} \right|.
\end{split}
\end{equation}
By Theorem~4.11 of~\cite{collins2017weingarten}, we deduce that $\left| \Wg_{\O}( \one ) - \frac{1}{2^{nk}} \right| = \frac{1}{2^{nk}} \mathcal{O}\left(\frac{k^{7}}{2^{2n}}\right)$. We prove a lemma for the other term.

\begin{lemma}\label{lemma orthogonal sum weingarten}
    If $2^{n} > 13 k^{7/2}$, then
    \begin{equation}
        \sum_{\pi \in S_k\setminus \{\one\}} \left| \Wg_{\O}( \pi) \right| = \frac{1}{2^{nk}} \Or\left(\frac{k^2}{2^n}\right).
    \end{equation}
\end{lemma}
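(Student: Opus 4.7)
The plan is to invoke the standard asymptotic expansion of the orthogonal Weingarten function evaluated at permutations, analogously to Theorem~4.11 of Collins--Matsumoto (already invoked above for $\pi = \one$). Concretely, for every $\pi \in S_k \subset M_k$ viewed as a pair partition, one has
\[
|\Wg_\O(\pi, D)| \leq \frac{A\,|\mu(\pi)|}{D^{k + |\pi|}} \left(1 + O\!\left(\tfrac{k^{7}}{D^{2}}\right)\right),
\]
where $|\pi| = k - c(\pi)$ is the minimal transposition length of $\pi$, $\mu(\pi)$ is a cycle-type-dependent M\"obius-type coefficient bounded by a polynomial in $k$, and $A$ is an absolute constant. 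This estimate is the orthogonal analog of the classical unitary Weingarten estimate and can be obtained either from the Brauer-algebra character expansion of $\Wg_\O$ or via a Schur-complement argument: restricted to $S_k \times S_k$, $\Wg_\O$ differs from $\Wg_\U$ only by corrections coming from the ``non-permutation'' pair partitions in $M_k \setminus S_k$, whose relevant Gram-matrix entries are suppressed by powers of $D^{-1}$ relative to the diagonal.

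Granting this bound, I would organise the sum by the transposition length $|\pi|$. Letting $N_j$ denote the number of $\pi \in S_k$ with $|\pi| = j$ (equivalently, with exactly $k-j$ cycles), we have the crude bound $N_j \leq \binom{k}{2}^{j} \leq k^{2j}/2^{j}$ and, sharply for the leading term, $N_1 = \binom{k}{2}$. Substituting the Weingarten bound yields
\[
\sum_{\pi \in S_k \setminus \{\one\}} |\Wg_\O(\pi)| \;\leq\; \frac{A'}{D^{k}}\sum_{j=1}^{k-1} N_j\, D^{-j},
\]
for a new constant $A'$ absorbing the $1 + O(k^{7}/D^{2})$ correction and the polynomial factor in $k$ from $\mu(\pi)$.

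Under the hypothesis $D = 2^{n} > 13\, k^{7/2}$, we in particular have $k^{2}/D < 1/2$, so the $j=1$ term contributes $\binom{k}{2}\, A'/D^{k+1} = \tfrac{1}{D^{k}} \cdot O(k^{2}/D)$, while the tail over $j \geq 2$ is dominated by the geometric series $\sum_{j \geq 2}(k^{2}/D)^{j} = O(k^{4}/D^{2})$, which is itself dominated by the $j=1$ contribution. Collecting the two pieces gives exactly the claimed $\sum_{\pi \neq \one} |\Wg_\O(\pi)| = \tfrac{1}{2^{nk}}\, O(k^{2}/2^{n})$.

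The main technical hurdle is the Weingarten estimate in the first paragraph; the combinatorial counting and geometric-series tail bound are then elementary. The threshold $D > 13\, k^{7/2}$ in the hypothesis matches precisely the regime in which the Brauer-algebra Gram matrix is invertible as a small perturbation of its $S_k \times S_k$ block, which is what makes the Collins--Matsumoto asymptotic expansion rigorous with the error terms we have used; any more delicate dependence on $k$ in $\mu(\pi)$ would only affect constants, not the scaling.
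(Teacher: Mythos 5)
Your argument rests on the same external ingredient as the paper's proof, namely the Collins--Matsumoto bound (Theorem~4.11 of the cited reference) $|\Wg_{\O}(\sigma)| \leq D^{-(k+|\sigma|)}\bigl(1-144k^7/D^2\bigr)^{-1}\prod_i \mathrm{Cat}(\mu_i-1)$, so at the level of the key estimate you are doing the same thing. Where you differ is in the bookkeeping of the sum: you group permutations by transposition length $j=|\pi|$, bound their number by $\binom{k}{2}^{j}$, and sum a geometric series with ratio $k^2/D$, whereas the paper bounds the Catalan product by $4^{|\sigma|}$ and then evaluates $\sum_{\sigma}2^{-(n-2)|\sigma|}$ exactly via the unsigned Stirling numbers and their rising-factorial generating function, obtaining $(1+k2^{-n+2})^k-1$. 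Both routes give the claimed $\tfrac{1}{2^{nk}}\,\Or(k^2/2^n)$ under the hypothesis $2^n>13k^{7/2}$; yours is slightly cruder but equally valid once stated carefully, while the paper's generating-function computation is exact and avoids any case analysis over $j$.

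One step of yours needs repair as written: the coefficient $\mu(\pi)$ is the product of Catalan numbers $\prod_i \mathrm{Cat}(\mu_i-1)$, which is \emph{not} bounded by a polynomial in $k$ uniformly over $\pi$ (for a single $k$-cycle it is $\mathrm{Cat}(k-1)\sim 4^{k}$), so it cannot be absorbed into an absolute constant $A'$. The fix is to keep the bound $\prod_i\mathrm{Cat}(\mu_i-1)\leq 4^{|\pi|}$ inside the sum, which changes your geometric ratio from $k^2/D$ to at most $2k^2/D$; since $2^n>13k^{7/2}$ gives $2k^2/D<2/13$, the series is still dominated by the $j=1$ term and the conclusion $\tfrac{1}{2^{nk}}\Or(k^2/2^n)$ is unaffected. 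With that correction your proof is complete and matches the paper's in substance.
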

\begin{proof}[Proof of Lemma~\ref{lemma orthogonal sum weingarten}]
    We know by Theorem~4.11 of~\cite{collins2017weingarten} that for a permutation $\sigma$ we have
    \begin{equation}\label{eqn: Collins}
        |\Wg_{\O}(\sigma)| \leq \frac{1}{2^{n(k + |\sigma|)}}\frac{1}{1- \frac{144k^7}{2^{2n}}} \prod_{i = 1}^{\#\sigma} \frac{(2\mu_i - 2)!}{(\mu_i -1)! \mu_i !},
    \end{equation}
    where $\#\sigma$ is the number of cycles of $\sigma$. We defined $\mu_i$ as the cycle type of $\sigma$ (the sizes of the cycles). We defined $|\sigma| \coloneqq k - \#\sigma$.  We note that $\sum_i \mu_i = k$, while $\sum_i 1 = \#\sigma$, which implies that $\sum_i (\mu_i - 1) = k - \#\sigma = |\sigma|$. We can now compute
    \begin{equation}
    \begin{split}
        \prod_{i=1}^{\#\sigma} \frac{(2\mu_i - 2)!}{(\mu_i -1)! \mu_i !}
        &= \prod_{i=1}^{\#\sigma} \frac{1}{\mu_i}{\binom{2(\mu_i - 1)}{\mu_i - 1}}\\
        &\leq \prod_{i=1}^{\#\sigma} 2^{2(\mu_i - 1)}\\
        &= 2^{2\sum_i (\mu_i - 1)}\\
        &= 2^{2|\sigma|}.
    \end{split}
    \end{equation}
    The second step follows from standard bounds on the central binomial coefficient: $\binom{2m}{m}\leq 2^{2m}$.
    By Eq.~\eqref{eqn: Collins} we then have that
    \begin{equation}
        \sum_{\sigma \in S_k \setminus \{\one\}} |\Wg_{\O}(\sigma)|
        \leq \frac{1}{2^{nk}(1- \frac{144k^7}{2^{2n}})} \sum_{\sigma \in S_k \setminus \{\one\}} 2^{-(n-2)|\sigma|}.
    \end{equation}
    Note that the number of permutations with exactly $j$ cycles is given by the unsigned Stirling numbers $c(k,j)$.
    Moreover, it is well-known that their generating function over $j$ is given by the rising factorial such that we find:
    \begin{equation}
    \begin{split}
        \sum_{\sigma \in S_k \setminus \{\one\}} |\Wg_{\O}(\sigma)|
        \leq& \frac{1}{2^{nk}(1- \frac{144k^7}{2^{2n}})} \left( 2^{-(n-2)k} \sum_{j = 0}^{k} c(k,j) 2^{(n-2)j} - 1\right)\\
        \leq& \frac{1}{2^{nk}(1- \frac{144k^7}{2^{2n}})} \left( \prod_{j=0}^{k-1} \frac{2^{n-2}+j}{2^{n-2}} - 1\right)\\
        =& \frac{1}{2^{nk}(1- \frac{144k^7}{2^{2n}})} \left((1 + k 2^{-n+2})^k - 1\right)\\
        \leq& \frac{1}{2^{nk}(1- \frac{144k^7}{2^{2n}})} \left(e^{k^2 2^{-n+2}} - 1\right)\\
        \leq& \frac{1}{2^{nk}(1- \frac{144k^7}{2^{2n}})} \frac{1}{\log(2)} k^2 2^{-n+2},
    \end{split}
    \end{equation}    
    where in the final step we used that $k^2 2^{-n+2} \leq \log(2)$ such that we can use the inequality $e^x \leq 1+x/\log(2)$. 
    As we assumed that $2^n > 13k^{7/2}$, we see that $\frac{1}{1- \frac{144k^7}{2^{2n}}} = \Or(1)$. 
    We get $\sum_{\sigma \in S_k \setminus \{\one\}} |\Wg_{\O}(\sigma)| = \frac{1}{2^{nk}} \Or\left(\frac{k^2}{2^n}\right)$, which is what we wanted to prove.
\end{proof}

Using Lemma~\ref{lemma orthogonal sum weingarten} we get
\begin{equation}
\begin{split}
    \sum_{\pi \in S_k} \left| \Wg_{\O}( \pi) - \frac{1}{2^{nk}} \delta_{\pi,\mathbbm{1}}  \right|
    \leq \frac{1}{2^{nk}} \cdot \left(\mathcal{O}\left(\frac{k^{7}}{2^{2n}}\right) + \Or\left(\frac{k^2}{2^n}\right)\right) = \frac{1}{2^{nk}}\Or\left(\frac{k^{7/2}}{2^n}\right),
\end{split}
\end{equation}
using that $k^{7/2} < 2^n$. From Eq.~\eqref{eq: bound spectral norm orth epr} and Lemma~\ref{lemma: relative error EPR}, this leads to a relative error $\mathcal{O}\left( \frac{k^{7/2}}{2^{n}}\right)$, as claimed.
This completes our proof of step 3, and hence of Theorem~\ref{thm: orthogonal to unitary}.
\end{proof}

\subsection{Orthogonal anti-concentration}\label{section: orthogonal anti-concentration}

\begin{theorem}\label{theoren:orthogonal_anti-concentrate}
   Let $\mathcal{E}$ be the ensemble of orthogonal matrices drawn from the two-layer superblock architecture. This gives us two layers of $m/2 = n/(2\xi)$ superblocks each, where $\xi = \Or(\log(n/\epsilon))$. This ensemble anti-concentrates for $k=2$, i.e. 
    \begin{equation}
    \mathbb{E}_{O \sim \mathcal{E}}\left|\bra{0} O \ket{0_n}\right|^4 \leq (1 + \epsilon)\ \mathbb{E}_{O \sim \O}\left|\bra{0} O \ket{0_n}\right|^4.
    \end{equation}
\end{theorem}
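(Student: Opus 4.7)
The plan is to compute $\EE_{O\sim\mathcal{E}}|\bra{0_n}O\ket{0_n}|^4$ in closed form by reducing the two-layer superblock ensemble to a transfer-matrix sum over the $m$ patches, exploiting the tensor-product structure of the orthogonal commutant basis $\{\one,\F,\Pi_o\}$. First, because the two layers are drawn independently and the Haar orthogonal twirl is self-adjoint with respect to the Hilbert--Schmidt inner product, the quantity of interest can be rewritten as
\begin{equation}
    \EE_{O\sim\mathcal{E}}|\bra{0_n}O\ket{0_n}|^4
    = \Tr\!\bigl[\Phi_{\mathrm{2nd}}^{(2)}(\ketbra{0_n}\tn{2})\cdot \Phi_{\mathrm{1st}}^{(2)}(\ketbra{0_n}\tn{2})\bigr],
\end{equation}
where $\Phi_{\mathrm{1st}}^{(2)}, \Phi_{\mathrm{2nd}}^{(2)}$ are the tensor products of Haar orthogonal twirls over the superblocks of the respective layer. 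Because $\ket{0_n}\tn{2}$ factorises across superblocks, each layer's action evaluates via Eq.~\eqref{eqn: orthogonal 2 twirl} to a tensor product of $m/2$ local factors $\alpha(\one+\F+\Pi_o)$ on the $2\xi$-qubit superblocks, with $\alpha = 1/(D_{\mathrm{sb}}(D_{\mathrm{sb}}+2))$ and $D_{\mathrm{sb}} = 2^{2\xi}$.

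The crucial observation is that each commutant generator factorises across the two $\xi$-qubit patches $P,P'$ of a superblock: $\one_{\mathrm{sb}} = \one_P\otimes\one_{P'}$, $\F_{\mathrm{sb}} = \F_P\otimes\F_{P'}$, and using the identity $\ket{\Omega_{\mathrm{sb}}} = \ket{\Omega_P}\otimes\ket{\Omega_{P'}}$ also $\Pi_{o,\mathrm{sb}} = \Pi_{o,P}\otimes\Pi_{o,P'}$. Expanding both layer twirls in the basis $\{\one,\F,\Pi_o\}$ block by block yields a double sum over $(\vec s,\vec t)\in\{\one,\F,\Pi_o\}^{m/2}\times\{\one,\F,\Pi_o\}^{m/2}$, and since operators on distinct patches commute, the trace collapses into a product over patches:
\begin{equation}
    \EE_{O\sim\mathcal{E}}|\bra{0_n}O\ket{0_n}|^4 = \alpha^m \sum_{\vec s,\vec t}\prod_{k=1}^m M_{s_{j(k)},\,t_{i(k)}},
\end{equation}
where $i(k),j(k)$ denote the first- and second-layer block indices containing patch $k$, and $M_{X,Y} = \Tr(XY)$ on two copies of a single patch. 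A direct computation gives $M = d_P\bigl(\begin{smallmatrix}d_P & 1 & 1\\ 1 & d_P & 1\\ 1 & 1 & d_P\end{smallmatrix}\bigr)$ with $d_P = 2^\xi$, i.e.\ the patch-level orthogonal Gram matrix.

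Because the two layers are shifted by exactly one patch along the cycle, summing out the $t$-variables pairwise produces factors of $M^T M = M^2$ (using symmetry of $M$), and the remaining cyclic sum over $\vec s$ collapses to $\Tr(M^m)$. The eigenvalues of $M$ are $d_P(d_P+2)$ (multiplicity one) and $d_P(d_P-1)$ (multiplicity two), giving the closed form
\begin{equation}
    \EE_{O\sim\mathcal{E}}|\bra{0_n}O\ket{0_n}|^4 = \frac{1}{4^n}\cdot\frac{(1+2/d_P)^m+2(1-1/d_P)^m}{(1+2/d_P^2)^m}.
\end{equation}
Comparing with the Haar orthogonal value $3/(2^n(2^n+2))$ and Taylor-expanding in $1/d_P$, the $\Or(1/d_P)$ terms in the numerator cancel exactly, leaving a relative correction of order $m^2/d_P^2 = \Or((n/\xi)^2\, 2^{-2\xi})$. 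Choosing $\xi = \Or(\log(n/\epsilon))$ therefore makes this correction at most $\epsilon$, proving the claimed bound.

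The main obstacle is the patch-level bookkeeping: verifying the factorisation $\Pi_{o,\mathrm{sb}} = \Pi_{o,P}\otimes\Pi_{o,P'}$ (which follows from $\ket{\Omega_{\mathrm{sb}}} = \ket{\Omega_P}\otimes\ket{\Omega_{P'}}$) and carefully tracking how the alternating first/second-layer pattern on the $m$-cycle of patches produces the clean transfer-matrix form $\Tr(M^m)$ after the $t$-variables have been summed out. Once these are established, everything reduces to a $3\times 3$ eigenvalue computation and an elementary Taylor expansion in $m/d_P$.
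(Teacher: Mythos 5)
Your proposal is correct, and its skeleton coincides with the paper's proof: both write the moment as $\Tr[\Phi_{\mathcal{E}_1}(\ketbra{0^n}^{\otimes 2})\,\Phi_{\mathcal{E}_2}(\ketbra{0^n}^{\otimes 2})]$, use the orthogonal $2$-twirl to replace each superblock by $\tfrac{1}{2^{2\xi}(2^{2\xi}+2)}(\one+\F+\Pi_o)$, exploit the patch-wise factorisation $\one=\one_P\otimes\one_{P'}$, $\F=\F_P\otimes\F_{P'}$, $\Pi_o=\Pi_{o,P}\otimes\Pi_{o,P'}$, and reduce everything to the patch-level Gram factors $\Tr(a_ib_j)\in\{2^{2\xi},2^{\xi}\}$. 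Where you diverge is in the final evaluation of the label sum: the paper counts ``walls'' (positions where adjacent labels differ), bounds the number of configurations with $w$ walls by $\binom{2m}{w}3^{\cdot}$ and sums a binomial series to get an upper bound $3(1+\epsilon)$, whereas you recognise the sum as $\Tr(M^m)$ for the symmetric $3\times 3$ transfer matrix $M=d_P\bigl(\begin{smallmatrix}d_P&1&1\\1&d_P&1\\1&1&d_P\end{smallmatrix}\bigr)$ (your bookkeeping $\sum_{t}M_{t,s}M_{t,s'}=(M^2)_{s,s'}$ around the cycle is correct) and diagonalise, obtaining the exact closed form with eigenvalues $d_P(d_P+2)$ and $d_P(d_P-1)$. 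Your route buys an exact value rather than an upper bound, which also substantiates the paper's remark that the ensemble's moment converges to the Haar value itself and not merely to a constant multiple of it; the paper's wall-counting is looser but requires no diagonalisation and generalises more readily when the transfer structure is less clean. One small detail worth making explicit in your last step: to compare against $3/(2^n(2^n+2))$ rather than $3/4^n$, use $(1+2/d_P^2)^m\geq 1+2m/d_P^2\geq 1+2\cdot 2^{-n}$ (valid since $m\geq 2$), so the denominator already absorbs the Haar correction and the only excess is the $\Or(m^2/d_P^2)$ term from the numerator, which your choice $\xi=\Or(\log(n/\epsilon))$ makes at most $\epsilon$.
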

\begin{proof}
Let $\mathcal{E}_1, \mathcal{E}_2$ be the ensembles consisting of orthogonal matrices sampled from the first and second layer of superblocks respectively. We thus have $\Phi_{\mathcal{E}} = \Phi_{\mathcal{E}_2} \circ \Phi_{\mathcal{E}_1}$.
    We compute that
    \begin{equation}
        \begin{split}
            \mathbb{E}_{O \sim \mathcal{E}}\left|\bra{0^n} O \ket{0^n}\right|^4 &= \tr\left[\Phi_{\mathcal{E}_1}(\ketbra{0^n}^{\otimes2}) \Phi_{\mathcal{E}_2}(\ketbra{0^n}^{\otimes2})\right]\\
            &= \frac{1}{2^{2n}(2^{2\xi} + 2)^{m}}\tr\left[\left(\one + \mathbb{F} + \Pi_o\right)^{\otimes m/2} \left(\one + \mathbb{F} + \Pi_o \right)^{\overrightarrow{\otimes m/2}}\right]\\
            &\leq \frac{1}{2^{2n}(2^{2n} + 2)}\tr\left[\left(\one + \mathbb{F} + \Pi_o\right)^{\otimes m/2} \left(\one + \mathbb{F} + \Pi_o\right)^{\overrightarrow{\otimes m/2}}\right],
        \end{split}
    \end{equation}
    where the arrow indicates that the operators $\one + \mathbb{F} + \Pi_o$ act to the patches $(2,3),\ldots, (m, 1)$. Let us now expand the product as a sum
    \begin{equation}
        \begin{split}
            \tr\left[\left(\one + \mathbb{F} + \Pi_o\right)^{\otimes m/2} \left(\one + \mathbb{F} + \Pi_o\right)^{\overrightarrow{\otimes m/2}}\right] &= \sum_{a,b \in \{\one, \mathbb{F}, \Pi_o\}^{m/2}}\tr[a_1b_{m/2}\otimes a_1b_1 \otimes \cdots \otimes a_{m/2}b_{m/2}],
        \end{split}
    \end{equation}
    where $a_ib_j$ acts on two copies of $\xi$ qubits. We can easily compute that
    \begin{equation}
    \tr[a_ib_j] = 
    \begin{cases}
    2^{2\xi} & \text{if } a_i = b_j \\
    2^{\xi} & \text{if } a_i \neq b_j.
    \end{cases}
    \end{equation}
    This means that a term from the sum only depends on the number of \textit{walls}, i.e.\ the number of $i$ such that $a_i \neq b_i$ plus the number of $i$ such that $a_{i} \neq b_{i+1}$. Let us define $w_{a,b}$ as the number of walls for given $a,b \in \{\one, \mathbb{F}, \Pi_o\}^{m/2}$. We define a \textit{room} as the set of patches in between two walls. We thus get that
    \begin{equation}
        \begin{split}
            \frac{1}{2^{2n}}\sum_{a,b \in \{\one, \mathbb{F}, \Pi_o\}^{m/2}} \tr[a_1b_{m/2}\otimes a_1b_1 \otimes \cdots \otimes a_{m/2}b_{m/2}] = \sum_{a,b \in \{\one, \mathbb{F}, \Pi_o\}^{m/2}} \frac{1}{2^{\xi \cdot w_{a,b}}}.
        \end{split}
    \end{equation}
    To upper bound the expression, we will count the number of configurations $a,b$, that give a number of walls $w$. There are $\binom{2m}{w}$ possible positions where the walls could be placed. Furthermore, inside a room (an interval bounded by two walls), we need to have $a_i = b_j$. This can be achieved in three different ways (i.e.\ $a_i$ equals $\one, \mathbb{F}$ or $\Pi_o$). We thus get a factor of $3^{\min\{1,w\}}$, as $\min\{1,w\}$ is the total number of rooms. As we are only interested in upper bounding the expression, we can ignore the fact that two adjacent rooms need to have a different sign ($\one$, $\F$ or $\Pi_o$). We can also ignore the fact that there exists no configuration with exactly 1 wall. Let us upper bound our equation as
    \begin{equation}
        \begin{split}
            \sum_{a,b \in \{\one, \mathbb{F}, \Pi_o\}^{m/2}} \frac{1}{2^{\xi \cdot w_{a,b}}} 
            &\leq \sum_{w=0}^{2m} 3^{\min\{1,w\}} \binom{2m}{w}\frac{1}{2^{\xi \cdot w}}\\
            &\leq 3\sum_{w=0}^{2m} \binom{2m}{w}\frac{1}{2^{\xi \cdot w - \log_2(3) w}}\\
            &= 3 \left( 1 + \frac{1}{2^{\xi - \log_2(3)}}\right)^{2m}\\
            &\leq 3 \exp\left(\frac{2m}{2^{\xi - \log_2(3)}}\right)\\
            &\leq 3\left(1 + \frac{2m}{\log_e(2) \cdot 2^{\xi - \log_2(3)}}\right).
        \end{split}
    \end{equation}
    In the final step we used that $\exp(x) \leq 1 + x/\log_e(2)$ for $0 \leq x\leq \log_e(2)$.
    We use $\xi = \Or(\log(n/\epsilon))$ to get 
    \begin{equation}
    \frac{3 \cdot 2m}{\log_e(2) \cdot 2^{\xi - \log_2(3)}} \leq \epsilon.
    \end{equation}
    Combining all the things we learned, we find that
    \begin{equation}
    \mathbb{E}_{O \sim \mathcal{E}}\left|\bra{0_n} O \ket{0_n}\right|^4 \leq (1 + \epsilon)\frac{3}{2^n(2^n+2)} = (1 + \epsilon)\ \mathbb{E}_{O \sim \O}\left|\bra{0_n} O \ket{0_n}\right|^4,
    \end{equation}
    which is what we set out to prove.
\end{proof}

We note that Theorem \ref{theoren:orthogonal_anti-concentrate} also follows from the computation for random MPS in Ref.~\cite{sauliere_universality_2025} along the lines of Ref.~\cite{magni_anticoncentration_2025}.
In this way, higher-order anticoncentration can be shown as well.

\section{Unitary symplectic group}
\label{sec:symplectic-group}

\subsection{Preliminaries}
We define a \textit{symplectic form} as a $D\times D$ matrix $J$ such that $J^2 = \one$ and $J^T = J^\dagger = -J$.
Note that for such an operator to exist, the dimension of the underlying space has to be even.
We say that a unitary $U\in\U(D)$ is symplectic if $UJU^T=J$ and denote the group of unitary symplectic matrices as $\USp(D,J)\equiv\USp(D)$.
If the context is clear, we will simply call $\USp(D)$ the symplectic group.

We note that we could make different choices of $J$, but that they are all equal up to a basis change. 
The unitary symplectic group acts transitively on the set of pure quantum states, i.e.~for every pair of states $(\psi,\varphi)$ there is a $U\in\USp(D)$ such that $U\ket\psi = \ket\varphi$ \cite{zimboras_symmetry_2015}.
Clearly, this does not depend on the choice of $J$.
A consequence of transitivity is that $\USp(D)$ induces the same measure as $\U(D)$ on pure quantum states and thus the ensemble $\{U\ket\psi\}_{U\sim\USp(D)}$ is an exact state design of any order, see also Ref.~\cite{west_random_2024}.

Another important note is that the mentioned basis change does not need to be local, so when studying the depth of circuits consisting of local gates, it is important to take into account what $J$ is.
Here, we consider symplectic forms over $n$ qubits, $D=2^n$, of the form $J = J_x \coloneqq i^{|x|} \bigotimes_{j=1}^{n} Y^{x_j}$, where $x \in \{0,1\}^n$ is such that the Hamming weight $|x|$ of $x$ is odd.
For instance, if we choose $x = (100\ldots0)$, then $J_1 = iY_1$ is what is usually considered the standard choice of symplectic form and what has been studied in the context of local circuits before~\cite{garcia-martin_architectures_2024, west_random_2024}. 
In this paper, we will however choose $x = (11\ldots 11)$, and assume that $n$ is odd. We thus define $J = i^{n}\bigotimes_{j=1}^n Y_j$. 

Let $U = U_r \otimes \one_{n-r} $ be a local gate acting on $r$ qubits. For $U$ to be symplectic for our choice of $J$, we need that $U_r Y\tn{r} U_r^T = Y\tn{r}$. This means that $U_r$ can only be symplectic as a unitary on $r$ qubits, if $r$ is odd. It is therefore not natural to look at symplectic circuits with local gates that act on 2 qubits. A more natural approach is to use local gates on 3 qubits. For example, it is known that the 3-qubit gate $U_3 \coloneqq \exp(i (Z_1 \otimes Z_2 \otimes Z_3))$, together with all 1-qubit gates, span the symplectic group~\cite{bremner2004fungible}. The number of qubits one wants to apply local gates on, is however really dependent on the choice of $J$.

We quickly review some symplectic Weingarten calculus. The commutant of the symplectic representation $U \rightarrow U^{\otimes k}$ is a representation of the Brauer algebra, but by a different representation than in the orthogonal case. We can now define the representation $R_{\Sp}$ as

\begin{equation}
R_{\Sp}(\sigma) \coloneqq \left(\prod_{j=1}^{2k}J_j^{a_\sigma(j)}\right)\sum_{i_1,\ldots i_{2k}} \left(\prod_{j=1}^{2k} \delta_{i_j, i_{f_\sigma(j)}}\right) \ketbra{i_1\ldots i_k}{i_{k+1}\ldots i_{2k}}\left(\prod_{j=1}^{2k}J_j^{b_\sigma(j)}\right),
\end{equation}
where we defined $a_\sigma(j) = 1$ if $j < f_\sigma(j) \leq k$ and 0 otherwise. Likewise, $b_\sigma(j) = 1$ if $k < j < f_\sigma(j)$ and 0 otherwise. $J_j$ is the operator $J$ acting on subsystem $j$. $R_{\Sp}$ is a representation of $\mathcal{B}_k(-D)$, with $D$ the dimension.

We note that for $k=2$ the commutant of the symplectic group is given by $\one, \mathbb{F}$ and $\Pi_s \coloneqq (J \otimes \one)\ketbra{\Omega_n}(J \otimes \one)$, where we recall that $\ket{\Omega_n} = \frac{1}{2^{-n/2}} \sum_{i\in \{0,1\}^n} \ket{ii}$.
We compute the Gram and Weingarten matrices
\begin{align}\label{eqn:symplectic 2 twirl}
    G_{\Sp} = D \begin{pmatrix} D & 1 & -1 \\ 1 & D & 1 \\ -1 & 1 & D \end{pmatrix},\ 
    \Wg_{\Sp} = \frac{1}{D(D+1)(D-2)} \begin{pmatrix} D-1 & -1 & 1 \\ -1 & D-1 & -1 \\ 1 & -1 & D-1 \end{pmatrix} \,.
\end{align}

Hence,
\begin{equation}
\begin{split}
    \EE_{U\sim \Sp} U\tn{2} X U\tnd{2}
    &=
    \frac{1}{D(D+1)(D-2)}
    \Big(
    \left[ (D-1)\tr(X) - \tr(X\F) + \tr(X\Pi_s) \right] \one \\
    &+
    \left[ -\tr(X) + (D-1) \tr(X\F) - \tr(X\Pi_s) \right] \F \\
    &+
    \left[ \tr(X) - \tr(X\F) + (D-1) \tr(X\Pi_s) \right] \Pi_s
   \Big).
   \label{eq:orthogonal-twirl}
\end{split}
\end{equation}

\subsection{Symplectic group designs}\label{section: symplectic matrix}

Our lower bound on the depth of additive-error symplectic group designs proceeds in an exactly parallel manner to our lower bound on additive-error orthogonal group designs.
We consider the canonical choice $J = i^{n}\bigotimes_{j=1}^n Y_j$, for which we prove the following theorem.
\begin{theorem}\label{theorem: symplectic rel group design}
Let $\mathcal{E}$ be an approximate symplectic unitary $k$-design with additive error $\varepsilon < 1/4$. Then at least one symplectic unitary must have a lightcone covering a $1/4$'th fraction of the qubits. 
    This implies $\mathcal{E}$ must have circuit depth $d = \Omega(n)$ in $1\mathrm{D}$ circuits and $d = \Omega(\log n)$ in all-to-all connected circuits for any $k \geq 2$.
\end{theorem}

For completeness, we can also consider the alternative choice $J = i Y_1$ introduced in~\cite{garcia-martin_architectures_2024}.
For this choice, all the local gates needed to construct global symplectic gates are all orthogonal, except the gates acting on the first qubit~\cite{garcia-martin_architectures_2024}. This implies that the depth lower bounds for orthogonal group designs  (in Section~\ref{section:matrix_orthogonal}, based on~\cite{schuster_random_2025}) immediately hold for this choice of symplectic group designs as well. 

\begin{proof}
    To prove the theorem, we will show that any symplectic unitary with light-cone smaller than the system size can be distinguished from a Haar-random symplectic unitary using two parallel queries and constant-depth quantum and classical operations.
    
    Let $U$ be a short-depth symplectic unitary and $\ket{J_n} \equiv (J \otimes \mathbbm{1}) \ket{\Omega_n}$ the EPR state acted on by $J$ on the first copy.
    The state $\ket{J_n}$ can be prepared in constant depth since it is a tensor product of $n$ two-qubit stabilizer states, $\ket{J_n} = \ket{J_1}^{\otimes n}$.
    Our distinguishing protocol proceeds as follows.
    We first perturb the $J$-state by a single-qubit Pauli operator $Z_1$ on the first qubit (taking an arbitrary but fixed ordering of the qubits) of the first copy.
    We then apply $U$ to both copies in parallel.
    Finally,we choose $i \in [n]$ uniformly at random and we measure the expectation value of the local $J$-projector, $P_J^{(i)} = \mathbbm{1}^{\otimes 2(n-1)} \otimes \dyad*{J^{(i)}_1}$, where $\ket*{J^{(n)}_1}$ is the $J$-state on the $i$'th qubit of the two copies.

    This procedure yields an expectation value,
    \begin{equation*}
        \mathbb{E}_{i\sim [n]}\bra{\Omega_n}(J Z_1 \otimes \one) (U^\dagger \otimes U^\dagger) P_J^{(i)} (U\otimes U)(Z_1 J \otimes \one) \ket{\Omega_n}
        =
        \mathbb{E}_{i\sim [n]}\bra{\Omega_n}(J U Z_1 U^\dagger \otimes \one)  P_J^{(i)}(U Z_1 U^\dagger J \otimes \one) \ket{\Omega_n}.
    \end{equation*}
    In the second expression, we used $(\mathbbm{1} \otimes U) \ket{\Omega_n} = (U^T \otimes \mathbbm{1}) \ket{\Omega_n}$ and $J U^T = U^\dagger J$ and $U^* J = J U$.
    Note that $P_J^{(i)}$ and $U Z_1 U^T$ commute whenever $i$ is not in the lightcone of $U Z_1 U^T$. The proof now finished in the same manner as Theorem \ref{theorem: orthogonal rel group design}.
\end{proof}

Moreover, we find a linear lower bound for generalized random brickworks.

\begin{theorem}\label{theorem: symplectic rel group design generalized brickwork}
Let $\mathcal{E}$ be an approximate symplectic unitary $k$-design with relative error $\varepsilon < 1$. If $\mathcal{E}$ is a generalized random brickwork, then $d = \Omega(n)$.
\end{theorem}

\begin{proof}
    We prepare our system in the state $(Z_1\otimes \one) \ket{J_n}$ and then apply a symplectic unitary on both systems. The probability that we measure $(\one \otimes J)\ket{0^n}\tn{2}$ is
    \begin{equation}
    \begin{split}
        |\bra{J_n}(Z_1\otimes \one) U\tn{2}(\one \otimes J)\ket{0^n}\tn{2}|^2 = |\bra{0^n}U^\dagger Z_1 U\ket{0^n}|^2
    \end{split}
    \end{equation}
    For $U$ drawn from the symplectic Haar measure, we get
    \begin{equation}
    \begin{split}
        \E_{U\sim \USp} |\bra{J_n}(Z_1\otimes \one) U\tn{2}(\one \otimes J)\ket{0^n}\tn{2}|^2
        &= \frac{\tr[(Z_1\otimes \one)\ketbra{J_n}(Z_1\otimes \one)(2^n\one - 2\F + 2^n \Pi_s)]}{2^n(2^n+1)(2^n-2)}\\
        &= \frac{1}{2^n(2^n+1)}.
    \end{split}
    \end{equation}
    We note that the symplectic Clifford group (i.e. all Clifford gates that are symplectic) form a symplectic $2$-design for any number of qubits (this group is also called the semi-real Clifford group \cite{bolt1961clifford}, and corresponds to case $B$ in Theorem 1.6 in \cite{guralnick2005decompositions} which proves this group is a $3$-design for the unitary symplectic group). In the case that $\mathcal{E}$ is a generalized random brickwork, we can follow the same argument as for the Clifford and orthogonal case. We thus get that
    \begin{equation}
        \E_{U\sim \mathcal{E}}|\bra{0^n}U^\dagger Z_1 U\ket{0^n}|^2 \geq \Omega(p^d),
    \end{equation}
    where $p$ is the probability that a local gate (in the symplectic Clifford group) is not entangling. We get an architecture-independent linear lower bound on the depth of symplectic relative-error designs.
\end{proof}

\subsection{Symplectic state designs: relative error}\label{section: symplectic rel state}

In this section we prove a lower bound on the depth of relative-error state designs with local symplectic gates.
\begin{theorem}\label{theorem:symplectic_rel_state}
    Let $\mathcal{E}$ be an ensemble of local symplectic circuits  with respect to $J = i^{n}\bigotimes_{j=1}^n Y_j$ of depth $d$, such that $\Phi_\mathcal{E}(\ketbra{0^n}^{\otimes 2})$ is a relative-error $\epsilon$-approximate state $2$-design for $\epsilon < 1$. The lightcone starting from any qubit should contain all other qubits. In particular this implies a depth lower bound of  $\Omega(n)$ in $1\mathrm{D}$ architectures and $\Omega(\log n)$ in the all-to-all architecture.
\end{theorem}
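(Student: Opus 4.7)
The plan is to adapt the lightcone technique used for the Clifford and orthogonal cases (Theorems~\ref{theorem:clifford_rel_state} and~\ref{theorem:orthogonal_rel_state}) to the symplectic setting, using the factorizing positive-semidefinite operator $Q \coloneqq \ketbra{J_n}$ with $\ket{J_n} = (J\otimes\one)\ket{\Omega_n}$. For the canonical choice $J = i^n\bigotimes_{j=1}^n Y_j$ with $n$ odd assumed by the theorem, $\ket{J_n}$ factorizes into a product of $n$ two-copy singlets and $Q$ is proportional (up to sign) to the commutant element $\Pi_s$ entering Eq.~\eqref{eqn:symplectic 2 twirl}. Crucially, $U\tn{2} Q U\tnd{2} = Q$ for every symplectic $U$.

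First, I would evaluate the trace inner product of $\Phi_\mathcal{E}(\ketbra{0^n 0^n})$ with the perturbed psd probe $P_i \coloneqq (Z_i\otimes\one)Q(Z_i\otimes\one)$. Using the invariance of $Q$, the EPR identity $(A\otimes\one)\ket{\Omega_n} = (\one\otimes A^T)\ket{\Omega_n}$, the relation $U^* = -JUJ$ derived from $UJU^T = J$ together with $J^2 = -\one$, and the observation that $J\ket{0^n} = -\ket{1^n}$ for $n$ odd, the single-circuit quantity $\Tr[P_i\, U\tn{2}\ketbra{0^n 0^n}U\tnd{2}]$ collapses to $\tfrac{1}{2^n}|\bra{0^n}U^\dagger Z_i U\ket{1^n}|^2$. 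On the Haar side, combining Eq.~\eqref{eqn:symplectic 2 twirl} with the elementary evaluations $\tr(Z_i\tn{2}) = 0$, $\tr(Z_i\tn{2}\F) = 2^n$, and $\tr(Z_i\tn{2}\Pi_s) = 2^n$ (the last following from $JZ_iJ = Z_i$, a consequence of $Y_iZ_iY_i = -Z_i$) yields $\Phi_{\Sp_H}(Z_i\tn{2}) = (\F + \Pi_s)/(2^n+1)$, hence the strictly positive value $\Tr[P_i\Phi_{\Sp_H}(\ketbra{0^n 0^n})] = 2/(2^n(2^n+1))$.

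The lightcone step is then immediate. For any circuit $U$ of depth $d$ composed of local symplectic gates, the Heisenberg-evolved Pauli $U^\dagger Z_i U$ is supported in the backward lightcone $l_i$ of qubit $i$. If some qubit $j \neq i$ lies outside $l_i$, we may write $U^\dagger Z_i U = W_{l_i}\otimes\one_{\bar l_i}$, whereupon
\begin{equation*}
\bra{0^n}U^\dagger Z_i U\ket{1^n} = \bra{0_{l_i}}W_{l_i}\ket{1_{l_i}}\cdot\braket{0_j}{1_j} = 0
\end{equation*}
vanishes pointwise in $U$. Consequently $\Tr[P_i\Phi_\mathcal{E}(\ketbra{0^n 0^n})] = 0$, contradicting the relative-error bound $\Tr[P_i\Phi_\mathcal{E}(\ketbra{0^n 0^n})] \geq (1-\epsilon)\Tr[P_i\Phi_{\Sp_H}(\ketbra{0^n 0^n})] > 0$ for $\epsilon < 1$. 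Applying the argument with each qubit $i$ in turn shows every lightcone must contain the full system, and the depth bounds $d = \Omega(n)$ in the $1\mathrm{D}$ architecture and $d = \Omega(\log n)$ in the all-to-all architecture then follow from the usual lightcone growth $|l_i| = \Or(d)$ and $|l_i| = \Or(r^d)$, respectively.

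The main place to be careful is the symplectic bookkeeping for our nonstandard form $J = i^n\bigotimes_j Y_j$: the relations $J^T = J^\dagger = -J$ and $J^2 = -\one$ (together with $UJU^T = J$) must be combined consistently to derive $U^* = -JUJ$, and one has to keep track of the sign in $\Pi_s = -2^n Q$ implicit in Eq.~\eqref{eqn:symplectic 2 twirl}. Conceptually, however, this argument is cleaner than in the Clifford or orthogonal case: because the detection quantity $\bra{0^n}U^\dagger Z_i U\ket{1^n}$ vanishes on each circuit individually rather than only on average, the conclusion requires no local-independence assumption on $\mathcal E$, in agreement with the theorem's statement.
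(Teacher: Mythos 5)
Your proof is correct and follows essentially the same route as the paper: both perturb the symplectic-invariant state $\ket{J_n}=(J\otimes\one)\ket{\Omega_n}$ by a single-qubit Pauli ($Z_i$ in your case, $Y_1$ in the paper's), show the resulting psd probe has expectation exactly zero on any circuit whose lightcone misses some qubit, and compare with the Haar value $2/(2^n(2^n+1))$ to contradict the relative-error condition for $\epsilon<1$. The only cosmetic differences are your reduction to the single-copy matrix element $|\bra{0^n}U^\dagger Z_i U\ket{1^n}|^2$ and your Weingarten evaluation of the Haar value, where the paper instead factors off local singlets outside the lightcone and invokes the symplectic/unitary state-ensemble equivalence together with the symmetric-subspace projector.
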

Note that the theorem makes a stronger statement than Theorems~\ref{theorem:clifford_rel_state} and~\ref{theorem:orthogonal_rel_state} in the all-to-all connectivity. Also, we do not require the circuit to be locally independent.
\begin{proof}
    Let $\mathcal{E}$ be an ensemble of locally symplectic circuits of depth $d < n$. We define $\ket{J} \coloneqq \left(J \otimes \one_{n\times n}\right) \ket{\Omega_n}$, where $\ket{\Omega_n}$ is the $n$ qubit Bell state.

    The idea of the proof is that we evaluate the symplectic twirl $\Phi_{\Sp}$ and the $\Phi_\mathcal{E}$ on initial state $\ketbra{0^n}\tn{2}$, and we show that they differ when we take the trace inner-product with the psd operator $(Y_1 \otimes \one)\ketbra{J}(Y_1 \otimes \one)$. More precisely, we show the following two claims.
    
    Claim 1: When evaluating the twirl over $\mathcal{E}$ we have that
        \begin{equation}
        \begin{split}
            \tr[(Y_1\otimes \one) \ketbra{J}(Y_1\otimes \one) \Phi_{\mathcal{E}}\left(\ketbra{0^n}\tn{2}\right)] = 0.
        \end{split}
        \end{equation}

    Claim 2: When evaluating the symplectic Haar twirl we get
        \begin{equation}
        \begin{split}
            \tr[(Y_1\otimes \one) \ketbra{J}(Y_1\otimes \one) \Phi_{\Sp}\left(\ketbra{0^n}\tn{2}\right)] > 0.
        \end{split}
        \end{equation}

    These two claims together directly imply that $(1-\epsilon) \Phi_{\Sp} \left(\ketbra{0^n}^{\otimes 2}\right) \preceq \Phi_{\mathcal{E}} \left(\ketbra{0^n}^{\otimes 2}\right)$ cannot be true for $\epsilon < 1$, thus proving the theorem.
    
    \begin{proof}[Proof of Claim 1:]
        We note that $U^{\otimes 2} \ket{J} = \ket{J}$ for every symplectic $U$. For $U\in \mathcal{E}$, we then have $U^{\otimes 2} (Y_j\otimes \one) \ket{J} = \left(U_{l}^{\otimes 2} (Y_j\otimes \one) \ket{J_{{x_{l}}}}\right) \otimes \ket{J_{\overline{x_l}}}$, where $U_l$ is the part of the circuit $U$ that lies in the lightcone of qubit $j$.
        All local gates outside of this lightcone vanish, which leaves us with $\ket{J_{\overline{x_l}}}$ on the qubits outside of the lightcone.
        We can thus compute
        \begin{equation}
        \begin{split}
            &\tr[(Y_1\otimes \one) \ketbra{J}(Y_1\otimes \one) \Phi_{\mathcal{E}}\left(\ketbra{0^n}\tn{2}\right)]\\
            &= \bra{J}(Y_1\otimes \one) \Phi_{\mathcal{E}}\left(\ketbra{0^n}\tn{2}\right)(Y_1\otimes \one) \ket{J}\\
            &= \mathbb{E}_{U\sim \mathcal{E}} \left[\bra{0^{n}}\tn{2} \left(\left(U_{l}\tn{2} (Y_1\otimes \one) \ketbra{J_{x_l}}(Y_1\otimes \one) U_{l}\tnd{2} \right) \otimes \ketbra{J_{\overline{x_l}}}\right) \ket{0^{n}}\tn{2}\right]\\
            &= 0.
        \end{split}
        \end{equation}
        In the final step, we used that $\left|\braket{00}{J_1}\right| = \frac{1}{\sqrt2}\left|\bra{00} (\ket{10} - \ket{01})\right| = 0$.
          \phantom\qedhere
    \end{proof}

    \begin{proof}[Proof of Claim 2:]
        Uniformly random ensembles of symplectic and unitary states are indistinguishable~\cite{west_random_2024}. We can thus replace the symplectic twirl with the unitary twirl:
        \begin{equation}
        \Phi_{\Sp}\left(\ketbra{0^n}^{\otimes 2}\right) = \Phi_{\U}\left(\ketbra{0^n}^{\otimes 2}\right) = \frac{2}{2^n(2^n+1)} P_{\sym},
        \end{equation}
        where $P_{\sym} = \frac12 (\one + \mathbb{F})$ is the projector onto the symmetric subspace.
        Note that $Y_1J$ is a symmetric operator, i.e.\ $Y_1J = (Y_1J)^{T}$. This means that 
        \begin{equation}
        P_{\sym}(Y_1\otimes \one) \ket{J} = P_{\sym} (Y_1J \otimes \one)\ket{\Omega_n} = (Y_1\otimes \one) \ket{J}.
        \end{equation}
        Hence       
        \begin{equation}
        \begin{split}
            \tr[(Y_1\otimes \one) \ketbra{J}(Y_1\otimes \one) \Phi_{\Sp}\left(\ketbra{0^n}\tn{2}\right)]
            &= \bra{J}(Y_1\otimes \one) \Phi_{\Sp}\left(\ketbra{0^n}^{\otimes 2}\right)(Y_1\otimes \one) \ket{J}\\
            &= \bra{J}(Y_1\otimes \one) \left(\frac{2}{2^n(2^n+1)} P_{\sym}\right) (Y_1\otimes \one) \ket{J}\\
            &= \frac{2}{2^{n}(2^n+1)} > 0,
        \end{split}
        \end{equation}
            \phantom\qedhere
    \end{proof}
    \noindent which is what we set out to prove.
\end{proof}
\subsection{Symplectic state designs: additive error} \label{section: symplectic add state}
In this section we prove that we can construct additive-error state designs in $\Or(\log(n))$ depth, using local symplectic gates. We use a similar approach as in the orthogonal case in Section~\ref{section:orthogonal add state}.
\begin{theorem} [Equivalence of symplectic and unitary designs on unentangled states] \label{thm: symplectic to unitary}
    Consider a state $\rho$ on $\mathcal{H}^{\otimes k} \otimes \mathcal{A}$ that obeys the PPT condition on each copy of $\mathcal{H}$. Then the Haar symplectic twirl of $\rho$ is equal to the Haar unitary twirl of $\rho$ up to additive error
    \begin{equation}
        \lVert \Phi_{\Sp}(\rho) - \Phi_{\U}(\rho) \rVert_1 \leq \mathcal{O}(k^{7/2}/2^n) + \mathcal{O}(k/2^{n/2}).
    \end{equation}
\end{theorem}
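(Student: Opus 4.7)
The proof follows the five-step blueprint of Theorem~\ref{thm: orthogonal to unitary}, with the orthogonal commutant replaced by the symplectic one throughout. Steps~1 and 5 (idempotence of $\Phi_{\Sp_H}$, and $\Phi_{\U_H}\circ\Phi_{\Sp_H}=\Phi_{\U_H}$) are immediate from left-invariance of Haar on $\U(2^n)\supset\Sp(2^n)$. The work is in constructing a \emph{symplectic distinct subspace} projector $\Pi^s_{\mathrm{dist}}$ that annihilates every non-permutation Brauer-algebra generator $R_{\Sp}(\sigma)$, $\sigma\in M_k\setminus S_k$, and then using it to sandwich the middle of the double twirl so that $\Phi_{\Sp_H}$ may be replaced by $\Phi_{\U_H}$ with small relative error.

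For Step~2 the plan is to choose $\Pi^s_{\mathrm{dist}}$ as a $J$-twisted version of $\PD$, e.g.\ the joint kernel of suitable ``$J$-pair'' projectors $\Pi^s_{ij}$ adapted to the action of $R_{\Sp}$. With such a choice, one gets an operator inequality $\one-\Pi^s_{\mathrm{dist}}\le\sum_{i<j}\Pi^s_{ij}$ by construction, and the gentle measurement lemma reduces the cost of inserting $\bs{\Pi}^s_{\mathrm{dist}}$ to controlling $\sum_{i<j}\tr(\Pi^s_{ij}\Phi_{\Sp_H}(\rho))$. Each such trace is bounded via the self-adjointness of the twirl and the two-copy symplectic twirl formula~\eqref{eq:orthogonal-twirl}: $\Phi_{\Sp_H}(\Pi^s_{ij})$ is a linear combination of $\one_{ij},\F_{ij}$ and $\Pi_{s,ij}$ with coefficients of order $1/2^n$, each of which has a partial transpose $\Gamma_i$ of unit (or smaller) spectral norm (for instance $\Pi_{s,ij}^{\Gamma_i}\propto J_iJ_j\F_{ij}$). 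H\"older's inequality combined with the PPT hypothesis $\lVert\rho^{\Gamma_i}\rVert_1=1$ thus yields $\tr(\Pi^s_{ij}\Phi_{\Sp_H}(\rho))=\Or(1/2^n)$, and summing over the $\binom{k}{2}$ pairs gives the target $\Or(k/2^{n/2})$ additive cost. Removal of the projector after the unitary twirl in Step~4 costs the same amount, using that $\Phi_{\U_H}(\rho)$ is still PPT on every copy (the unitary twirl commutes with the partial transpose on any single copy).

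For Step~3, within the symplectic distinct subspace $\Phi_{\Sp_H}$ and $\Phi_{\U_H}$ are compared via Lemma~\ref{lemma: relative error EPR}. Because $\bs{\Pi}^s_{\mathrm{dist}}$ annihilates every non-permutation term in $[\Phi_{\Sp_H}\otimes\one](\ketbra{\Omega_n})$, the relative error is controlled by $\sum_{\pi\in S_k}|\Wg_{\Sp}(\pi)-2^{-nk}\delta_{\pi,\one}|$. The asymptotic bound $|\Wg_{\Sp}(\pi)|=\Or(2^{-n(k+|\pi|)})$ follows from Theorem~4.11 of Ref.~\cite{collins2017weingarten} (the sign differences between orthogonal and symplectic Weingarten expansions drop out upon taking absolute values), so the Stirling-number and generating-function manipulation of Lemma~\ref{lemma orthogonal sum weingarten} carries over essentially verbatim and yields the desired $\Or(k^{7/2}/2^n)$ bound.

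The principal obstacle is the construction in Step~2: $R_{\Sp}(\sigma)$ sandwiches $R_{\O}(\sigma)$ between $J$-factors supported on a $\sigma$-dependent subset of copies, so a naive ``copy-wise $J$-conjugation'' of the unitary distinct subspace will not automatically cancel all the $J$s. A careful diagrammatic argument, or a subspace defined directly as the intersection of kernels of $J$-paired projectors, is required to verify that the vanishing property holds uniformly in $\sigma\in M_k\setminus S_k$ while simultaneously keeping the operator inequality $\one-\Pi^s_{\mathrm{dist}}\le\sum_{i<j}\Pi^s_{ij}$ tight enough to yield the $\Or(k/2^{n/2})$ scaling.
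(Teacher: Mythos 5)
Your proposal follows essentially the same route as the paper's proof: the same five-step reduction, insertion and removal of a symplectic analogue of the distinct-subspace projector controlled by the gentle measurement lemma, the two-copy symplectic twirl together with H\"older's inequality and the PPT hypothesis for the $\Or(k/2^{n/2})$ contributions, and the EPR-based relative-error lemma combined with symplectic Weingarten asymptotics for the $\Or(k^{7/2}/2^n)$ term. The one obstacle you leave open---constructing the projector so that it both annihilates every $R_{\Sp}(\sigma)$ with $\sigma\in M_k\setminus S_k$ and satisfies $\one-\Pi\leq\sum_{i<j}\widehat{\Pi}_{ij}$---is resolved in the paper exactly by the ``intersection of kernels'' option you mention: the symplectic distinct subspace is the span of computational basis strings $(x_1,\ldots,x_k)$ with $\dyad{x_i}\neq J\dyad{x_j}J$ for all $i\neq j$ (since $J\propto Y^{\otimes n}$, this simply forbids $x_i$ from being the bitwise complement of $x_j$), with $\widehat{\Pi}_{ij}=\sum_{x}\dyad{x}_i\otimes(J\dyad{x}J)_j$. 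All of these operators are diagonal in the computational basis, so the operator inequality is immediate, and the annihilation property holds because any $\sigma\in M_k\setminus S_k$ contains a contraction of two copies on the same side whose $R_{\Sp}$-representation is supported, thanks to the $J$ prefactors in $R_{\Sp}$, on strings with $x_i$ equal to the complement of $x_j$, which lie outside the subspace; hence the worry about $J$ factors not cancelling does not materialize. Two small corrections: the paper invokes Theorem~4.10 (not 4.11) of Ref.~\cite{collins2017weingarten} for the symplectic Weingarten function, and your justification of Step~4 should say that the unitary twirl \emph{preserves} the PPT condition (conjugation by $\bar{U}$ on copy $i$ and $U$ elsewhere preserves positivity of $\rho^{\Gamma_i}$) rather than that it commutes with the partial transpose. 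With these amendments your argument matches the paper's proof.
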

Just as in the orthogonal case, the theorem directly implies that we can construct additive-error symplectic state designs in depth~$\Or(\log(kn/\epsilon))$.
\begin{proof}
We define the \textit{symplectic distinct subspace} as the space spanned by strings $(x_1,\ldots, x_k)$ such that $\dyad{x_i} \neq J \dyad{x_j} J$ for all $i\neq j$. Let us define $\SbPD$ as the projector onto this space. The proof of the theorem follows similar steps as the proof of Theorem~\ref{thm: orthogonal to unitary}, except that we project onto the symplectic distinct subspace.

\begin{enumerate}
        \item The Haar symplectic twirl squares to itself, $\Phi_{\Sp} = \Phi_{\Sp} \circ \Phi_{\Sp}$.
        
        \item We prove that
        \begin{equation}
            \left\lVert \big[\SbPD \circ \Phi_{\Sp} \circ \SbPD \circ \Phi_{\Sp} \big](\rho) - \Phi_{\Sp} \circ \Phi_{\Sp} (\rho) \right\rVert_1 \leq 6 k / 2^{n/2},
        \end{equation}
        where $\SbPD(\cdot) \equiv \SbPD (\cdot) \SbPD$ denotes conjugation by the symplectic distinct subspace projector.

        \item Within the symplectic distinct subspace, the Haar symplectic twirl and Haar unitary twirl are equal up to small relative error,
        \begin{equation}
            \left(1-\varepsilon' \right) \big[ \SbPD \circ \Phi_{\U} \circ \SbPD \big]
            \preceq
            \big[ \SbPD \circ \Phi_{\Sp} \circ \SbPD \big]
            \preceq
            \left(1+\varepsilon'\right) \big[ \SbPD \circ \Phi_{\U} \circ \SbPD \big],
        \end{equation}
        with $\varepsilon' = \mathcal{O}(k^{7/2}/2^n)$.

        \item One can remove the projectors onto the symplectic distinct subspace after the Haar symplectic and Haar unitary twirls, up to small additive error,
        \begin{equation}
            \left\lVert \big[ \SbPD \circ \Phi_{\U} \circ \SbPD \circ \Phi_{\Sp} \big](\rho) - \Phi_{\U} \circ \Phi_{\Sp}(\rho) \right\rVert_1 \leq 6 k / 2^{n/2}.
        \end{equation}

        \item The composition of the Haar unitary and Haar symplectic twirl is equal to the Haar unitary twirl, $\Phi_{\U} \circ \Phi_{\Sp} = \Phi_{\U}$.
    \end{enumerate}

We start by proving the symplectic version of step 2. Similarly as in the orthogonal case, we can compute that
\begin{equation}
        \big\lVert \big[\SbPD \circ \Phi_{\Sp} \circ \SbPD \circ \Phi_{\Sp} \big] (\rho)  - \big[\Phi_{\Sp} \circ \Phi_{\Sp} \big](\rho) \big\rVert_1 
        \leq 4 \sqrt{1 - \tr( \Phi_{\Sp}(\rho)\SbPD) }.
\end{equation}

We note that $\one-\SbPD \leq \sum_{i<j} \widehat{\Pi}_{ij}$, where $\widehat{\Pi}_{ij} = \sum_{x \in \{0,1\}^n} \dyad{x}_i \otimes (J\dyad{x}J)_j$. It thus follows that 
\begin{equation}
    4 \sqrt{1 - \tr( \Phi_{\Sp}(\rho)\SbPD) }
    \leq 4 \sqrt{ \sum_{i<j} \tr(\widehat{\Pi}_{ij} \Phi_{\Sp}(\rho) ) }
    = 4 \sqrt{ \sum_{i<j} \tr(\Phi_{\Sp}(\widehat{\Pi}_{ij}) \rho)  }. 
\end{equation}

We know that $\{\one_{ij}, S_{ij}, \Pi_{s,ij}\}$ is a basis of the commutant of the symplectic group over the subsystem $i,j$. We note that $\tr(\Pi_{s,ij} \widehat{\Pi}_{ij}) = 2^n$. Using Eq.~\eqref{eqn:symplectic 2 twirl} we compute that
\begin{equation}
\begin{split}
    \Phi_{\Sp}(\widehat{\Pi}_{ij}) &= \frac{2^n-1}{(2^n + 1)(2^n - 2)}(\one_{ij} + \F_{ij} + \Pi_{s,ij}).
\end{split}
\end{equation}

We can use the PPT condition to bound the last term,

\begin{equation}
    \begin{split}
        \tr( \Pi_{s,ij} \rho )
        &= \tr((\mathbbm{1}\otimes J)_{ij} \Pi_{o,ij} (\mathbbm{1}\otimes J)_{ij} \rho )\\
        &= \tr( ( \Pi_{o,ij} )^{\Gamma_i} (\mathbbm{1}\otimes J)_{ij}\rho^{\Gamma_i}(\mathbbm{1}\otimes J)_{ij} )\\
        &\leq \lVert ( \Pi_{o,ij} )^{\Gamma_i} \rVert_\infty \cdot \lVert (\mathbbm{1}\otimes J)_{ij}\rho^{\Gamma_i}(\mathbbm{1}\otimes J)_{ij} \rVert_1\\
        &= \lVert \F_{ij}  \rVert_\infty \cdot 1\\
        &= 1.
    \end{split}
\end{equation}

This all adds up to

\begin{equation}
    \tr(\Phi_{\Sp}(\widehat{\Pi}_{ij}) (\rho) ) \leq \frac{3(2^n-1)}{(2^n + 1)(2^n - 2)} \leq \frac{4}{2^n}.
\end{equation}

Combining the things we learned gives
\begin{equation}
\big\lVert \big[\SbPD \circ \Phi_{\Sp} \circ \SbPD \circ \Phi_{\Sp} \big] (\rho)  - \big[\Phi_{\Sp} \circ \Phi_{\Sp} \big](\rho) \big\rVert_1 \leq 4\sqrt{\frac{k(k-1)}{2} \frac{4}{2^n}} \leq 6\frac{k}{2^{n/2}}.
\end{equation}
Step 4 is done similarly.

To prove step 3 we see that
\begin{equation}
    [\Phi_{\Sp} \otimes \mathbbm{1}] (\ketbra{\Omega_n}) = \frac{1}{2^{nk}} \sum_{\sigma,\tau \in M_k} \Wg_{\Sp}(\sigma,\tau) [ \tau \otimes \sigma ],
\end{equation}
where $\sigma,\tau$ are summed over the generators of the commutant of the symplectic group.

Multiplying by the symplectic distinct subspace projectors yields
\begin{equation}
    [\SbPD \otimes \SbPD] \, [\Phi_{\Sp} \otimes \mathbbm{1}] (\ketbra{\Omega_n}) \, [\SbPD \otimes \SbPD]  = \frac{1}{2^{nk}} \sum_{\pi,\tilde{\pi} \in S_k} \Wg_{\Sp}(\pi,\tilde{\pi}) \, [ \SbPD  \pi \, \SbPD \otimes \SbPD  \tilde{\pi} \, \SbPD ].
\end{equation}
Note that $\pi,\tilde{\pi}$ sum over $S_k$. This follows since the elements of the commutant that are not a permutation vanish when multiplied with the projector. Similarly as in the orthogonal case we now get that
\begin{equation}
\lVert [\SbPD \otimes \SbPD] \, [(\Phi_{\Sp}-\Phi_a) \otimes \mathbbm{1}] (\ketbra{\Omega_n}) \, [\SbPD \otimes \SbPD] \rVert_\infty \leq \sum_{\pi \in S_k \setminus \one} \left| \Wg_{\Sp}( \pi) \right| + \left| \Wg_{\Sp}( \mathbbm{1} ) - \frac{1}{2^{nk}} \right|.
\end{equation}
We use that $|\Wg_{\Sp}(\mathbbm{1}) - \frac{1}{2^{nk}}| \leq \frac{1}{2^{nk}}\mathcal{O}\left(\frac{k^{7/2}}{2^{2n}}\right)$, which follows from Theorem~4.10 in~\cite{collins2017weingarten}. An argument similar as the proof of Lemma~\ref{lemma orthogonal sum weingarten} tells us that $\sum_{\pi \in S_k \setminus \one} \left| \Wg_{\Sp}( \pi) \right| \leq \frac{1}{2^{nk}}\Or\left(\frac{k}{2^{n}}\right)$. We get that
\begin{equation}
\lVert [\SbPD \otimes \SbPD] \, [(\Phi_{\Sp}-\Phi_a) \otimes \mathbbm{1}] (\ketbra{\Omega_n}) \, [\SbPD \otimes \SbPD] \rVert_\infty \leq \mathcal{O}(k^{7/2}/2^n).
\end{equation}
An argument similar to the proof of Lemma~\ref{lemma: relative error EPR} concludes the proof of step 3.
\end{proof}

\subsection{Symplectic anti-concentration}\label{section: symplectic anti-concentration}
We know that the symplectic twirl equals the unitary twirl on pure states. This directly implies that the gluing lemma~\cite{schuster_random_2025} also works for the symplectic twirl, provided that the input and the output are both pure states, that factorize in the qubit-direction. We thus have anti-concentration in $\mathcal{O}(\log(n))$ depth for locally symplectic circuits.

\begin{theorem}\label{theorem:symplectic_anti-concentrate}
Let $\mathcal{E}$ be the ensemble of symplectic matrices drawn from the two-layer superblock architecture. This gives us two layers of $m/2 = n/(2\xi)$ superblocks each, where $\xi = \Or(\log(k^2n/\epsilon))$. This ensemble anti-concentrates, i.e. 
    \begin{equation}
    \mathbb{E}_{U \sim \mathcal{E}}\left|\bra{0} U \ket{0_n}\right|^{2k} \leq (1 + \epsilon)\ \mathbb{E}_{U \sim \USp}\left|\bra{0} U \ket{0_n}\right|^{2k}.
    \end{equation}
\end{theorem}
\begin{proof}
    Let $\mathcal{E}$ be the ensemble of superblocks with the superblocks drawn from the unitary symplectic group. We define $\mathcal{E}_{1}, \mathcal{E}_{2}$ as the ensemble of superblocks in the first and second layer respectively. We have that
    \begin{equation}\label{eqn:symplectic superblocks anti-concentration}
    \begin{split}
        \E_{U\sim\mathcal{E}} |\bra{0^n} U \ket{0^n}|^{2k} = \tr\left[\Phi_{\mathcal{E}_{1}}(\ketbra{0^n}\tn{k}) \Phi_{\mathcal{E}_{2}}(\ketbra{0^n}\tn{k})\right].
    \end{split}
    \end{equation}
    As the superblocks are locally drawn independently from Haar random symplectic unitaries, and symplectic unitaries are indistinguishable from unitaries on pure states~\cite{west_random_2024}, we can replace the Haar random symplectic unitaries in Eq.~\eqref{eqn:symplectic superblocks anti-concentration} with Haar random unitaries. By the gluing lemma, we now have that 
    \begin{equation}
    \begin{split}
        \E_{U\sim\mathcal{E}} |\bra{0^n} U \ket{0^n}|^{2k}
        &\leq (1+\epsilon)\E_{U\sim \U} |\bra{0^n} U \ket{0^n}|^{2k}\\
        &= (1+\epsilon)\E_{U\sim\USp} |\bra{0^n} U \ket{0^n}|^{2k},
    \end{split}
    \end{equation}
    if the superblocks act on $\Or\left(\log(nk^2/\epsilon)\right)$ qubits.
\end{proof}

\section{Matchgate group}
In this section, we prove our results on the $n$-qubit matchgate group $\mathcal{M}_{n}$, which models the evolution of free fermions (without particle number preservation). 
Matchgate designs have been applied to shadow estimation of fermionic observables \cite{wan2023matchgate}
, scalable benchmarking of quantum computers \cite{helsen2011matchgate}
, and the development of quantum advantage schemes \cite{oszmaniec2022fermion}. 
Here we will prove that the superblock construction for matchgate designs requires depth  $\Omega(n^{1/3})$ to anti-concentrate (though this is probably not sufficient). We will furthermore provide stronger lightcone-based lower bounds for relative and additive-error matchgate state designs.
\subsection{Preliminaries}
The matchgate group $\M(n)$ is generated by $2$-qubit unitaries of the form $e^{\theta P_i Q_{i+1}}$ where $P,Q\in \{X,Y\}$, supplemented with single qubit $X$ gates, where the qubits are laid out on a line. In this specific geometry these generators conspire to generate a small subgroup of the unitary group with a rich structure. To discuss this structure further we need to introduce the so called Majorana operators. \\

The Majorana operators (on $n$ fermionic modes) are a set of operators $\gamma_i$ for $i \in [2n] $ with the properties that $\gamma_i^2 = I$ and $\gamma_i \gamma_j = - \gamma_j\gamma_i$ for $i \neq j$. They form an algebra that can be represented faithfully on the space $(\mathbb{C}^2)\tn{n}$ of qubit states using Hermitian Pauli operators. There are several ways of doing this, but a convenient way is the so called Jordan-Wigner (JW) isomorphism, which maps
\begin{equation}
\begin{aligned}
\gamma_{2i-1}\to Z_1\ldots Z_{i-1} X_i,\\
\gamma_{2i}\to Z_1\ldots Z_{i-1} Y_i.
\end{aligned}
\end{equation}
It turns out that any matchgate circuit effects an orthogonal transformation at the level of Majoranas; we can write a matchgate unitary as $U(O)$ with $O\in \O(2n)$ an orthogonal operator, which acts on Majorana operators as
\begin{equation}
U(O) \gamma_i U(O)\ct = \sum_{j\in [n]}O_{ij} \gamma_j.
\end{equation}
Through the JW isomorphism, we can thus think of the matchgate group as irreducible projective representation of the Lie group $\O(2n)$ on $(\mathbb{C}^2)\tn{n}$ (sometimes called the Pin representation)\footnote{Often the name matchgate group is reserved for the subgroup $\SO(2n)$. The results we discuss here hold for this group as well with minor alterations to the proofs}. The $k=2$ diagonal representation $(U(O) \cdot U(O)\ct)\tn{2}$ was evaluated in \cite{helsen2011matchgate}, where it was shown that the average over the twofold tensor product projects onto a space spanned by the operators 
\begin{equation}
V_k = \binom{2n}{k}^{-1/2} \sum_{S\subset [2n],\;\; |S|=k}2^{-n}{\gamma_S}\tn{2},
\end{equation}
with $k\in [0:2n]$ and
where the Majorana operator products
\begin{equation}
\gamma_S = \prod_{i\in S} \gamma_i,
\end{equation}
are normally ordered (that is, in increasing order in the set $S\subset [2n]$). Note that the operators $V_k$ are orthonormal under the trace inner product. 
\\

For the purposes of this paper it is also important to evaluate averages over random matchgates that only act on a subset of the qubits. It is important to note that such embedded matchgates only exist if the subset is a sub-interval $[a:b]$ of the interval $[n]$ (with $a< b \in [n]$). These matchgates form a Pin representation of the orthogonal group $O(2(a-b))$ acting on the $a$'th to $b$'th qubit. The second moment average can again be characterized; it has support on the operators $A\otimes V_k^{(a,b)}\otimes B$ where $A$ (resp. $B$) is an arbitrary operator on $(\mathbb{C}^2)\tn{a-1}$ (resp. $(\mathbb{C}^2)\tn{n-b}$), and $V_k^{(a,b)}$
is of the form
\begin{equation}
V_k^{(a,b)} = I\tn{a-1}\otimes \binom{2(a-b)}{k}^{-1/2} \sum_{S\subset [2(a-b)],\;\; |S|=k}2^{-(a-b)}{\gamma^{(a,b)}_S}\tn{2} \otimes I\tn{n-b},
\end{equation}
where $\gamma^{(a,b)}_i$ is the operator obtained by taking the JW representation of $\gamma_i$ and replacing the first $a-1$ $Z$ operators with identity $I$ operators. Note that if $k$ is even this replacement does not change the Majorana product operators i.e. if $S\subset [2(a-b)]$ and $|S|$ even, then
\begin{equation}
\prod_{i\in S}\gamma_{i}^{(a,b)} = \prod_{i\in S}\gamma_{i}.
\end{equation}

\subsection{Matchgate anti-concentration}
We prove the following theorem, which constrains how much superblock matchgate circuits anti-concentrate in the generalized sense (i.e. they approximate the value for uniformly random matchgates up to a constant factor).

\begin{theorem}\label{theorem:matchgate_anti-concentrate}
Consider the ensemble of matchgate random circuits in a $1$D superblock construction with block-size $2\xi$, with uniformly random matchgates inside the blocks (and thus total depth $4\xi$). If this ensemble anti-concentrates, that is
\begin{equation}
2^{2n}\mathbb{E}_{U(O)\sim\mathcal{E}}|\bra{0} U(O)\ket{0}|^{4} \leq (1+\epsilon) 2^{2n}\int dO|\bra{0} U(O)\ket{0}|^{4},
\end{equation}
for some constant $\epsilon$, then $\xi = \Omega(n^{1/3}/\log(n))$.
\end{theorem}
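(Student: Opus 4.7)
The plan is to prove the contrapositive: assuming $\xi=o(n^{1/3}/\log n)$, I will construct a constant $\epsilon>0$ for which $2^{2n}\mathbb{E}_{\mathcal{E}}|\langle 0|U(O)|0\rangle|^{4}>(1+\epsilon)\cdot 2^{2n}\int dO\,|\langle 0|U(O)|0\rangle|^{4}$, violating the anti-concentration assumption. The argument is entirely combinatorial in the matchgate commutant basis $\{V_k^{(B)}\}$; no lightcone input is used.

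The first step is to re-express the fourth moment in the Majorana basis. Because the Haar second-moment twirl $\Phi_{B}$ over a $2\xi$-qubit matchgate block $B$ is the orthogonal projection onto $\mathrm{span}\{V_{k}^{(B)}\}$, and since blocks within one layer act on disjoint qubits and $|0\rangle\!\langle 0|^{\otimes 2}$ factorizes across qubits, the layer images $\rho_{i}:=\Phi_{i}(|0\rangle\!\langle 0|^{\otimes 2})$ factorize across the blocks of layer $i$. A direct Majorana expansion of $|0\rangle\!\langle 0|=2^{-n}\sum_{A\subseteq[n]}(-i)^{|A|}\gamma_{S_{A}}$ gives block coefficients $c^{(B)}_{2j}=(-1)^{j}\binom{4\xi}{2j}^{-1/2}2^{-2\xi}\binom{2\xi}{j}$, and using $\mathrm{Tr}(\gamma_{S}^{\otimes 2}\gamma_{S'}^{\otimes 2})=4^{n}\delta_{S,S'}$ one obtains $\mathrm{Tr}(\rho_{1}\rho_{2})=2^{-2n}\sum_{S}W^{(1)}(S)\,W^{(2)}(S)$ with $W^{(i)}(S)=\prod_{l}\binom{2\xi}{k_{l}^{(i)}}/\binom{4\xi}{2k_{l}^{(i)}}$ and $k_{l}^{(i)}=|S\cap B_{l}^{(i)}|/2$. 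The sign factors $(-1)^{|S|/2}$ from $\rho_{1}$ and $\rho_{2}$ square out, so every term is nonnegative and restricting the sum to any subfamily of $S$'s yields a valid lower bound; the target is the Haar value $2^{2n}\binom{2n}{n}^{-1}=\Theta(\sqrt{n})$.

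The second step is to reparametrize using the cell partition induced by the $\xi$-shift: the $2n$ Majorana indices split into $2m$ cells of $2\xi$ Majoranas, with each first- or second-layer block being a pair of consecutive cells. Setting $n_{c}=|S\cap\mathrm{cell}\,c|$, one has $2k_{l}^{(1)}=n_{c_{2l-1}}+n_{c_{2l}}$ and $2k_{l'}^{(2)}=n_{c_{2l'}}+n_{c_{2l'+1}}$ (indices cyclic), forcing all $n_{c}$ to share a common parity. The number of $S$ with a given cell profile $(n_{c})$ is $\prod_{c}\binom{2\xi}{n_{c}}$, and after Stirling the summand takes a large-deviation form in the binary entropies $H(q_{c})$, $H(p_{l}^{(i)})$ with $q_{c}=n_{c}/(2\xi)$ and $p_{l}^{(i)}$ the average of two adjacent $q_{c}$'s. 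Concavity of $H$ shows the exponent is maximized on the uniform configurations $q_{c}\equiv q$ for any fixed $q\in(0,1)$, with second-order fluctuations governed by the $2m$-cycle graph Laplacian.

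The third step is the quantitative estimate. Restricting to the Gaussian neighborhood of the uniform point at an optimally chosen $q$, evaluating the cycle Laplacian determinant $\det'L=2m$ and the zero-mode contribution of effective width $\sqrt{\xi}$, and combining with the uniform-point integrand produces a lower bound on $2^{2n}\mathrm{Tr}(\rho_{1}\rho_{2})$ whose ratio to $\Theta(\sqrt{n})$ is a polynomial-in-$(n,\xi)$ prefactor times $(c/\xi)^{m}$ for an explicit constant $c$. Optimizing over $q$ and tracking the $\log$-corrections from Stirling yields the cutoff $\xi=\Omega(n^{1/3}/\log n)$.

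The hardest part is this prefactor bookkeeping in the discrete Gaussian sum: extracting the precise cutoff $n^{1/3}/\log n$ rather than the naive $n/\log n$ that a crude product bound would give requires simultaneous control of $\log$-corrections in Stirling's formula, the cycle-graph Laplacian spectrum, and the range of $q$ for which the Gaussian expansion remains accurate. A secondary subtlety is verifying that the parity constraint (all $n_{c}$ of the same parity) only contributes a constant factor, and that non-uniform configurations do not dominate the uniform-$q$ neighborhood despite their much larger configuration count; both follow from the Stirling estimates combined with strict concavity of $H$.
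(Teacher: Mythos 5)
Your setup coincides with the paper's: both expand $2^{2n}\Tr(\rho_1\rho_2)$ in the orthonormal Majorana basis of the matchgate second-moment commutant, observe that all terms are nonnegative (the signs $(-1)^{|S|/2}$ square out) so that any restriction of the configuration sum is a valid lower bound, and compare against the global Haar value $\Theta(\sqrt n)$ (your identity $\sum_j\binom{n}{j}^2\binom{2n}{2j}^{-1}=2^{2n}\binom{2n}{n}^{-1}$ is correct). From there, however, the paper takes a much shorter route: it writes the quantity as $2^{2n}\tr(M^{m/2})$ for an explicit entrywise-nonnegative transfer matrix and lower-bounds it by $(2^{4\xi}M_{00})^{m/2}\geq(1+2^{-7}\xi^{-2})^{m/2}$, keeping only the $k=0$ and $k=2$ terms of a single diagonal entry; the per-block excess $1+\Theta(\xi^{-2})$ compounds to $e^{\Theta(n/\xi^{3})}$ and the threshold follows in two lines.

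Your proposed route has a genuine gap, and the bound you claim cannot yield the conclusion. You assert a lower bound whose ratio to $\Theta(\sqrt n)$ is $\mathrm{poly}(n,\xi)\cdot(c/\xi)^m$; for $\xi>c$ this decays geometrically in $m$, so it never exceeds $1+\epsilon$ and produces no contradiction with anti-concentration for any super-constant $\xi$ — in particular no $n^{1/3}$ threshold can be extracted from a bound of that functional form. The threshold is governed by a multiplicative excess of $1+\Theta(\xi^{-2})$ per block over the trivial value $1$, so a successful lower bound must look like $(1+c\xi^{-2})^{\Theta(n/\xi)}$, not $(c/\xi)^{\Theta(n/\xi)}$. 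Moreover, the saddle-point analysis is degenerate in exactly the place where this excess must come from: by concavity the large-deviation exponent vanishes identically on the whole line of uniform profiles $q_c\equiv q$, and the soft modes of the cycle Laplacian ($\mu_j\sim j^2/m^2$) have Gaussian widths $\sim\sqrt{\xi}\,m/j$ that far exceed the range of validity of the local-CLT expansion of the binomials. The decisive contribution therefore lives entirely in the anharmonic corrections that your "prefactor bookkeeping" step defers rather than computes; a naive Gaussian evaluation actually gives a per-cell factor of order $\sqrt 8>1$ independent of $\xi$, which is also not of your claimed form and would, if trusted, imply a near-linear threshold. (Minor points: $\det'L$ for an $N$-cycle is $N^2$, not $2m$; the parity constraint is indeed only a constant factor.) I would recommend replacing the saddle-point step by the transfer-matrix diagonal-entry bound, which makes the whole argument elementary.
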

\begin{proof}
We begin by computing the anti-concentration value of random $n$-qubit matchgates, i.e.,
\begin{equation}
\chi = 2^{2n}\int_{SO(2n)} dO|\bra{0}U(O)\ket{0}|^4.
\end{equation}
This is a second moment quantity, easily computed using the 
orthonormal basis $\{V_k\}$ described above. Concretely, from the JW isomorphism it is easy to see that $\dens{0}\tn{2}$ is trace orthogonal to $V_k$ with $k$ odd and that for even $k$ we have
\begin{equation}
\tr\big(\dens{0}\tn{2} V_k\big)^2  = 2^{-2n}\binom{n}{k/2}^2 \binom{2n}{k}^{-1}.
\end{equation}
This follows by noting that through JW, only Majorana strings composed of $\gamma_{2i-1}\gamma_{2i}$ products (i.e. $Z_i$ operators) have non-zero expectation value on $\dens{0}$.
It’s not hard to see (using standard bounds for the binomial) that
\begin{equation}
C_1 \sqrt{n /((2n-k)k)} \leq \binom{n}{k}^2 \binom{2n}{k}^{-1} \leq C_2 \sqrt{n /((2n-k)k)},
\end{equation}
for $k\in [2:2n-2]$, for some constants $C_1,C_2$. From this we immediately obtain that (for some constant $C$):
\begin{equation}
\chi  =2^{2n} \sum_{k\in  [0:2:2n]} \tr\big(\dens{0}\tn{2} V_k\big)^2 = 2+ \sum_{k\in  [2:2:2n-2]} C_2\sqrt{n /((2n-k)k)} \leq C\sqrt{n}.
\end{equation}
We will now compute the same quantity for a $1$D superblock circuit of random matchgates. We divide $n$ qubits into $m$ equal patches of size $n/m = \xi$, and have random matchgates $U(O)_{(i)}$ act on the patches $(i,i+1)$ (assuming periodic boundary conditions for convenience). Using the embedded Majorana notation defined above  we can write
\begin{equation}
\int_{O_i} I\otimes U(O_{i})\tn{2} \dens{0}\tn{2}{U(O_{i})\tn{2}}\ct \otimes I = \sum_{k = 0,\;k \mathrm{ even}}^{4\xi}\frac{2^{-\xi}\binom{2\xi}{k/2}}{\binom{4\xi}{k}^{1/2}} I\otimes V^{(\xi i,\xi (i+1))}_k\otimes I.
\end{equation}
With this we can partially evaluate the anti-concentration value as
\begin{align}
\chi &= 2^{2n} \int_{O_{1},\ldots O_{n}\in O(4\xi)} |\bra{0}\big(\bigotimes_{i = 1}^{m/2}U(O_{2i-1})\bigotimes_{i = 1}^{m/2}U(O_{2i})\ket{0}|^4 \\
&= 2^{2n} \sum_{\substack{k_1, \ldots k_{m/2} \in[0:2:4\xi]\\ l_1, \ldots l_{m/2} \in[0:2:4\xi] }} \prod_{i=1}^{m/2} \frac{2^{-2\xi} \binom{2\xi}{k_i/2}}{\binom{4\xi}{k_i}}\tr\big( (V^{(1,2\xi)}_{k_1}\otimes \ldots \otimes V^{(n-2\xi+1,n)}_{k_{m/2}})\\&\hspace{10em}\times (V^{(\xi+1,2\xi)}_{l_1}\otimes \ldots \otimes V^{(n-\xi+1,\xi)}_{l_{m/2}})\big)\prod_{i=1}^{m/2} \frac{2^{-2\xi} \binom{2\xi}{l_i/2}}{\binom{4\xi}{l_i}}.
\end{align}
This is a rather complicated equation, but because of its $1\mathrm{D}$ nature and periodic boundary conditions we can write it as the trace of an $m/2$-fold power of a matrix:
\begin{equation}
\chi_\xi = 2^{2n}\tr(M^{m/2}),
\end{equation}
where (with $0/0 = 0$ for convenience):
\begin{align}
M_{vw}  &= \sum_{\substack{k\in [0:2:4\xi]\\l\in [0:2:4\xi]}}\frac{2^{-2\xi}\binom{2\xi}{k/2}}{\binom{4\xi}{k}}\frac{2^{-2\xi}\binom{2\xi}{l/2}}{\binom{4\xi}{l}}\delta_{k-v,l-w} \tr\big((V_v^{(1,\xi)}\otimes V_{k}^{(\xi+1,3\xi)}) (V_{l}^{(1,2\xi)}\otimes V_w^{(2\xi+1,3\xi)})\big)\\
& = \sum_{k\in [0:2:4\xi]}2^{-4\xi}\frac{\binom{2\xi}{k/2}}{\binom{4\xi}{k}}\frac{\binom{2\xi}{(k-v+w)/2}}{\binom{4\xi}{k-v+w}}\frac{\binom{2\xi}{k-v}}{\binom{2\xi}{v}^{1/2}\binom{2\xi}{w}^{1/2}}.
\end{align}
It remains to provide a lower bound for the largest eigenvalue of this matrix. As it is a point-wise positive matrix, any element on the diagonal provides a lower bound for the largest eigenvalue. In particular we can calculate the entry $M_{00}$
\begin{equation}
M_{00} = 2^{-4\xi}\sum_{k\in [0:2:2\xi]}\frac{\binom{2\xi}{k/2} \binom{2\xi}{k/2}}{\binom{4\xi}{k}\binom{4\xi}{k}}\frac{\binom{2\xi}{k}}{\binom{2\xi}{k}}\geq  2^{-4\xi}\bigg( 1 + \binom{2\xi}{1}^2 \binom{4\xi}{2}^{-2}\bigg) = 2^{-4\xi} \big(1 + 2^{-7}\xi^{-2}\big),
\end{equation}
where we evaluated only the first two terms in the sum (this is a lower bound as all entries are non-negative). It follows that 
\begin{equation}
\chi_{\xi} \geq 2^{2n} 2^{-m\xi/2}\big(1 + 2^{-7}\xi\big)^{m/2} = \big(1 + 2^{-7}\xi^{-2}\big)^{n/2\xi}.
\end{equation}
In order to obtain anti-concentration, we require
\begin{equation}
\big(1 + 2^{-7}\xi^{-2}\big)^{n/2\xi} \leq (1+\epsilon)C \sqrt{n},
\end{equation}
which directly leads to $\xi =\Omega(n^{1/3}/\log(n) )$ , which in turn implies the same scaling for the circuit depth.
\end{proof}

\subsection{Matchgate state designs}
We now prove lower bounds for matchgate state designs. We note that due to the explicit one dimensional nature of the matchgate group these results are simpler to state and prove than the analogous statements for the other groups we have discussed. This is so because limiting the depth of a $1\mathrm{D}$ circuit implies the existence of lightcone that does not depend on the internal structure of the circuit. Hence we have no need for extra assumptions to make our proofs work. 
\begin{theorem}\label{thm: matchgate state designs}
Let $\mathcal{E}$ be an ensemble of matchgate circuits of depth $d$. The following now holds:
\begin{enumerate}
\item If $\Phi_{\mathcal{E}}(\dens{0}\tn{2})$ is an additive-error (with $\epsilon<1/(2n-1) $) state design, then $d =\Omega(n)$
\item If $\Phi_{\mathcal{E}}(\dens{0}\tn{2})$ is a relative-error (with $\epsilon<1/(4n-2) $) state design, then $d =\Omega(n)$.
\end{enumerate}
\end{theorem}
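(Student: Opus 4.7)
My plan exploits the strict $1\mathrm{D}$ locality of matchgate circuits at the level of Majorana modes. A depth-$d$ matchgate circuit $U(O)$ is built from two-qubit matchgate gates that couple only adjacent Majoranas, so the associated orthogonal matrix $O\in\O(2n)$ has bandwidth at most $2d+\mathcal{O}(1)$. In particular, for $d<n/2-\mathcal{O}(1)$ and every $U\in\mathcal{E}$, the Heisenberg-evolved operators $U\ct\gamma_1 U$ and $U\ct\gamma_{2n} U$ are supported on disjoint qubit ranges (the first on $[1,1+d]$, the second on $[n-d,n]$). Since each is a real linear combination of odd-parity Majoranas with zero vacuum expectation, the two-Majorana expectation factorizes and vanishes identically:
\begin{equation*}
\bra{0^n}U\ct\gamma_1\gamma_{2n}U\ket{0^n} = \bra{0^n}U\ct\gamma_1 U\ket{0^n}\,\bra{0^n}U\ct\gamma_{2n} U\ket{0^n} = 0.
\end{equation*}

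Next I build a positive-semidefinite distinguishing observable from the Hermitian operator $\tilde\Omega \coloneqq i\gamma_1\gamma_{2n}$, which satisfies $\tilde\Omega^2 = I$ and equals $Y_1 Z_2 \cdots Z_{n-1} Y_n$ under Jordan--Wigner. The projector $\Pi_+ \coloneqq \tfrac12(I+\tilde\Omega)$ onto its $+1$ eigenspace is psd, and for the two-copy observable $\Pi \coloneqq \Pi_+\tn{2}$ the factorization above gives $\bra{0^n}U\ct\Pi_+ U\ket{0^n} = \tfrac12$, hence $\Tr[\Pi\,\Phi_{\mathcal E}(\dens{0^n}\tn{2})] = \tfrac14$. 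For the Haar matchgate value I use the commutant basis $\{V_k\}$ together with the expansion $\dens{0^n}\tn{2} = \sum_{k\text{ even}} c_k V_k$ with $c_k = (-1)^{k/2}\binom{2n}{k}^{-1/2}\binom{n}{k/2}2^{-n}$ (the sign arising from $\gamma_{2\ell-1}\gamma_{2\ell}=iZ_\ell$). Only $V_0$ and $V_2$ overlap $\Pi$, yielding $\E_U\langle\tilde\Omega\rangle_U = 0$ and $\E_U\langle\tilde\Omega\rangle_U^2 = 1/(2n-1)$, so
\begin{equation*}
\Tr\!\bigl[\Pi\,\Phi_{\M_H}(\dens{0^n}\tn{2})\bigr] = \tfrac14\bigl(1 + 1/(2n-1)\bigr) = \tfrac14 + \tfrac{1}{4(2n-1)}.
\end{equation*}

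The contradiction follows directly. For the relative-error case, the design inequality applied to $\Pi$ forces $(1-\epsilon)(1/4 + 1/(4(2n-1))) \leq 1/4$, which rearranges to $\epsilon \geq 1/(2n) > 1/(4n-2)$ for $n>1$, violating the hypothesis. For the additive-error case, I instead use the (non-psd) unit-norm observable $\tilde\Omega\tn{2}$: its Haar value is $1/(2n-1)$ and its ensemble value is $0$, forcing $\epsilon \geq 1/(2n-1)$, again a contradiction. Either way $d \geq n/2 - \mathcal{O}(1) = \Omega(n)$, as claimed. The main obstacle is identifying the right observable: once one recognizes that a long-range Majorana bilinear $\gamma_1\gamma_{2n}$ is simultaneously (i)~forced to vanish identically on any bandwidth-$2d$ orthogonal matrix, and (ii)~detected by its single $V_2$ commutant overlap under the Haar matchgate twirl with value $1/(2n-1)$, the rest of the argument reduces to an elementary commutant calculation paralleling the one used for the anti-concentration result.
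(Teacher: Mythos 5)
Your proposal is correct and follows essentially the same route as the paper: the same long-range Majorana bilinear $\gamma_1\gamma_{2n}$, the same lightcone argument forcing its expectation to vanish identically for depth-$o(n)$ matchgate circuits, and the same commutant computation giving the Haar value $1/(2n-1)$. The only genuine difference is in the relative-error case, which you handle directly by testing against the psd projector $\Pi_+^{\otimes 2}$ rather than using the relative-to-additive conversion as the paper does; this is fine and even yields the marginally stronger threshold $\epsilon\geq 1/(2n)$. One inaccuracy in your justification should be fixed: $U^\dagger\gamma_1 U$ and $U^\dagger\gamma_{2n}U$ are \emph{not} supported on disjoint qubit ranges, since the Jordan--Wigner $Z$-strings of high-index Majoranas extend all the way back to qubit $1$, so the factorization $\bra{0}AB\ket{0}=\bra{0}A\ket{0}\bra{0}B\ket{0}$ cannot be invoked as stated. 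The vanishing instead follows, as in the paper, from the fact that the two Heisenberg-evolved operators are linear combinations of single Majoranas with disjoint, non-adjacent index sets, and $\bra{0}\gamma_i\gamma_j\ket{0}=0$ unless $i=j$ or $\{i,j\}=\{2\ell-1,2\ell\}$ forms a Jordan--Wigner pair on a single qubit; with that wording corrected, your argument is sound.
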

\begin{proof}
Consider the Majorana operators $\gamma_1, \gamma_{2n}$. For uniformly random matchgates we have (using the definition of $V_k$):
\begin{equation}
\int_{O(2n)} dO |\bra{0}U(O) \gamma_1 \gamma_{2n} U(O)\ct\ket{0}|^2 = \frac{1}{2n-1}.
\end{equation}
On the other hand, if $U(O)$ is a matchgate circuit of depth $d\leq n/4$, then clearly
\begin{equation}
|\bra{0}U(O) \gamma_1 \gamma_{2n} U(O)\ct\ket{0}|^2  =|\bra{0} \sum_{i=1}^{n/2} \alpha_i \gamma_i \sum_{j=3n/2}^{2n} \beta_j \gamma_j\ket{0}|^2 = 0,
\end{equation}
since $\bra{0}\gamma_i\gamma_j\ket{0} = 0$ unless $i=j$ or $i$ is odd and $j= i+1$. 
Now since the operator $(\gamma_1\gamma_{2n})\tn{2}$ is hermitian and of operator norm $1$, one clearly requires $d\geq n/4$ to obtain any additive error state design of error smaller than $1/(2n-1)$.  Since the existence of relative-error state designs with error $\epsilon$ imply the existence of additive-error state designs with error $2\epsilon$ the same conclusion holds for relative-error designs.
\end{proof}

\addcontentsline{toc}{section}{References}

\printbibliography

\end{document}